\documentclass[sigconf]{acmart}
\AtBeginDocument{%
  }

\setcopyright{acmlicensed}
\copyrightyear{2026}
\acmYear{2026}
\acmDOI{XXXXXXX.XXXXXXX}
\acmConference[Conference acronym 'XX]{Make sure to enter the correct
  conference title from your rights confirmation email}{June 03--05,
  2018}{Woodstock, NY}
\acmISBN{978-1-4503-XXXX-X/2018/06}

\usepackage{tikz}
\usepackage{cleveref}
\usetikzlibrary{shapes.geometric, positioning, fit, arrows}
\usepackage{todonotes}
\newcommand{\pwline}[1]{\todo[author=Przemek,inline,backgroundcolor=green!20]{#1}}

\usepackage{thm-restate}
\declaretheorem[name=Theorem]{thm}
\newtheorem{theorem}[thm]{Theorem}

\newtheorem*{example*}{Example}
\newtheorem{definition}[thm]{Definition}
\newtheorem{lemma}[thm]{Lemma}
\newtheorem{proposition}[thm]{Proposition}
\newtheorem{corollary}[thm]{Corollary}
\newtheorem*{claim*}{Claim}
\newtheorem*{theorem*}{Theorem}

\providecommand{\lBrace}{\mathopen{\lbrace\mkern-3mu\vert}}
\providecommand{\rBrace}{\mathclose{\vert\mkern-3mu\rbrace}}

\newcommand{\true}{\ensuremath{\mathsf{true}}}
\newcommand{\false}{\ensuremath{\mathsf{false}}}
\newcommand{\prop}{\ensuremath{\mathsf{PROP}}}

\newcommand{\M}{\ensuremath{\mathfrak{M}}}
\newcommand{\N}{\ensuremath{\mathfrak{N}}}

\newcommand{\FO}{\ensuremath{\text{FO}}}

\newcommand{\ML}{\ensuremath{\mathcal{ML}}}

\newcommand{\EML}{\ensuremath{\exists\mathcal{ML}}}
\newcommand{\EPML}{\ensuremath{\exists^+\mathcal{ML}}}

\newcommand{\GML}{\ensuremath{\mathcal{GML}}}
\newcommand{\EGML}{\ensuremath{\exists\mathcal{GML}}}
\newcommand{\EPGML}{\ensuremath{\exists^+\mathcal{GML}}}

\newcommand{\ALCQ}{\ensuremath{\mathcal{ALCQ}}}
\newcommand{\ELUQ}{\ensuremath{\mathcal{ELUQ}}}

\newcommand{\G}{\ensuremath{G}}

\newcommand{\emb}{\ensuremath{\lambda}}
\newcommand{\unr}{\ensuremath{\mathsf{Unr}}}

\newcommand{\GN}{\ensuremath{\mathcal{N}}}
\newcommand{\C}{\ensuremath{\mathcal{C}}}
\newcommand{\T}{\ensuremath{\mathcal{T}}}
\newcommand{\Lr}[1][]{%
  \ifx#1\empty
    \ensuremath{\mathcal{L}}%
  \else
    \ensuremath{\mathcal{L}^{#1}}%
  \fi
}   
\newcommand{\comb}{\ensuremath{\mathsf{comb}}}
\newcommand{\agg}{\ensuremath{\mathsf{agg}}}

\newcommand{\cls}{\ensuremath{\mathsf{cls}}}

\newcommand{\Tmin}{\ensuremath{\mathcal{T}_{\mathsf{min}}}}

\begin{document}

%%
%% The "title" command has an optional parameter,
%% allowing the author to define a "short title" to be used in page headers.
\title{Preservation Theorems for Unravelling-Invariant Classes: A Uniform Approach for Modal Logics and Graph Neural Networks}

\renewcommand{\shorttitle}{Preservation Theorems for Unravelling-Invariant Classes}

%%
%% The "author" command and its associated commands are used to define
%% the authors and their affiliations.
%% Of note is the shared affiliation of the first two authors, and the
%% "authornote" and "authornotemark" commands
%% used to denote shared contribution to the research.

% \author{Anonymous}
% \affiliation{%
%   \institution{Anonymous}
%   \city{Anonymous}
%   \country{Anonymous}}
% \email{Anonymous}

\author{Przemys\l aw Andrzej Wa\l \k{e}ga}
\affiliation{%
  \institution{Queen Mary University of London}
  \city{London}
  \country{UK}}
\email{p.walega@qmul.ac.uk}

\author{Bernardo Cuenca Grau}
\affiliation{%
  \institution{University of Oxford}
  \city{Oxford}
  \country{UK}}
\email{bernardo.grau@cs.ox.ac.uk}

%%
%% By default, the full list of authors will be used in the page
%% headers. Often, this list is too long, and will overlap
%% other information printed in the page headers. This command allows
%% the author to define a more concise list
%% of authors' names for this purpose.
%\renewcommand{\shortauthors}{Anonymous}

\begin{abstract}
We study preservation theorems for modal logics over finite structures with respect to three fundamental semantic relations: embeddings, injective homomorphisms, and homomorphisms. We focus on classes of pointed Kripke models that are invariant under bounded unravellings, a natural locality condition satisfied by modal logics and by graph neural networks (GNNs). We show that preservation under embeddings coincides with definability in existential graded modal logic; preservation under injective homomorphisms with definability in existential positive graded modal logic; and preservation under homomorphisms with definability in existential positive modal logic. A key technical contribution is a structural well-quasi-ordering result. We prove that the embedding relation on classes of  tree-shaped models of uniformly bounded height forms a well-quasi-order, and that the bounded-height assumption is essential. This well-quasi-ordering yields a finite minimal-tree argument leading to explicit syntactic characterisations via finite disjunctions of (graded) modal formulae. 

As an application, we derive consequences for the expressive power of GNNs. Using our preservation theorem for injective homomorphisms, we obtain a new logical characterisation of monotonic GNNs, showing that they capture exactly existential-positive graded modal logic, while monotonic GNNs with MAX aggregation correspond precisely to existential-positive modal logic.
\end{abstract}

%%
%% The code below is generated by the tool at http://dl.acm.org/ccs.cfm.
%% Please copy and paste the code instead of the example below.
%%
% \begin{CCSXML}
% <ccs2012>
%  <concept>
%   <concept_id>00000000.0000000.0000000</concept_id>
%   <concept_desc>Do Not Use This Code, Generate the Correct Terms for Your Paper</concept_desc>
%   <concept_significance>500</concept_significance>
%  </concept>
%  <concept>
%   <concept_id>00000000.00000000.00000000</concept_id>
%   <concept_desc>Do Not Use This Code, Generate the Correct Terms for Your Paper</concept_desc>
%   <concept_significance>300</concept_significance>
%  </concept>
%  <concept>
%   <concept_id>00000000.00000000.00000000</concept_id>
%   <concept_desc>Do Not Use This Code, Generate the Correct Terms for Your Paper</concept_desc>
%   <concept_significance>100</concept_significance>
%  </concept>
%  <concept>
%   <concept_id>00000000.00000000.00000000</concept_id>
%   <concept_desc>Do Not Use This Code, Generate the Correct Terms for Your Paper</concept_desc>
%   <concept_significance>100</concept_significance>
%  </concept>
% </ccs2012>
% \end{CCSXML}

% \ccsdesc[500]{Do Not Use This Code~Generate the Correct Terms for Your Paper}
% \ccsdesc[300]{Do Not Use This Code~Generate the Correct Terms for Your Paper}
% \ccsdesc{Do Not Use This Code~Generate the Correct Terms for Your Paper}
% \ccsdesc[100]{Do Not Use This Code~Generate the Correct Terms for Your Paper}

\begin{CCSXML}
<ccs2012>
   <concept>
       <concept_id>10003752.10003790.10003799</concept_id>
       <concept_desc>Theory of computation~Finite Model Theory</concept_desc>
       <concept_significance>500</concept_significance>
       </concept>
   <concept>
       <concept_id>10003752.10003790.10003793</concept_id>
       <concept_desc>Theory of computation~Modal and temporal logics</concept_desc>
       <concept_significance>500</concept_significance>
       </concept>
 </ccs2012>
\end{CCSXML}

\ccsdesc[500]{Theory of computation~Finite Model Theory}
\ccsdesc[500]{Theory of computation~Modal and temporal logics}

%%
%% Keywords. The author(s) should pick words that accurately describe
%% the work being presented. Separate the keywords with commas.
\keywords{Preservation Theorems, Finite Model Theory, Modal Logic, Expressive Power of Graph Neural Networks}

\received{20 February 2007}
\received[revised]{12 March 2009}
\received[accepted]{5 June 2009}

\maketitle

\section{Introduction}

Preservation theorems are a central theme in model theory
and theoretical computer science. They characterise
closure properties of classes of structures---such as preservation under embeddings or homomorphisms---with definability in specific logical fragments.
Łoś-Tarski, Lyndon, and Homomorphism
Preservation theorems are three classical examples \cite{los1955extending,lyndon1959properties,DBLP:books/daglib/0030198,tarski1954contributions}. Each  states that
a particular fragment of first-order logic---existential,
positive, and existential-positive, respectively---is, up to logical equivalence, exactly as expressive as the class of first-order formulae whose models are preserved under a corresponding relation between structures: embeddings, surjective homomorphisms, and homomorphisms. 

While these classical preservation theorems hold over arbitrary structures, many of them fail when restricted to finite structures. 
Standard proofs rely on compactness and other infinitary techniques
from classical model theory \cite{tent2012course}, which are unavailable in the finite setting \cite{DBLP:books/sp/Libkin04,DBLP:journals/bsl/Rosen02,DBLP:series/txtcs/GradelKLMSVVW07}. Only a limited number of classical tools, most notably  Ehrenfeucht-Fraïssé games, are applicable also to  finite model theory,  but they are insufficient to
bridge the gap left by the failure of compactness. As a consequence, important preservation theorems such as  Łoś-Tarski and Lyndon are known to fail over finite structures  \cite{DBLP:journals/bsl/Rosen02, DBLP:books/sp/Libkin04}. By contrast, in a breakthrough result, Rossman  showed that  the Homomorphism Preservation Theorem does remain valid over finite structures \cite{DBLP:journals/jacm/Rossman08}.

Preservation theorems have also been extensively investigated in the setting of modal logics. Unlike first-order logic, standard modal logics are
inherently local and enjoy the tree-model property \cite{DBLP:journals/jphil/AndrekaNB98,DBLP:books/el/07/BBW2007,DBLP:conf/wollic/ArecesHD10}. 
Modal semantics is invariant under bisimulation and admits canonical tree representations via unravelling.
Van Benthem's classical characterisation theorem shows that, over arbitrary structures, first-order properties invariant under bisimulation are exactly those definable in the basic modal logic $\ML$. Rosen  subsequently showed that this correspondence continues to hold over finite structures, establishing that bisimulation invariance suffices for modal definability even in the absence of compactness \cite{DBLP:journals/jolli/Rosen97,DBLP:journals/bsl/Rosen02}.

%\begin{table*}[t] 
%\centering
%\setlength{\tabcolsep}{4pt}
%\caption{New preservation theorems for (graded) modal logic, and known results which follow from our results
%}
%\label{results}
%\begin{tabular}{l|l|l|l}
%\toprule
% & \multicolumn{3}{c}{Preserved under:} 
% \\
%\cmidrule(lr){2-4}
%& embeddings 
% & injective homomorphisms 
% & homomorphisms 
% \\
%\midrule
%\GML & \EGML{} (\Cref{embeddingGML})& \EPGML{} (\Cref{injectiveGML})   & \EPML{} \cite{DBLP:journals/apal/AbramskyR24} 
%\\
%\ML  & \EML{} %(\Cref{embeddingML}) 
%\cite{DBLP:journals/jolli/Rosen97} & \EPML{} (\Cref{injhomML}) & \EPML{} \cite{DBLP:journals/apal/AbramskyR24}  
%\\
%\bottomrule
%\end{tabular}
%\end{table*}

Furthermore,  \citet{DBLP:journals/jolli/Rosen97} proved a finite modal analogue to  Łoś-Tarski's preservation theorem
 showing that, over finite pointed models, $\ML$-formulae
 preserved under embeddings are definable in the existential fragment of $\ML$.  More recently, 
\citet{DBLP:journals/apal/AbramskyR24}  developed 
a unifying framework for homomorphism preservation theorems exploiting 
category-theoretic methods and game comonads \cite{DBLP:journals/fuin/Abramsky22,DBLP:conf/lics/AbramskyDW17}. In particular, they show that for $\ML$, as well as for graded modal logic $\GML$,  preservation under homomorphisms is equivalent to definability in the existential-positive fragment, with no increase in modal depth. Their results apply  to finite and infinite structures without reliance on compactness.

In this paper, we study  modal preservation theorems from a
finite model-theoretic perspective. We work with classes of finite pointed Kripke models that are invariant under bounded unravellings, a natural locality condition in modal logic. Within this setting, we establish preservation theorems for embeddings, injective homomorphisms, and homomorphisms, obtaining exact characterisations in terms of existential and existential-positive fragments of $\GML$ and $\ML$. Our results complement Rosen's finite preservation theorems and the categorical framework of Abramsky and Reggio by providing  syntactic characterisations. In particular, we obtain new preservation theorems for embeddings and injective homomorphisms  in the setting of $\GML$. Our results imply also the results over finite models by  Rosen as well as Abramsky and Reggio. 

An important technical ingredient underlying our approach is a structural result concerning well-quasi-orders (wqos). In \Cref{sec:wqo} we show that the embedding relation on classes of tree-shaped models of  uniformly bounded height forms a well-quasi-order (\Cref{embedding_wqo}), and that this bounded-height assumption is essential: without it, none of the considered relations is guaranteed to yield a wqo (\Cref{unbounded_bad}). Our use of wqos is closely related in spirit to the framework 
developed by  \citet{DBLP:phd/hal/Lopez23a}, which studies finite preservation theorems via minimal models, wqos and Noetherian topologies. 
The two approaches, however, address different settings: Lopez's results rely on hereditary classes of finite structures, an assumption that fails for unravelling-invariant classes of Kripke models (see \Cref{non_closed}), and hence require different techniques.

In \Cref{sec:pres}, building on our  results for wqos, we establish general preservation theorems for classes $\C$ of finite pointed Kripke models invariant under bounded unravellings. From any such class $\C$ preserved under one of the semantic relations discussed above, we extract a finite set of minimal tree-shaped models. Each such minimal tree is definable by a modal formula of bounded depth and the finite disjunction of these formulae  yields a defining formula for $\C$.  
Concretely, we show that 
$\C$ is preserved under embeddings if and only if it is definable in existential  \GML{} (\Cref{th:preserve}); that $\C$ is preserved under injective homomorphisms if and only if it is definable in existential-positive \GML{} 
(\Cref{th:exists-positive-unravelling}); and that  $\C$ is preserved under  homomorphisms if and only if it is definable in existential-positive \ML{} (\Cref{th:exists-positive-unravelling-ML}). 
From these results, we derive corresponding preservation theorems for \GML{} and \ML{} formulae, summarised in \Cref{results}.

As we show in \Cref{sec:gnn}, beyond their intrinsic interest in logic, our preservation theorems have direct implications for the expressive power of graph neural networks (GNNs) \cite{DBLP:conf/icml/GilmerSRVD17}---machine learning models for graph data---by enforcing structural invariants such as node permutation invariance (i.e.\ invariance under graph isomorphism) \cite{DBLP:series/synthesis/2020Hamilton,DBLP:journals/corr/abs-2104-13478,DBLP:journals/corr/abs-2507-18145}.
In their most common form, GNNs are message-passing architectures that compute node representations by iteratively aggregating information from neighbouring nodes. Their expressive power---the classes of graph properties they can represent---is thus tied to logical notions of locality and unravelling \cite{DBLP:conf/iclr/BarceloKM0RS20}. A growing body of work has established  correspondences between classes of GNN architectures and logic, including basic and graded modal logic, two-variable fragments, and Datalog \cite{DBLP:conf/iclr/BarceloKM0RS20,grohe2024descriptive,DBLP:journals/corr/abs-2508-06091,DBLP:conf/kr/CucalaGMK23,DBLP:conf/kr/CucalaG24,DBLP:conf/ijcai/NunnSST24,benedikt_et_al:LIPIcs.ICALP.2024.127,tena2025expressive,boundedGNNs,DBLP:conf/nips/AhvonenHKL24}. In particular, monotonic GNNs---architectures in which  all  activation and aggregation functions are monotonic and matrix weights are  non-negative---are representable by non-recursive tree-shaped Datalog programs with inequalities, whereas
monotonic GNNs using only MAX aggregation  can be represented without inequalities \cite{DBLP:conf/kr/CucalaGMK23,tena2025expressive}.

Our results yield a new logical characterisation of monotonic GNNs. Using the preservation theorem for injective homomorphisms, we show that monotonic GNNs have exactly the expressive power of existential-positive \GML{}  whereas monotonic-MAX GNNs have the expressive power of existential-positive \ML{}. These results provide a uniform account grounded in modal preservation theory.
More broadly, our approach illustrates how  tools from finite model theory, such as unravelling-invariance and wqos, can be fruitfully applied to questions in the theoretical analysis of machine learning models.

\begin{table}[t]
\centering
\small
\setlength{\tabcolsep}{4pt}
\caption{Preservation results for (graded) modal logic.}
\label{results}
\begin{tabular}{l l l l}
\toprule
 & Embeddings & Injective homomor. & Homomor. \\
\midrule
\GML & \EGML (\Cref{embeddingGML}) & \EPGML (\Cref{injectiveGML}) & \EPML \cite{DBLP:journals/apal/AbramskyR24} \\
\ML  & \EML \cite{DBLP:journals/jolli/Rosen97} & \EPML (\Cref{injhomML}) & \EPML \cite{DBLP:journals/apal/AbramskyR24} \\
\bottomrule
\end{tabular}
\end{table}

%%%%%%%%%%%%%%%%%%%
\section{Preliminaries}
%%%%%%%%%%%%%%%%%%%%%

\noindent\textbf{Kripke Models.}
All models considered in this paper are  finite and given in the modal vocabulary over a finite set
$\prop$ of propositional symbols.
Equivalently, they can be viewed as finite first-order $\sigma$-structures, where the vocabulary
$\sigma$ contains a single binary relation symbol $R$
and, for each proposition $p_i \in \prop$, a unary predicate symbol $P_i$.

We  use standard Kripke semantics \cite{DBLP:books/el/07/BBW2007}: a model $\M$ is a triple $(W,R,V)$, where $W$ is a non-empty finite  set of \emph{worlds}, $R \subseteq W \times W$ is the \emph{accessibility relation}, and $V$ is a \emph{valuation function} mapping each $p_i \in \prop$ to a subset of $W$. 
A \emph{pointed model} is a pair $(\M,w)$ with $w \in W$.
A pointed model $(\M,w)$  is \emph{tree-shaped}, 
if  for every world $v \in W$ there is a unique directed path from $w$ to $v$; we call $w$ the \emph{root}.
The height of a tree-shaped model is the length of its longest directed path from the root.

\smallskip
\noindent\textbf{Semantic Relations.}
Let $(\M, w)$ and $(\N, v)$ be pointed models, with
$\M = (W,R,V)$ and $\N = (W',R',V')$. 
We consider functions $f: W \rightarrow W'$ such that $f(w) = v$.

A function $f$ is an \emph{isomorphism} between  $(\M, w)$ and $(\N, v)$ if it is a bijection and, 
for all $u,z \in W$ and each $p_i\in \prop$ it satisfies $u R z$  iff $f(u) R' f(z)$, and $u \in V(p_i)$ iff $f(u) \in V'(p_i)$.

A function $f$ is an \emph{embedding} from $(\M, w)$ to $(\N, v)$ if it is an isomorphism between $(\M, w)$ and the submodel
 of $(\N, v)$  induced by the image $f(W)$.

A function $f$ is a \emph{homomorphism} from $(\M, w)$ to $(\N, v)$ if, for all $u,z \in W$ and each $p_i \in \prop$, $u R z$ implies $f(u) R' f(z)$, and $u \in V(p_i)$ implies $f(u) \in V'(p_i)$. 
 If $f$ is injective, we call it an \emph{injective homomorphism}.

Clearly, every isomorphism is an embedding, every embedding is an 
injective homomorphism, and every injective homomorphism
is a homomorphism.

\smallskip
\noindent\textbf{Unravelling.}
Let $(\M, w)$ be a pointed model with $\M = (W,R,V)$ and let $L \in \mathbb{N}$.
The $L$-\emph{unravelling} of $(\M, w)$, denoted
$\unr^L(\M,w)$, is the pointed model $((W', R', V'), w')$ defined as follows:
\begin{itemize}
\item $W'$ contains a world \((w, v_1, \ldots, v_{\ell})\) for each path $wRv_1$, $v_1Rv_2$, $\ldots$, $v_{\ell-1}Rv_{\ell}$ of length $\ell \leq L$ in $\M$.
\item The distinguished world $w'$ is the world corresponding to the  path $(w)$ of length $0$.
\item We have  \((w, v_1, \ldots, v_{\ell-1})\) $R'$  \((w, v_1, \ldots, v_{\ell})\)  iff $v_{\ell-1}Rv_{\ell}$.
\item For each $p_i \in \prop$, we define
$$V'(p_i) = \{ (w, v_1, \ldots, v_{\ell}) \in W' \mid v_{\ell} \in V(p_i)\}$$.
\end{itemize}
An example of a model and its $3$-unravelling is depicted in \Cref{fig::unravelling}.
An $L$-unravelling is  a tree-shaped model of height at most $L$.

\begin{figure}[ht]
\centering
\begin{tikzpicture}
\tikzset{
>=latex,
node distance=1.0cm,
world/.style={
    draw,
    circle,
    fill,
    minimum size=1.5mm,
    inner sep=0pt,
    scale=1
  }
}

\begin{scope}[local bounding box=M1]
% ----- nodes -----
\node[world, label=right:] (v1) at (0,0) {};
\node[world, below left of=v1, label=right:$p_1$] (v2)  {};
\node[world, below of=v1, label=right:$p_2$] (v3)  {};
\node[world, below right of=v1, label=right:$p_1$] (v4)  {};
% ----- edges -----
\draw[->] (v1) -- (v2);
\draw[->] (v1) -- (v4);
\draw[->] (v1) to[bend left=20] (v3);
\draw[->] (v3) to[bend left=20] (v1);
\end{scope}
% ----- frame around the scope -----
\node[draw, rounded corners, fit=(M1), inner sep=6pt] (frame1) {};
% ----- label above -----
\node[above=2pt of frame1] {$(\M,w)$};

\begin{scope}[xshift=4cm, local bounding box=M2]
% ----- nodes -----
\node[world, label=right:] (v1) at (0,0) {};
\node[world, below left of=v1, label=right:$p_1$] (v2)  {};
\node[world, below of=v1, label=right:$p_2$] (v3)  {};
\node[world, below right of=v1, label=right:$p_1$] (v4)  {};

\node[world, below of=v3, label=right:] (u)  {};

\node[world, below left of=u, label=right:$p_1$] (w2)  {};
\node[world, below of=u, label=right:$p_2$] (w3)  {};
\node[world, below right of=u, label=right:$p_1$] (w4)  {};
% ----- edges -----
\draw[->] (v1) -- (v2);
\draw[->] (v1) -- (v4);
\draw[->] (v1) -- (v3);

\draw[->] (v3) -- (u);

\draw[->] (u) -- (w2);
\draw[->] (u) -- (w4);
\draw[->] (u) -- (w3);
\end{scope}
% ----- frame around the scope -----
\node[draw, rounded corners, fit=(M2), inner sep=6pt] (frame2) {};
% ----- label above -----
\node[above=2pt of frame2] {$\unr^3(\M,w)$};
%\node[above=15pt of frame2] {$3$-unravelling};

\end{tikzpicture}
\caption{A model and its $3$-unravelling}
\label{fig::unravelling}
\end{figure}
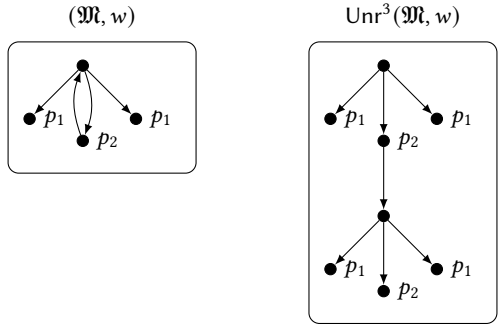

\smallskip
\noindent\textbf{Bisimulation.}
Since we work throughout with bounded unravellings, we introduce a notion of bisimulation parametrised by depth.
Let $(\M,w)$ and $(\N,v)$ be pointed Kripke models, where
$\M = (W,R,V)$ and $\N = (W',R',V')$, and let $L \in \mathbb{N}$. 
An \emph{$L$-bisimulation} between  $(\M,w)$ and $(\N,v)$ is a relation $Z \subseteq W \times W'$ such that
$w Z v$ and, for all $uZu'$ the following conditions hold:
\begin{itemize}
\item for each $p_i \in \prop$, $u \in V(p_i)$ iff  $u' \in V'(p_i)$;
\item if $u R z$ and the
distance from $u$ to the root $w$ is less than $L$, there exists
$z' \in W'$ such that $u' R' z'$ and $z Z z'$. 
\item if $u' R' z'$  and the distance from $u'$ to the root $v$ is less than $L$, there exists
$z \in W$ such that $u R z$ and $z Z z'$.
\end{itemize}
If such a relation exists, $(\M,w)$ and $(\N,v)$ are $L$-bisimilar and we write  $(\M,w) \sim_{L} (\N,v)$.
For each pointed model $(\M,w)$ and $L \in \mathbb{N}$, $(\M,w)$ is $L$-bisimilar to its $L$-unravelling $\unr^L(\M,w)$.

\smallskip
\noindent\textbf{Preservation and Invariance.}
Let  $\C$ be a class of pointed models and $\preceq$  a binary
relation on pointed models.
\begin{itemize}
\item\emph{Preservation.} Class $\C$ is \emph{preserved} under $\preceq$ if $(\M,w) \in \C$ and $(\M,w) \preceq (\N,v)$ imply that $(\N,v) \in \C$.
\item \emph{Invariance.} Class  $\C$ is \emph{invariant} under $\preceq$ if it is preserved under both $\preceq$ and its inverse relation $\preceq^{-1}$.
\end{itemize}
We consider only classes of pointed models that are invariant under isomorphisms, and we freely identify
models with their isomorphism type.  Moreover, we study
classes preserved under embedding, injective homomorphism, and homomorphism.
Each of these notions induces a corresponding relation $\preceq$ on pointed models, where
$(\M,w) \preceq (\N,v)$ holds if there exists a map of the given type from $(\M,w)$ to $(\N,v)$.
Characterising classes preserved under these relations is a central topic in finite model theory.

Regarding invariance, we focus on classes of pointed models that are invariant 
under $L$-unravellings; that is classes $\C$ such that  $(\M,w) \in \C$ if and only if $\unr^L(\M,w) \in \C$.
Classes invariant under unravelling have been extensively studied in the literature.  Crucially for our results, both modal logic formulae and graph neural networks induce only classes of models that are invariant under unravelling, a fact that we exploit throughout the paper.

\smallskip
\noindent\textbf{Well-quasi-orders.}
Our results exploit well-quasi-orders (wqos) and their standard properties, which we recall. Let $A$ be a  set  and let $\preceq$ be a binary relation  on $A$.
The pair $(A,\preceq)$  is a \emph{well-quasi-order}  if it is a quasi-order (that is,  reflexive and transitive) and each infinite sequence $a_1,a_2, \dots$
of elements 
of $A$ contains indices $i < j$ such that $a_i \preceq a_j$.
Equivalently, $\preceq$ is a wqo if it is well-founded and has no infinite antichains.
In particular, in a wqo every subset has  finitely many minimal elements, since an infinite set of minimal elements forms an infinite antichain.

We will use the following basic properties of wqos. If $(A,\preceq)$ is a  wqo, then $(A',\preceq)$, with $A' \subseteq A$ is also a wqo. 
Moreover,
if $\preceq'$ is a quasi-order on $A$ extending $\preceq$, then $(A,\preceq')$ is again a wqo.

Another key property of wqos used in this paper is Higman's Lemma. 
Let $A^{*}$ be the set of all finite sequences  with elements in $A$ and 
let $\preceq^{*}$ be the \emph{subsequence order}, defined by $(a_1,\ldots,a_n) \preceq^{*} (b_1,\ldots,b_m)$ if there exists a strictly increasing map 
$f:\{1,\ldots,n\} \to \{1,\ldots,m\}$ such that $a_i \preceq b_{f(i)}$ for all $i \in \{1,\ldots,n\}$.

\begin{lemma}[Higman's Lemma]\label{higman}
If $(A,\preceq)$ is a wqo, then $(A^{*},\preceq^{*})$ is also a wqo.
\end{lemma}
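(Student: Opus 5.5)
We prove Higman's Lemma by the standard minimal bad sequence argument of Nash-Williams. Call an infinite sequence $s_1, s_2, \dots$ of elements of $A^{*}$ \emph{bad} if there are no indices $i<j$ with $s_i \preceq^{*} s_j$. We suppose, towards a contradiction, that a bad sequence exists, and build a minimal one.

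\textbf{Choosing the minimal bad sequence.} We proceed inductively.
\begin{itemize}
\item Pick $s_1$ of minimal length among all first elements of bad sequences in $A^{*}$.
\item Having fixed $s_1,\dots,s_k$ that extend to some bad sequence, pick $s_{k+1}$ of minimal length such that $s_1,\dots,s_k,s_{k+1}$ still extends to a bad sequence.
\end{itemize}
The resulting infinite sequence $(s_k)_k$ is itself bad, since any witness $i<j$ with $s_i \preceq^{*} s_j$ would already violate badness of some finite prefix that extends to a bad sequence.

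\textbf{Decomposing the sequence.} No $s_k$ is the empty sequence, because the empty sequence embeds into every sequence, which would make $(s_k)_k$ good. So we can write each $s_k = a_k t_k$, where $a_k \in A$ is the first letter and $t_k \in A^{*}$ is the remaining tail.

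\textbf{Finding a monotone subsequence of first letters.} Since $(A,\preceq)$ is a wqo, every infinite sequence in $A$ has an infinite non-decreasing subsequence. This is the standard Ramsey-type consequence: colour pairs $i<j$ according to whether $a_i \preceq a_j$, and use that $\preceq$ is well-founded with no infinite antichains. Hence there are indices $k_1<k_2<\cdots$ with $a_{k_1} \preceq a_{k_2} \preceq \cdots$.

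\textbf{Deriving the contradiction.} Consider the sequence
$$s_1,\dots,s_{k_1-1},\, t_{k_1},\, t_{k_2},\, \dots$$
Its entry at position $k_1$ is $t_{k_1}$, which is strictly shorter than $s_{k_1}$. By the minimal choice of $s_{k_1}$, this sequence cannot be bad, so it contains a good pair. We check the three possible forms of that pair.
\begin{itemize}
\item A pair $s_i \preceq^{*} s_j$ with $i<j<k_1$ is impossible, since $(s_k)_k$ is bad.
\item A pair $s_i \preceq^{*} t_{k_j}$ with $i<k_1 \le k_j$ gives $s_i \preceq^{*} t_{k_j} \preceq^{*} s_{k_j}$, which is again impossible.
\item A pair $t_{k_i} \preceq^{*} t_{k_j}$ with $i<j$ can be extended by matching first letters via $a_{k_i} \preceq a_{k_j}$. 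Prepending this match to the strictly increasing map preserves monotonicity, so $s_{k_i} \preceq^{*} s_{k_j}$, a contradiction.
\end{itemize}
Every case is impossible, so no bad sequence exists. In addition, $\preceq^{*}$ is clearly reflexive, and it is transitive by composing the increasing maps and using transitivity of $\preceq$. Therefore $(A^{*},\preceq^{*})$ is a wqo.

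\textbf{Main obstacle.} The delicate step is the construction of the minimal bad sequence. It requires the inductive choice to be stated precisely, so that each finite prefix extends to a bad sequence and the limit is itself bad. The minimality must then be applied at position $k_1$, using that the prefix $s_1,\dots,s_{k_1-1}$ is exactly the one fixed in the construction. We would also state explicitly the auxiliary fact that wqos admit infinite non-decreasing subsequences, proving it via Ramsey's theorem.
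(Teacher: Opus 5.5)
The paper gives no proof of this lemma: Higman's Lemma is quoted in the preliminaries as a classical fact and used only as a black box in the inductive step of Theorem~\ref{embedding_wqo}. Your Nash-Williams minimal-bad-sequence argument is the standard proof, and it is correct as written, so it makes that step of the paper self-contained.

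The key points all hold.
\begin{itemize}
\item Choosing each $s_{k+1}$ of minimal length among continuations that still extend to a bad sequence makes the limit sequence bad.
\item Minimality is invoked at position $k_1$ against exactly the fixed prefix $s_1,\dots,s_{k_1-1}$; when $k_1=1$ it is the choice of $s_1$.
\item The three cases for a good pair in $s_1,\dots,s_{k_1-1},t_{k_1},t_{k_2},\dots$ are exhaustive. The middle case uses $t_k\preceq^{*}s_k$, witnessed by $r\mapsto r+1$ and reflexivity of $\preceq$. The last case uses the shifted map $f(1)=1$, $f(r+1)=g(r)+1$ together with $a_{k_i}\preceq a_{k_j}$, obtained by transitivity along the chain.
\end{itemize}

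One small simplification concerns the auxiliary fact. The two-colouring you describe already suffices: an infinite homogeneous set on which $a_i\not\preceq a_j$ for all $i<j$ would itself be a bad sequence in $A$, contradicting the paper's definition of a wqo directly. Well-foundedness and the absence of infinite antichains are needed only for the three-colour variant.

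If you prefer to avoid Ramsey's theorem, there is a direct argument. Only finitely many indices $i$ have no $j>i$ with $a_i\preceq a_j$, since infinitely many such indices would form a bad sequence. Beyond the last of them you can pick $k_1<k_2<\cdots$ greedily.
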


\smallskip
\noindent\textbf{Modal Logics.}
Our preservation theorems characterise classes of models in terms of their definability in modal logics.
In particular, we consider \emph{graded modal logic} (\GML), \emph{basic modal logic} (\ML), and several of their fragments.
We briefly recall their syntax and semantics.
Throughout the paper, we fix a finite signature  $\prop = \{p_1,\ldots, p_n\}$ of propositional symbols. 
Formulae of $\GML$ \cite{DBLP:journals/sLogica/Rijke00,DBLP:journals/logcom/Tobies01,DBLP:journals/corr/abs-1910-00039}  over $\prop$ are defined inductively by the grammar
$$
\varphi :=  p \mid \neg \varphi \mid \varphi \land \varphi \mid \varphi \lor \varphi \mid \Diamond^{\geq k} \varphi,
$$
where $p \in \prop$ and, for each $k \geq 1$,  $\Diamond^{\geq k}$
is the $k$-graded modal operator.
The (modal) \emph{depth} of a \GML-formula $\varphi$ is the maximum nesting of graded modal operators in $\varphi$. 

The \emph{existential} fragment of $\GML$, denoted $\EGML$, is obtained  by allowing negation only in front of propositional symbols. The \emph{existential-positive} fragment, denoted $\EPGML$, further restricts formulae to be negation-free.
The logic \ML{} is obtained by restricting the  graded modal operators to $k=1$, and writing $\Diamond$ for $\Diamond^{\geq 1}$.
The fragments \EML{} and \EPML{} of \ML{} are defined analogously.

Formulae are interpreted over Kripke models $\M=(W,R,V)$.
The \emph{satisfaction relation}  is defined inductively as follows:
\begin{align*}
& (\M,  w)  \models  p  && \text{iff} &&   w \in V(p), \text{ where }  p \in \prop
\\
& (\M,  w)  \models  \neg  \varphi  && \text{iff} &&  (\M,  w)   \not \models   \varphi
\\
& (\M,  w)  \models \varphi \land \psi  && \text{iff} && (\M,  w)  \models   \varphi \text{ and } (\M,  w)  \models    \psi
\\
& (\M,  w)  \models \varphi \lor \psi  && \text{iff} && (\M,  w)  \models   \varphi \text{ or } (\M,  w)  \models    \psi
\\
& (\M,  w)  \models \Diamond^{\geq k}  \varphi && \text{iff} && \text{there exist at least $k$ worlds } v \in W 
\\
& && && \text{ such that }  wRv \text{ and }  (\M,  v)  \models   \varphi
\end{align*}
Given pointed models $(\M,w)$ and $(\N,v)$, a modal logic $\mathcal{L}$, and $\ell \geq 0$,  we write $(\M,w) \equiv_{\ell}^{\mathcal{L}}(\N,v)$ if both pointed models satisfy exactly the same $\mathcal{L}$-formulae of depth at most $\ell$.
For $\ML$, it holds that $(\M,w) \equiv_{\ell}^{\ML}(\N,v)$ if and only if  $(\M,w) \sim_{\ell}(\N,v)$, assuming
models are finite.
It is also well known that for each  $\GML{}$-formula $\varphi$ of depth $L$, and each pointed model $(\M,w)$, we have $(\M,w) \models \varphi$ if and only if $\unr^L(\M,w) \models \varphi$.
That is, the class of pointed models of $\varphi$ is  invariant under $L$-unravelling. 
A class $\mathcal{C}$ of pointed models is \emph{definable} in  a modal logic $\mathcal{L}$ if there exists an $\mathcal{L}$-formula $\varphi$ such that
$(\M,v) \models \varphi$ if and only if $(\M,v)  \in \mathcal{C}$.

%%%%%%%%%%
\section{Well-quasi-orders of Tree-Shaped Models of Bounded Height}\label{sec:wqo}
%%%%%%%%%%%

In this section we establish  structural results that form the technical core of the paper. We study the embedding, injective homomorphism, and homomorphism relations on classes of tree-shaped pointed Kripke models, with particular attention to the role played by bounded height. 
Throughout the paper, we work in finite model theory; accordingly, all models considered are finite.

We begin by highlighting the general strategy underlying our preservation results. Let $\preceq$ denote one of the embedding, injective homomorphism, and homomorphism relations.
Given a class $\C$  of pointed models preserved under $\preceq$,  we
fix a bound $L$ and consider the class $\T$ of  $L$-unravellings
of models in $\C$. 
By construction, $\T$ consists of pointed tree-shaped models of height at most $L$. We then consider the set
$\Tmin \subseteq \T$ of $\preceq$-minimal pointed models in $\T$.

For each pointed model in  $\Tmin$, we construct a modal formula that characterises it. If $\Tmin$ is finite, the disjunction of these formulae defines the class $\C$. Thus, the success of this
approach hinges on the finiteness of $\Tmin$. 

The main goal of this section is to justify this finiteness by establishing a stronger structural property, namely that
$\preceq$ is a well-quasi-order on $\T$. Since each wqo admits only finitely many minimal elements, this  implies that $\Tmin$ is finite. 
We thus proceed by establishing wqo results for the relations $\preceq$ over classes of tree-shaped models, starting with the case where $\preceq$ is the embedding relation.

We begin by showing that preservation under embeddings does not, in general, transfer to $L$-unravellings. In particular, even if a class of pointed models is preserved under embeddings, the corresponding
class of $L$-unravellings does not need to.

\begin{proposition}\label{non_closed}
For any $L > 1$, there exists a class of pointed models 
preserved under embeddings, but whose class of $L$-unravellings is not.
\end{proposition}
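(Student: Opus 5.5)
Given $L>1$, I will exhibit a class $\C$ preserved under embeddings, a model in $\C$, and an embedding of its $L$-unravelling into a tree $\unr^L(\N,v)$ with $(\N,v)\notin\C$. The idea is that embeddings between tree-shaped unravellings can forget cycles present in the original model: a cycle is a global, non-induced-submodel-stable witness.

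I take $\C$ to be the class of pointed models $(\M,w)$ such that $\M$ contains a world $u$ with $uRu$ (a reflexive loop), using empty $\prop$ or ignoring propositions. This class is preserved under embeddings: an embedding $f$ from $(\M,w)$ to $(\N,v)$ preserves edges, so $f(u)Rf(u)$ holds in $\N$, and $(\N,v)\in\C$. Note that $\C$ is not required to be unravelling invariant, only preserved under embeddings. The class of $L$-unravellings of models in $\C$ is
\[
\T=\{\unr^L(\M,w) \mid (\M,w)\in\C\}.
\]

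Take $(\M,w)$ to be a single world $w$ with $wRw$. Then $\unr^L(\M,w)$ is a directed path $w_0Rw_1R\cdots Rw_L$ of length $L$, which lies in $\T$. Now take $(\N,v)$ to be the directed path of length $L+1$, $v_0Rv_1R\cdots Rv_{L+1}$. Its $L$-unravelling is the path of length $L$, which is isomorphic to $\unr^L(\M,w)$, and the wrong direction could make it land in $\T$. So instead I choose $(\N,v)$ to be a path of length $L+1$, with $\unr^L(\N,v)$ a path of length $L$, but I need a \emph{tree} that is not in $\T$. I therefore pick the target to be a longer-height tree: the path $P_{L}$ embeds (mapping root to root) into a tree $T$ that is a path of length $L$ with an extra leaf attached to the root, so $v_0Rv_1\cdots Rv_L$ and $v_0Rv'$. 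This $T$ has height $L$, so $T=\unr^L(T,v_0)$. The identity on the path is an embedding, because induced edges match.

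The main check is that $T\notin\T$, i.e.\ no model $(\M',w')$ with a reflexive loop has $\unr^L(\M',w')\cong T$. The leaf $v'$ at depth $1<L$ has no successors, so the corresponding world in $\M'$ has no successors. More importantly, any reflexive world $u$ reachable from $w'$ in $\M'$ within distance $d<L$ would produce, in the unravelling, a branching at depth $d+1$: the node for $u$ would have a child for $u$ itself. A node at depth $\le L-1$ must also have children via its loop. Here $T$ is too thin to force a contradiction if $u$ is the unique successor at each level. A reflexive loop at $w'$ forces $w'$'s successors to include $w'$ itself, and then depth-1 node $(w',w')$ has the same successor set as the root, namely two children. $T$'s depth-1 nodes have at most one child, a contradiction. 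If $u$ is reached within distance $<L$, the same argument applies at the depth of $u$ and $u$'s copy. If the loop is unreachable or too far, the unravelling does not see it, so I will adjust $\C$ to require a reflexive world reachable from the point within distance $<L$; this remains embedding-preserved, since paths are preserved by $f$. For that modified $\C$, $T\notin\T$, while $\unr^L$ of the single loop is in $\T$ and embeds into $T$.

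This construction of $T$ and the exclusion argument is the main obstacle. It is where $L>1$ is used: for $L=1$ the depth-1 argument fails, because there are no grandchildren.
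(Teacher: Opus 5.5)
There is a genuine gap: the key claim $T\notin\T$ is false, both for your original class and for the modified one. Take the model $\M'$ with worlds $w',u,y$ and edges $w'Ru$, $w'Ry$, $uRu$. The reflexive world $u$ is reachable from $w'$ in one step, and $1<L$, so $(\M',w')$ lies in your modified class, and hence also in the original one. Its $L$-unravelling has a root with two children. One child, $(w',y)$, is a leaf. The other, $(w',u)$, starts the chain $(w',u),(w',u,u),\dots,(w',u,\dots,u)$ down to depth $L$. This tree is exactly $T$, so $T\in\T$ and you have not exhibited any failure of preservation.

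The faulty step is ``the same argument applies at the depth of $u$ and $u$'s copy''. A loop at depth $d\ge 1$, at a world whose only successor is itself, unravels into a plain path, and that path is indistinguishable from the long branch of $T$. Your child-counting argument only yields a contradiction at $d=0$. There it compares the root, which has two children, with its copy $(w',w')$ at depth $1$, which would also need two children; no depth-$1$ node of $T$ has that many.

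The repair is to put the loop at the distinguished point. Let $\C$ be the class of pointed models $(\N,v)$ with $vRv$. This class is preserved under embeddings, since an embedding sends the point to the point and preserves edges. Your $d=0$ argument then shows $T\notin\T$. Suppose $\unr^L(\M',w')\cong T$ with $w'Rw'$. The root has two children, so $w'$ has two successors, one of which is $w'$ itself. Since $1<L$, the depth-$1$ node $(w',w')$ also has two children, but in $T$ the node $v_1$ has one child and $v'$ has none.

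This repaired version is essentially the paper's construction. The paper additionally requires that no proposition holds at the point, and labels the extra leaf with $p$. It then gets the contradiction by counting $p$-nodes: the root loop produces a $p$-labelled node at each depth $1,\dots,L$, so at least two, whereas the target tree has exactly one. Your repaired version counts children instead, and so works with unlabelled models.
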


\begin{proof}
Fix $L >1$ and let  $(\M,w)$ be a pointed model
with $\M=(W,R,V)$ such that 
$W = \{w\}$, $R = \{ (w,w) \}$, and $V(p) = \emptyset$ for every proposition $p$, as depicted on the left of \Cref{notpreserved}.
Let $\C$
be the class of pointed models $(\N, v)$ whose
restriction  to $v$ is isomorphic to $(\M,w)$.
Equivalently, $\N$ has a self-loop at $v$ and no proposition holds at $v$. 
It follows that
$\C$ is preserved under embeddings.

Let $\T = \{\unr^L(\M',w') \mid (\M',w') \in \C \}$ be the class of all $L$-unravellings of models in $\C$.
In particular, the $L$-unravelling depicted in the middle of \Cref{notpreserved}, belongs to $\T$.
We will show that $\T$ is not preserved under embeddings. 

\begin{figure}[ht]
\centering
\begin{tikzpicture}
\tikzset{
>=latex,
node distance=0.8cm,
world/.style={
    draw,
    circle,
    fill,
    minimum size=1.5mm,
    inner sep=0pt,
    scale=1
  }
}

\begin{scope}[local bounding box=M1]
% ----- nodes -----
\node[world, label=right:$v$] (v) at (0,0) {};
% ----- edges -----
\path (v) edge[->, loop below, out=300, in=240, min distance=5mm] (v);
\end{scope}
% ----- frame around the scope -----
\node[draw, rounded corners, fit=(M1), inner sep=6pt] (frame1) {};
% ----- label above -----
\node[above=2pt of frame1] {$(\M,w) \in C$};

\begin{scope}[xshift=2.5cm, local bounding box=M2]
% ----- nodes -----
\node[world, label=right:$v_1$] (v1) at (0,0) {};
\node[below of=v1] (dots) {$\vdots$};
\node[world, below of=dots, label=right:$v_L$] (vL) {};
% ----- edges -----
\draw[->] (v1) -- (dots);
\draw[->] (dots) -- (vL);
\end{scope}
% ----- frame around the scope -----
\node[draw, rounded corners, fit=(M2), inner sep=6pt] (frame2) {};
% ----- label above -----
\node[above=2pt of frame2] {$\unr^L(\M,v) \in \T$};
%\node[above=15pt of frame2] {$L$-unravelling};

\begin{scope}[xshift=5cm, local bounding box=M3]
% ----- nodes -----
\node[world, label=right:$v_1$] (v1) at (0,0) {};
\node[below of=v1] (dots) {$\vdots$};
\node[world, below of=dots, label=right:$v_L$] (vL) {};
\node[world, below right of=v1, label=right:$u$] (u)  {};

\node[draw=none, below right of=u] {$(\mathfrak T,u)\models p$};

% ----- edges -----
\draw[->] (v1) -- (dots);
\draw[->] (dots) -- (vL);
\draw[->] (v1) -- (u);
\end{scope}
% ----- frame around the scope -----
\node[draw, rounded corners, fit=(M3), inner sep=6pt] (frame3) {};
% ----- label above -----
\node[above=2pt of frame3] {$(\mathfrak T,v_1) \notin \T$};

\end{tikzpicture}
\caption{Models $(\M,w)$, $\unr^L(\M,v)$, and  $(\mathfrak T,v_1)$}
\label{notpreserved}
\end{figure}

Consider the pointed tree-shaped model $(\mathfrak T,v_1)$ depicted on the right of \Cref{notpreserved}.
It consists of a directed path $v_1,\dots,v_L$ of length $L$ together with
an additional world $u$ such that $(v_1,u)$. A proposition $p$ holds at $u$  and no other proposition holds at any other world.
Clearly, $(\mathfrak T,v_1)$ is tree-shaped   and $\unr^L(\M,v)$ embeds into  $(\mathfrak T,v_1)$.

We claim that $(\mathfrak T,v_1)\notin \T$.
Suppose towards a contradiction that there exists $(\M',w') \in \C$ such that $\unr^L(\M',w')$ is isomorphic to $(\mathfrak T,v_1)$. 
Since $(\mathfrak T,v_1)$ has a neighbour $u$ of the root  satisfying $p$, while $w'$ itself does not satisfy $p$ (as $(\M', v') \in \C$), it follows that
$\M'$  must contain a  world reachable from $w'$ in one step that satisfies $p$.
Moreover, since $(\M', v') \in \C$, $\M'$ must contain a self-loop $(w',w')$.

As a consequence, $\unr^L(\M',w')$ 
contains distinct paths of lengths $1, \dots, L$ from the root to distinct nodes satisfying $p$. In particular,  
 $\unr^L(\M',w')$ contains 
$L$ distinct nodes satisfying $p$. By contrast,
$(\mathfrak T,v_1)$ contains exactly one world satisfying $p$.  
Since $L>1$, we obtain that $\unr^L(\M',w')$ cannot be isomorphic to $(\mathfrak T,v_1)$, contradicting
our assumption.
\end{proof}

Therefore, we cannot assume that the classes $\T$ of unravellings are preserved under embeddings.
To overcome this obstacle and ensure the finiteness of $\Tmin$,
we establish a stronger structural result.
Specifically, we show that the embedding relation forms a well-quasi-order on any class of tree-shaped models of uniformly bounded height.
 Since  $L$-unravellings have height at most $L$, this result applies directly to the
classes $\T$ considered above. 

This theorem, however, goes beyond the setting of
unravellings: it applies to arbitrary classes
of tree-shaped models, independently of any preservation properties.
In particular, it complements the framework developed by Lopez \cite{DBLP:phd/hal/Lopez23a}, which relies on hereditary classes of
finite structures. As shown in Proposition~\ref{non_closed}, unravelling-invariant classes of pointed Kripke models are not
hereditary and hence the results in \cite{DBLP:phd/hal/Lopez23a}
do not apply in our setting.

\begin{theorem}\label{embedding_wqo}
The embedding relation on any class of pointed tree-shaped models whose height is uniformly bounded by a fixed constant
is a well-quasi-order.
\end{theorem}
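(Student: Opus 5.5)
The plan is to prove, by induction on $h$, that the embedding relation is a wqo on the class $\mathcal{A}_h$ of \emph{all} (isomorphism types of) finite pointed tree-shaped models of height at most $h$. The statement for an arbitrary class of height bounded by $h$ then follows, because a subset of a wqo is again a wqo. Reflexivity and transitivity are immediate, since identities are embeddings and embeddings compose. So the work is to show that every infinite sequence $(\mathfrak{T}_1,r_1),(\mathfrak{T}_2,r_2),\dots$ in $\mathcal{A}_h$ has indices $i<j$ with an embedding from $(\mathfrak{T}_i,r_i)$ to $(\mathfrak{T}_j,r_j)$.

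First I will record a structural fact about root-preserving embeddings between trees. Let $f$ be a root-preserving embedding of a tree into a tree.
\begin{itemize}
\item Since $f$ preserves edges and the root, a straightforward induction on depth shows that $f$ maps each node of depth $d$ to a node of depth $d$, and maps children of $a$ to children of $f(a)$.
\item Since $f$ is injective, distinct children of $a$ go to distinct children of $f(a)$.
\end{itemize}
Conversely, I will check a sufficient condition for a map $f$ to be an embedding: $f$ sends the root to the root, preserves labels exactly (the set of true propositions), and maps the children of each node $a$ injectively into the children of $f(a)$.
\begin{itemize}
\item Injectivity follows from depth preservation together with an induction along paths.
\item Edges are preserved by construction.
\item For non-edges, suppose $f(a)\,R'\,f(b)$. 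In a tree this means $f(b)$ is a child of $f(a)$. Let $c$ be the parent of $b$. Then $f(c)$ is the parent of $f(b)$, so $f(c)=f(a)$, and injectivity gives $c=a$, i.e.\ $a\,R\,b$.
\item Labels are preserved in both directions because they are equal.
\end{itemize}
I expect this verification that the glued map reflects non-edges and is injective to be the only delicate point. Everything else is bookkeeping.

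Base case $h=0$: each model is a single world whose label is a subset of the finite set $\prop$. There are only finitely many isomorphism types, so any infinite sequence repeats a type, and the identity between the repeated copies is an embedding.

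Inductive step from $h$ to $h+1$: I represent each $(\mathfrak{T},r)\in\mathcal{A}_{h+1}$ as a pair $(\lambda,s)$. Here $\lambda\subseteq\prop$ is the label of $r$, and $s=(S_1,\dots,S_m)$ lists the subtrees rooted at the children of $r$ in some arbitrarily fixed order. Each $S_k$ lies in $\mathcal{A}_h$, which is a wqo under embeddings by the induction hypothesis. By Higman's Lemma (\Cref{higman}), the finite sequences over $\mathcal{A}_h$ form a wqo under $\preceq^*$. Given an infinite sequence in $\mathcal{A}_{h+1}$, I argue as follows.
\begin{itemize}
\item Some label $\lambda$ occurs infinitely often, since there are finitely many labels. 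Pass to that subsequence.
\item Apply the wqo property of $\preceq^*$ to the corresponding child-sequences. This yields $i<j$ with the same root label and $s_i\preceq^* s_j$.
\item The relation $s_i\preceq^* s_j$ supplies a strictly increasing, hence injective, assignment of the children of $r_i$ to children of $r_j$, together with embeddings of the corresponding subtrees.
\item Glue these subtree embeddings with $r_i\mapsto r_j$. The glued map satisfies the sufficient condition from the second paragraph, so it is an embedding of $(\mathfrak{T}_i,r_i)$ into $(\mathfrak{T}_j,r_j)$.
\end{itemize}
This completes the induction.
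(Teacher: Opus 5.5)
Your proof is correct and follows essentially the same route as the paper: induction on height, splitting by root label, and applying Higman's Lemma to the sequences of immediate subtrees over the wqo of all trees of smaller height. The differences are minor. You use only the implication from $\preceq^*$ to an embedding, so you never need the paper's permutation-closed order $\preceq^*_\pi$ or its ``iff'' characterisation. You also check explicitly that the glued map is injective and reflects non-edges, which the paper leaves implicit.
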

\begin{proof}
Let $L$ be the uniform bound on the height of the models under consideration, and let $\preceq$ be the embedding relation between pointed models.
Clearly, $\preceq$ is a quasi-order.
We prove by induction on $\ell\le L$ the following claim:
for every class $\T$ of pointed  tree-shaped  models of height at most $\ell$,
the quasi-order $(\T,\preceq)$ is a wqo.

\smallskip
\noindent
\emph{Base case ($\ell=0$)}. A tree-shaped model of height $0$ consists of a single root labelled by some set
$S \subseteq \prop$. Since the propositional signature $\prop$ is finite, there are finitely many non-isomorphic  models of this form. 
Hence, for any class of pointed tree-shaped models of height $0$, the quasi-order 
induced by $\preceq$ is a wqo.  Note  that the finiteness of 
$\prop$ is essential for this argument.

\smallskip
\noindent
\emph{Inductive step}. Let $\ell \geq 1$ and assume that the claim holds for $\ell-1$. 
Let $\T$ be an arbitrary class of pointed tree-shaped models of height at most $\ell$. 
We show that $(\T, \preceq)$ is a wqo.  

We begin by partitioning $\T$
into subclasses $\T^S$ indexed by $S \subseteq \prop$, where $\T^S$ consists of all pointed models in $\T$ whose root satisfies exactly the propositions in $S$.
Since $2^\prop$ is finite and a finite disjoint union of wqos is also a wqo,
it suffices to show that each $(\T^S, \preceq)$ is a wqo. 

Fix $S \subseteq \prop$. For each $(\M, w)$ in $\T^S$, let 
$\tau_{\M,w} = (\tau_1, \dots, \tau_n)$
be a finite sequence enumerating its immediate subtrees rooted at the children of $w$.
As the children are unordered, we fix an arbitrary but canonical ordering on models
to obtain a well-defined enumeration.
The embedding relation $\preceq$ on pointed models induces the subsequence order relation $\preceq^*$ on their finite sequences, defined  as in Higman's Lemma (Lemma \ref{higman}).
To account for the arbitrary ordering of children, 
we consider the permutation-closed variant $\preceq^*_{\pi}$
defined by $\tau \preceq^*_{\pi} \tau'$ if and only if $\tau \preceq^* f(\tau')$ for some permutation $f$.
We claim that for any $(\M,w),(\N,v)\in\T^S$, $(\M,w) \preceq (\N,v)$
if and only if $\tau_{\M,w} \preceq^*_\pi \tau_{\N,v}$.
Indeed, an embedding from $(\M,w)$ into $(\N,v)$ must map each child of $w$ injectively
to a distinct child of $v$, such that the corresponding subtree embeds recursively.
This yields $ \tau_{\M, w} \preceq^*_\pi \tau_{\N, v} $.
Conversely, if  $ \tau_{\M, w} \preceq^*_\pi \tau_{\N, v} $ then
each subtree in $\tau_{\M,w}$ embeds into a distinct 
subtree in $\tau_{\N,v}$, yielding an embedding from $(\M,w)$ into $(\N,v)$.

Thus, we show that $\bigl( \{\tau_{\M,w} \mid (\M,w)\in\T^S\}, \preceq^*_\pi \bigr)$  is a wqo.
To this end, let $\mathcal I$ be the set of all pointed tree-shaped models  of
height at most $\ell-1$.
By the inductive hypothesis,  $(\mathcal{I},\preceq)$ is a wqo. By  Higman's Lemma (Lemma \ref{higman}), the set of finite sequences $\mathcal{I}^*$ ordered by the subsequence relation $\preceq^*$ is also a wqo.
Since wqos are closed under taking subsets, and 
$(\tau_{\M,w} \mid (\M,w) \in \T^S)$ is a subset of $\mathcal I^*$,
it follows that $((\tau_{\M,w} \mid (\M,w) \in \T^S) , \preceq^*)$ is a wqo. Moreover, as $\preceq^*$ is a subrelation  of $\preceq^*_\pi$ and $\preceq^*_\pi$ is a quasi-order, it follows that $\preceq^*_\pi$ is also a wqo on this set. This completes the inductive step.
\end{proof}

As an immediate consequence, the preceding theorem extends to the other semantic relations considered in this paper. Since wqos are closed
under taking super-relations, and since the injective homomorphism relation and the homomorphism relation extend  the embedding relation, the same result holds for these relations.

\begin{corollary}\label{cor:finiteness}
The injective homomorphism relation, as well as the homomorphism relation,  on any class of  tree-shaped models whose
height is uniformly bounded by a constant is a  well-quasi-order.
\end{corollary}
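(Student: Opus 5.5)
This follows directly from \Cref{embedding_wqo} together with the closure of well-quasi-orders under extension of the order, recalled in the preliminaries. Fix a class $\T$ of pointed tree-shaped models whose height is bounded by some constant $L$. Let $\preceq_e$, $\preceq_i$ and $\preceq_h$ denote the embedding, injective homomorphism and homomorphism relations restricted to $\T$. By \Cref{embedding_wqo}, $(\T,\preceq_e)$ is a wqo. It therefore suffices to check two things for $\preceq_i$ and $\preceq_h$: that each is a quasi-order on $\T$, and that each contains $\preceq_e$.

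For the first point, both relations are reflexive, witnessed by the identity map, which is an injective homomorphism. Both are also transitive, because the composition of two homomorphisms preserving the distinguished points is again such a homomorphism, and the composition of injective maps is injective. Note that the composite still sends the root to the root, as required by our convention that $f(w)=v$.

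For the second point, the preliminaries already observe that every embedding is an injective homomorphism, and every injective homomorphism is a homomorphism. Hence $\preceq_e \subseteq \preceq_i \subseteq \preceq_h$ as relations on $\T$.

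Applying the extension property of wqos to $\preceq_e \subseteq \preceq_i$ shows that $(\T,\preceq_i)$ is a wqo, and applying it again to $\preceq_i \subseteq \preceq_h$ shows that $(\T,\preceq_h)$ is a wqo. One could also verify the extension property directly: any infinite sequence in $\T$ has indices $i<j$ with $a_i \preceq_e a_j$, and these indices then also witness $a_i \preceq_i a_j$ and $a_i \preceq_h a_j$.

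No step here is a real obstacle. The only point needing care is that the quasi-order axioms must hold for the larger relations themselves, since the extension property of wqos applies only to quasi-orders. The composition argument above settles this.
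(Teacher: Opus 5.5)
Your proposal is correct and follows the paper's argument: the paper also derives the corollary from \Cref{embedding_wqo}, since the injective homomorphism and homomorphism relations extend the embedding relation and wqos are closed under passing to a larger quasi-order. Your check that the larger relations are themselves reflexive and transitive (via identity maps and composition of root-preserving maps) supplies a detail the paper leaves implicit.
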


As a result,  for $\preceq$ equal to the embedding, injective homomorphism, or homomorphism
relation, and for any class  $\T$ of pointed tree-shaped models of uniformly bounded height, the quasi-order 
$(\T,\preceq)$ is a wqo.
In particular, $\T$ admits only finitely many $\preceq$-minimal elements. This finiteness property underpins the minimal-model arguments used in the subsequent preservation theorems.

We conclude this section by showing that the bounded-height assumption 
is essential: if height is unbounded, the well-quasi-order property fails.

\begin{theorem}\label{unbounded_bad}
The embedding, injective homomorphism, and homomorphism relations on classes of pointed tree-shaped models of 
unbounded height are, in general, not well-quasi-orders.
\end{theorem}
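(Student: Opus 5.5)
The plan is to exhibit a single infinite family of pointed tree-shaped models, of unbounded height, that is an antichain with respect to the \emph{homomorphism} relation. This suffices for all three relations at once. Every embedding is an injective homomorphism and every injective homomorphism is a homomorphism, so if no homomorphism exists between two distinct members of the family, then neither does an injective homomorphism or an embedding. An infinite antichain violates the defining property of a wqo: in the sequence enumerating the family, there are no indices $i<j$ with $a_i \preceq a_j$.

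\textbf{Construction.} Fix a proposition $p \in \prop$ (we assume $\prop$ is non-empty). For each $n \geq 0$, let $(\mathfrak{P}_n, u_0)$ be the directed path $u_0 R u_1 R \cdots R u_n$, with $p$ true exactly at $u_n$ and no other proposition true anywhere. Each $(\mathfrak{P}_n,u_0)$ is tree-shaped with root $u_0$ and has height $n$, so the family has unbounded height.

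\textbf{Key step: a depth-preservation observation.} I will show that any homomorphism $f$ from a pointed tree-shaped model $(\M,w)$ to a pointed tree-shaped model $(\N,v)$ maps each world at depth $i$ to a world at depth exactly $i$. The argument is an induction along the unique path from the root. Since $f(w)=v$ and edges are preserved, the image of a depth-$i$ path starting at $w$ is a directed walk of length $i$ starting at $v$. In a tree every walk from the root is the unique path to its endpoint, so that endpoint lies at depth $i$.

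\textbf{Concluding the antichain.} Suppose $f$ is a homomorphism from $(\mathfrak{P}_n,u_0)$ to $(\mathfrak{P}_m,u'_0)$ with $n \neq m$. By the observation, $u_n$ is sent to a world at depth $n$ of $\mathfrak{P}_m$. If $n>m$, no such world exists. If $n<m$, the world at depth $n$ is $u'_n$, which does not satisfy $p$, although $u_n \models p$; this contradicts preservation of the valuation. Hence the family $\{(\mathfrak{P}_n,u_0)\}_{n\ge 0}$ is an infinite antichain under homomorphisms, and therefore also under injective homomorphisms and embeddings.

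I expect no real obstacle; the only point needing care is the depth-preservation observation, which relies on the root-to-root requirement $f(w)=v$ in the definition of maps between pointed models. The labelling by $p$ is essential. Without propositions, plain paths would form a chain, since $\mathfrak{P}_n$ maps into $\mathfrak{P}_m$ whenever $n\le m$.
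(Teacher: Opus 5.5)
Your proof is correct, but it takes a different route from the paper's. The paper uses two separate constructions.
\begin{itemize}
\item For embeddings and injective homomorphisms it uses \emph{unlabelled} trees $\M_n$: a spine $v_1,\dots,v_{n+2}$ with an extra leaf $u$ below the root and an extra leaf $w$ below $v_{n+1}$. It separates these trees by counting the worlds at distance $n+1$ from the root.
\item For homomorphisms it adds parity labels $p_1,p_2$ along the spine and markers $p_3,p_4$ on $u$ and $w$. It then shows by induction that any homomorphism must fix the spine pointwise, which leaves no admissible image for $w$.
\end{itemize}
You instead take a single labelled family, paths with $p$ only at the end. You make it an antichain for the weakest relation, homomorphism, and get the other two cases for free because embeddings and injective homomorphisms are sub-relations.

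Your explicit observation that root-preserving homomorphisms between tree-shaped models preserve depth is a fact the paper also uses, tacitly, in its counting argument. Stating it makes the parity labels unnecessary. As a result you need one proposition instead of four, and one short argument instead of two.

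What your route does not give is the extra information carried by the paper's first construction. For embeddings and injective homomorphisms, the wqo property fails even over the empty signature. Your antichain collapses into a chain once the label $p$ is removed, as you note yourself. This is exactly why the paper follows the theorem with a proposition showing that labels are needed only in the homomorphism case.
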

\begin{proof}
We start with the case of injective homomorphisms, which also covers embeddings as a special case.
Consider the class $\T$ of pointed tree-shaped models that, for every $n\geq 1$, contains a pointed model $(\M_n, v_1)$ with worlds $v_1, \ldots v_n, v_{n+1}, v_{n+2},u,w$  such that $v_{i}Rv_{i+1}$ for $1 \leq i \leq n+1$ and
$v_1Ru$ and $v_{n+1}Rw$, as depicted in \Cref{antichain}. 
All worlds are unlabelled,  that is, the valuation function maps each propositional symbol to the empty 
set. 

In each $(\M_n, v_1)$ there are exactly two worlds at distance $n+1$ from the root, namely 
$v_{n+2}$ and $w$. 
By contrast, if $m > n$, then $(\M_m, v_1)$
has a single world at distance $n+1$, namely $v_{n+2}$.
Hence, for any $ n \neq m$, 
there is neither an injective homomorphism from $(\M_n, v_1)$ to $(\M_m, v_1)$, nor vice-versa.
Thus, $\T$ forms an infinite antichain with respect to injective homomorphisms
and hence also with respect to embeddings.

\begin{figure}[ht]
\centering
\begin{tikzpicture}
\tikzset{
>=latex,
node distance=0.8cm,
world/.style={
    draw,
    circle,
    fill,
    minimum size=1.5mm,
    inner sep=0pt,
    scale=1
  }
}

\begin{scope}[local bounding box=M1]
% ----- nodes -----
\node[world, label=right:$v_1$] (v1) at (0,0) {};
\node[world, below of=v1, label=right:$v_2$] (v2)  {};
\node[world, below of=v2, label=right:$v_3$] (v3)  {};

\node[world, below right of=v1, label=right:$u$] (u)  {};
\node[world, below right of=v2, label=right:$w$] (w)  {};
% ----- edges -----
\draw [->] (v1) --  (v2);
\draw [->] (v2) --  (v3);

\draw [->] (v1) --  (u);
\draw [->] (v2) --  (w);
\end{scope}
% ----- frame around the scope -----
\node[draw, rounded corners, fit=(M1), inner sep=6pt] (frame1) {};
% ----- label above -----
\node[above=2pt of frame1] {$\mathfrak{M}_1$};

\begin{scope}[xshift=2cm, local bounding box=M2]
% ----- nodes -----
\node[world, label=right:$v_1$] (v1) at (0,0) {};
\node[world, below of=v1, label=right:$v_2$] (v2)  {};
\node[world, below of=v2, label=right:$v_3$] (v3)  {};
\node[world, below of=v3, label=right:$v_4$] (v4)  {};

\node[world, below right of=v1, label=right:$u$] (u)  {};
\node[world, below right of=v3, label=right:$w$] (w)  {};
% ----- edges -----
\draw [->] (v1) --  (v2);
\draw [->] (v2) --  (v3);
\draw [->] (v3) --  (v4);

\draw [->] (v1) --  (u);
\draw [->] (v3) --  (w);
\end{scope}
% ----- frame around the scope -----
\node[draw, rounded corners, fit=(M2), inner sep=6pt] (frame2) {};
% ----- label above -----
\node[above=2pt of frame2] {$\mathfrak{M}_2$};

\begin{scope}[xshift=4cm, local bounding box=M3]
% ----- nodes -----
\node[world, label=right:$v_1$] (v1) at (0,0) {};
\node[world, below of=v1, label=right:$v_2$] (v2)  {};
\node[world, below of=v2, label=right:$v_3$] (v3)  {};
\node[world, below of=v3, label=right:$v_4$] (v4)  {};
\node[world, below of=v4, label=right:$v_5$] (v5)  {};

\node[world, below right of=v1, label=right:$u$] (u)  {};
\node[world, below right of=v4, label=right:$w$] (w)  {};
% ----- edges -----
\draw [->] (v1) --  (v2);
\draw [->] (v2) --  (v3);
\draw [->] (v3) --  (v4);
\draw [->] (v4) --  (v5);

\draw [->] (v1) --  (u);
\draw [->] (v4) --  (w);
\end{scope}
% ----- frame around the scope -----
\node[draw, rounded corners, fit=(M3), inner sep=6pt] (frame3) {};
% ----- label above -----
\node[above=2pt of frame3] {$\mathfrak{M}_3$};

% ----- dots -----
% \node[above right=6pt and 0.5cm of frame3] {$\cdot\ \cdot\ \cdot$};
\node[above right=-1.5cm and 0.5cm of frame3] {$\cdot\ \cdot\ \cdot$};

\end{tikzpicture}
\caption{An infinite antichain of models with respect to the injective homomorphism (and embedding) relation}\label{antichain}
\end{figure}
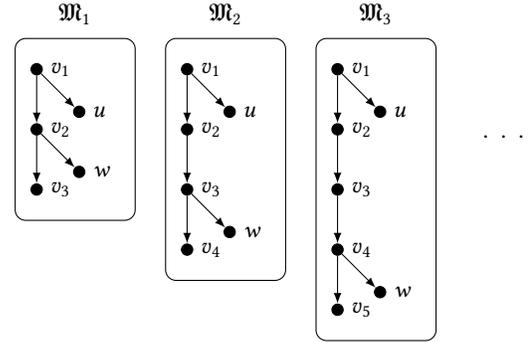

We now turn to the case of homomorphisms. Consider  the modification of $\T$ where, in each model
$(\M_n, v_1)$, the 
valuation function is defined as: $V(p_1) = \{v_i \mid i \text{ is even } \}$, 
$V(p_2) = \{v_i \mid i \text{ is odd } \}$, $V(p_3) = u$, and $V(p_4) = w$; all other propositions 
evaluate to $\emptyset$. 

Let $n<m$ and suppose, towards a contradiction, that $f$ is a homomorphism  from 
$(\M_n,v_1)$ to $(\M_m,v_1)$.
We prove by induction on $i\le n+2$ that $f(v_i)=v_i$. 
The base case $i=1$ is immediate.
For the inductive step, assume that $f(v_i)=v_i$. 
Then, $f(v_{i+1})$ must be a successor of $v_i$ in $\M_m$ satisfying the same propositions as $v_{i+1}$ in $\M_n$. 
In $\M_m$ 
the only such successor  is $v_{i+1}$ itself, hence $f(v_{i+1})=v_{i+1}$.
In particular, $f(v_{n+1})=v_{n+1}$. It follows that $f(w)$ is a successor of $v_{n+1}$ in $\M_m$  satisfying $p_4$. 
However, the  only world 
in $\M_m$ satisfying $p_4$  is  $w$, which is a successor of
$v_{m+1}$ rather than $v_{n+1}$. This
yields a contradiction.  
A symmetric argument shows that no homomorphism exists from  $(\M_m, v_1)$ to $(\M_n, v_1)$.
\end{proof}

As a final observation, we note that the proof of \Cref{unbounded_bad} relies on two distinct constructions. 
The counterexample for  embeddings and injective homomorphisms uses \emph{unlabelled} models, that is, models
in which all propositional symbols are evaluated as the empty set.
In contrast, the counterexample for homomorphisms relies on \emph{labelled} models.
As we show next, the use of  labelled models in the second construction is unavoidable. 

\begin{proposition}
For any pair of  unlabelled pointed tree-shaped  models, there exists a homomorphism from one model into the other.
\end{proposition}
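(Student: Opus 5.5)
The plan is to map every world of the shallower tree onto a single longest branch of the deeper tree, according to its depth.

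Let $(\M,w)$ and $(\N,v)$ be unlabelled pointed tree-shaped models of heights $h$ and $h'$. Without loss of generality $h \le h'$; otherwise swap the two models. I will construct a homomorphism from $(\M,w)$ to $(\N,v)$.

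\textbf{Depth and edges.} Since $(\M,w)$ is tree-shaped, every world $u$ has a unique directed path from $w$. Let $d(u)$ be its length, so $d(u)\le h$. I will use the standard fact that in a tree-shaped model every edge $uRz$ satisfies $d(z)=d(u)+1$. Suppose $z$ does not lie on the path from $w$ to $u$. Then appending the edge to that path gives a path to $z$ of length $d(u)+1$, and uniqueness forces $d(z)=d(u)+1$. If instead $z$ lay on that path, the edge would close a cycle. This would contradict the intended tree shape: we could traverse the cycle once more and obtain a second walk to $z$, and the root would not be the unique source of every path. I expect this to be the only delicate point, namely making precise that "tree-shaped" rules out back-edges. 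I would state explicitly that tree-shaped models are rooted trees in the graph-theoretic sense, as is implicit in the paper's treatment of $L$-unravellings.

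\textbf{Construction of the map.} Since $(\N,v)$ has height $h'$, fix a longest directed path $v=z_0 R' z_1 R' \cdots R' z_{h'}$ in $\N$. Define $f(u)=z_{d(u)}$. This is well defined because $d(u)\le h\le h'$, and it satisfies $f(w)=z_0=v$.

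\textbf{Verification.} For any edge $uRz$ we have $d(z)=d(u)+1$, so $f(u)=z_{d(u)}\,R'\,z_{d(u)+1}=f(z)$. Hence edges are preserved. Since both models are unlabelled, every $V(p_i)$ is empty and the valuation condition holds vacuously. Thus $f$ is a homomorphism from $(\M,w)$ to $(\N,v)$. Since one of $h\le h'$ or $h'\le h$ always holds, the statement follows.
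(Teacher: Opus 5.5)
Your proof is correct and is the paper's argument: take the shallower model as the source, fix a longest branch $v=z_0 R' z_1 R'\cdots R' z_{h'}$ of the deeper one, and send each world to the branch node at its depth, with the valuation condition vacuous. Your explicit check that edges increase depth by one, reading ``tree-shaped'' as a genuine rooted tree, is exactly what the paper's ``by construction'' leaves implicit, and that reading is genuinely needed: a $2$-cycle and a $3$-cycle through the root have unique simple paths from the root, yet there is no homomorphism between them in either direction. Your $0$-based indexing also avoids the paper's off-by-one slip with $v_1=v$.
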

\begin{proof}
Let  $(\M,w)$ and $(\N,v)$ be unlabelled pointed models of heights $m$ and $n$ respectively, with
$m \leq n$.
Since $\N$ has height $n$, it contains a path $(v_1,v_2, \dots, v_{n})$ of length $n$, with $v_1 =v$.
We define a function $f$ from the worlds of $\M$ to the worlds of $\N$ by mapping
each world $w'$ of $\M$ at distance $i$ from the root to $v_i$.
This mapping is well defined since $i \leq m \leq n$.
Because both models are unlabelled, the valuation condition for homomorphisms is trivially satisfied.
Moreover, by construction, $f$ preserves the edge relation.
Hence, $f$ is a homomorphism from $(\M,w)$ to $(\N,v)$.
\end{proof}

%%%%%%%%%%%%%%%%%%%%%
\section{Preservation Theorems for Unravelling-Invariant Classes and Modal Logics}\label{sec:pres}
%%%%%%%%%%%%%%%%%%%%%

We now use the results from the previous section---most notably,  \Cref{embedding_wqo}---to establish preservation theorems for classes of models that are invariant under unravelling.
These results will subsequently be applied 
to obtain preservation theorems for modal logics and, in turn, for graph neural networks.

This section is organised into three main parts,  which
relate unravelling-invariant classes of models preserved under embeddings, injective homomorphisms, and homomorphisms, with definability in  \EGML{},  \EPGML{}, and  \EPML{}, respectively.

All three preservation theorems
follow the same general proof strategy. 
Let $\C$ be a class of models invariant under $L$-unravelling and preserved under a relation $\preceq$ (namely, embedding, injective homomorphism, or homomorphism).
We first construct the class $\T$ of  $L$-unravellings of models in $\C$. 
We then let $\Tmin$ be the set of $\preceq$-minimal models in $\T$.
For each model $(\mathfrak T,v) \in \Tmin$ we 
construct a characteristic formula  $\varphi_{\mathfrak T,v}$ such that, for every pointed model $(\M,w)$,  $(\M,w) \models \varphi_{\mathfrak T,v}$ 
if and only if $(\mathfrak T, v) \preceq \unr^L(\M, w)$.
Finally, we define a formula $\varphi_\C$ as a disjunction over all $\varphi_{\mathfrak T,v}$ with $(\mathfrak T,v) \in \Tmin$.
By construction, the formula $\varphi_\C$ defines the class $\C$.

Finiteness of $\Tmin$ 
is crucial for this strategy.
If $\Tmin$ were infinite, the resulting disjunction  would be infinite.
Importantly, the finiteness of $\Tmin$ is guaranteed by \Cref{embedding_wqo}.
The modal language to which the characteristic formulae 
$\varphi_{\mathfrak T,v}$ belong
depends on the chosen relation  $\preceq$:  for $\preceq$ being the
embedding, injective homomorphism, and homomorphism,  the formulae
$\varphi_{\mathfrak T,v}$ belong, respectively, to the fragments \EGML{}, \EPGML{}, and \EPML{}.

%%%%%%%%%%%%%%%%%%%%%
\subsection{Preservation Under Embeddings}
%%%%%%%%%%%%%%%%%%%%%

Our aim is to show that any class of models preserved under embeddings and invariant under $L$-unravelling is definable by a $\EGML$-formula. Moreover, we show
that such a formula can be chosen to have depth at most $L$.
To this end, we introduce characteristic formulae
in $\EGML$. These formulae are obtained  as a modification of the standard characteristic formulae for graded modal logic from the literature \cite{DBLP:journals/corr/abs-1910-00039,boundedGNNs}.
The key difference is that our characteristic formulae in $\EGML$ do not contain negated modal operators.

\begin{definition}\label{def:characteristic-existential}
Let $(\M,w)$ be a pointed model with $\M = (W,R,V)$.
For each $\ell \in \mathbb{N}$, we define
a \emph{characteristic $\EGML$ formula}, $\varphi_{\M,w}^\ell$  of depth $\ell$ inductively as follows.
For $\ell = 0$, let 
$$
\varphi_{\M,w}^0  := \bigwedge \{ p  \mid (\M,w) \models p \} 
\land
\bigwedge \{ \neg p  \mid (\M,w) \not\models p \}.
$$
For $\ell + 1$, partition the $R$-successors of $w$ into  equivalence classes $C_1, \dots, C_m$ according to the relation 
$\equiv^{\GML}_\ell$.
For each $C_j$, choose an arbitrary representative $w_j \in C_j$
and define 
$$
\varphi_{\M,w}^{\ell+1}
\ :=\
\varphi_{\M,w}^0 
\ \wedge\
\bigwedge_{j=1}^m \Diamond^{\geq |C_j|}\,\varphi_{\M,w_j}^\ell .$$
\end{definition}

For full \GML{}, a characteristic formula of depth $\ell$ for a pointed model
$(\M,w)$ holds in $(\N,v)$ if and only if there exists a graded $\ell$-bisimulation between  these two pointed models. Over
finite models, this is equivalent to $(\M,w) \equiv^\GML_\ell (\N,v)$ \cite{DBLP:journals/corr/abs-1910-00039}. 
The characteristic formulae defined above for \EGML{} are strictly weaker, and
these equivalences no longer hold.
Instead, the \EGML{} characteristic formulae ensure a one-way structural property: the $\ell$-unravelling of the  model used to construct the characteristic formula embeds into the $\ell$-unravelling of any  model satisfying the formula. 
This property is captured by the following lemma, which plays a central role in the 
preservation theorem for embeddings.

\begin{lemma}\label{lem:preserve}
Let $(\M, w)$  and $(\N, v)$ be pointed models and let  $\ell \in \mathbb{N}$.
We have $(\N,v) \models \varphi_{\M,w}^\ell$ 
if and only if $\unr^\ell(\M, w)$ embeds into $\unr^\ell(\N, v)$.
\end{lemma}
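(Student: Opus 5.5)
The proof goes by induction on $\ell$, for all pairs of pointed models simultaneously. We use one structural fact: an embedding of $\unr^{\ell+1}(\M,w)$ into $\unr^{\ell+1}(\N,v)$ that sends root to root is the same thing as two pieces of data. The first is agreement of the root labels. The second is an injective map $g$ from the $R$-successors of $w$ to the $R'$-successors of $v$ such that $\unr^{\ell}(\M,u)$ embeds into $\unr^{\ell}(\N,g(u))$ for every successor $u$. This is the same decomposition used in the inductive step of \Cref{embedding_wqo}. It works because in an unravelling the children of the root are exactly the paths $(w,u)$, and the subtree below $(w,u)$ is isomorphic to $\unr^{\ell}(\M,u)$. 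Images of distinct children lie in distinct subtrees, so any induced substructure of a tree containing the root is again such a tree. Hence the root-level injection together with subtree embeddings glues to an embedding.

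\emph{Base case $\ell=0$.} Both unravellings are single labelled roots. The formula $\varphi^0_{\M,w}$ fixes the label of $v$ exactly, because it contains both the positive and the negative literals. So the base case is immediate.

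\emph{Inductive step, ``if'' direction.} Given an embedding $e$, the root labels agree, so $\varphi^0_{\M,w}$ holds at $v$. For each class $C_j$, the induction hypothesis gives that every $e(u)$ with $u\in C_j$ satisfies $\varphi^{\ell}_{\M,u}$. We need these images to satisfy $\varphi^\ell_{\M,w_j}$ for the chosen representative. The argument is that $u\equiv^{\GML}_\ell w_j$ implies $\varphi^\ell_{\M,u}$ and $\varphi^\ell_{\M,w_j}$ are equivalent; this holds by induction, since on finite models $\equiv^{\GML}_\ell$ coincides with isomorphism of $\ell$-unravellings. Injectivity of $e$ then yields at least $|C_j|$ distinct witnesses, so $\Diamond^{\geq|C_j|}\varphi^\ell_{\M,w_j}$ holds.

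\emph{Inductive step, ``only if'' direction.} For each $j$ we obtain a set $D_j$ of at least $|C_j|$ successors of $v$ satisfying $\varphi^\ell_{\M,w_j}$. By the induction hypothesis, each $x\in D_j$ admits an embedding of $\unr^\ell(\M,w_j)$, and hence of $\unr^{\ell}(\M,u)$ for every $u\in C_j$. It remains to choose the injection $g$ with pairwise distinct targets across different classes.

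This is the main obstacle. The sets $D_j$ need not be disjoint, because an existential formula for one class can also be satisfied by a node realising a larger class. For instance, let $w$ have a leaf child $a$ and a child $b$ with one leaf child. Then any single unlabelled child with a leaf child satisfies both conjuncts of $\varphi^2_{\M,w}$, yet no injection exists. So the formula of Definition~\ref{def:characteristic-existential} as literally written seems too weak for this direction.

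The plan is therefore to verify, and if necessary enforce, Hall's condition. For every set $J\subseteq\{1,\dots,m\}$ we need at least $\sum_{j\in J}|C_j|$ successors of $v$ satisfying $\bigvee_{j\in J}\varphi^\ell_{\M,w_j}$. This can be expressed in $\EGML$ at depth $\ell+1$ by adding the finitely many conjuncts
\[
\Diamond^{\geq \sum_{j\in J}|C_j|}\bigvee_{j\in J}\varphi^\ell_{\M,w_j}.
\]
These conjuncts are implied by an embedding by the same counting argument as in the ``if'' direction. Conversely, they yield the required injection $g$ by Hall's marriage theorem applied to the bipartite graph between successors of $w$ and successors of $v$. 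With that strengthening of the characteristic formula, both directions close, and the later minimal-tree argument is unaffected, since it only uses the equivalence stated in \Cref{lem:preserve}.
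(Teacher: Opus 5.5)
Your diagnosis is correct, and the error is in the paper, not in your reasoning. For the formula of \Cref{def:characteristic-existential} the lemma is false. The paper's proof of the direction from $(\N,v)\models\varphi^{\ell+1}_{\M,w}$ to an embedding breaks exactly at the step you flag. It takes, for each class $C_j$, pairwise distinct witnesses $v_{j,1},\dots,v_{j,|C_j|}$ and concludes that ``since the successors $v_{j,k}$ are pairwise distinct, the images of these embeddings are disjoint''. But the formula only forces distinctness within one class. Since $\EGML$ cannot say that a world has no further successors, $\varphi^\ell_{\M,w_j}$ may entail $\varphi^\ell_{\M,w_{j'}}$ for $j\neq j'$, so one successor of $v$ can serve several classes.

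Your counterexample checks out. With $wRa$, $wRb$, $bRc$ and all worlds unlabelled, we have $a\not\equiv^{\GML}_1 b$. Hence $\varphi^2_{\M,w}=\theta\wedge\Diamond^{\geq 1}\theta\wedge\Diamond^{\geq 1}(\theta\wedge\Diamond^{\geq 1}\theta)$, where $\theta$ is the conjunction of all negated atoms. This holds at the root of an unlabelled path $v\to x\to y$. Yet the two children of the root of $\unr^2(\M,w)$ cannot be mapped injectively to the single child of the root of $\unr^2(\N,v)$. The converse direction (embedding implies formula) is argued the same way in your proposal and in the paper, and is fine.

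Your repair is sound. The conjuncts $\Diamond^{\geq\sum_{j\in J}|C_j|}\bigvee_{j\in J}\varphi^\ell_{\M,w_j}$ stay in $\EGML$ at depth $\ell+1$, and an embedding forces them by the counting argument. Conversely, all successors in $C_j$ share the same candidate set $D_j$. So for any set $U$ of successors of $w$ meeting exactly the classes in $J$, the neighbourhood of $U$ is $\bigcup_{j\in J}D_j$, and $|U|\le\sum_{j\in J}|C_j|\le|\bigcup_{j\in J}D_j|$. This is Hall's condition. The reduction from arbitrary $U$ to unions of whole classes is the one step you should write out.

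Since the proof of \Cref{th:preserve} only uses the equivalence, it survives with the strengthened formula. The same defect affects \Cref{lem:preserve-injective}, already at $\ell=1$: two children labelled $\{p\}$ and $\{q\}$ versus a single child labelled $\{p,q\}$. The same strengthening, without negative literals, repairs that lemma and hence \Cref{th:exists-positive-unravelling}. \Cref{lem:preserve-homomorphism} needs no change, since no injectivity is required there.

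One caveat on your last sentence: not everything downstream is unaffected. The new conjuncts carry grades $\sum_{j\in J}|C_j|\geq 2$ whenever $|J|\geq 2$, even when every class is a singleton. So \Cref{lem:ML-characteristic} fails for the strengthened formula, and the paper's derivations of \Cref{injhomML} and \Cref{embeddingML} cannot reuse it. Those $\ML$ results can instead be obtained by keeping the grade-$1$ formulae of the pruned trees and recovering injectivity via bisimulation invariance. Replace $\unr^L(\N,v)$ by an $L$-bisimilar tree in which every child is duplicated sufficiently often; there, distinct witnesses for distinct classes always exist.
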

\begin{proof}
We prove the equivalence by induction on $\ell$. Let $\preceq$ be the
embedding relation between pointed models.

\smallskip
\noindent
\emph{Base case ($\ell=0$)}. Formula $\varphi_{\M,w}^0$ is a conjunction of all literals  (propositions and their negations) satisfied at $(\M,w)$. 
Thus, $(\N, v) \models \varphi_{\M,w}^0$ iff $(\M,w) \equiv^\GML_0 (\N, v)$.
Since $\unr^0(\M,w)$ and $\unr^0(\N,v)$ consist single worlds, it holds iff
 $\unr^0(\M, w) \preceq \unr^0(\N, v)$. 

\smallskip
\noindent
\emph{Inductive step.} Assume that the claim holds for $\ell \geq 0$. 
We prove that it also holds for $\ell+1$. 
Recall that
$
\varphi_{\M,w}^{\ell+1}
\ :=\
\varphi_{\M,w}^0 
\ \wedge\
\bigwedge_{j=1}^m \Diamond^{\geq |C_j|}\,\varphi_{\M,w_j}^\ell$, where 
$C_1, \dots, C_m$ are the equivalence classes of
$R$-successors of $w$
under
$\equiv^{\GML}_\ell$ , and  $w_j \in C_j$ is an
arbitrary representative. For each $j$, we also fix an
enumeration
$C_j = \{w_{j,1}, \ldots, w_{j,|C_j|} \}$.

\smallskip\noindent
\emph{($\Rightarrow$)}
Assume that $(\N,v) \models \varphi_{\M,w}^{\ell+1}$. 
Then, $(\N,v) \models \varphi_{\M,w}^{0}$, and hence $(\M,w) \equiv^\GML_0 (\N, v)$. 
Moreover, for each $j$, there exist $|C_j|$ distinct successors 
$v_{j,1}, \ldots, v_{j,|C_j|}$ of $v$ such that
$(\N, v_{j,k}) \models \varphi_{\M,w_j}^\ell$.

By the inductive hypothesis, for each such  $j$ and $k$, there exists an embedding $f_{j,k}$ from $\unr^\ell(\M, w_j)$ into $\unr^\ell(\N, v_{j,k})$. 
Since $w_{j,k} \equiv^{\GML}_{\ell} w_j$, the unravellings
$\unr^{\ell}(\M,w_{j,k})$ and $\unr^{\ell}(\M,w_j)$ are isomorphic.
Composing this isomorphism with $f_{j,k}$ yields an embedding of
$\unr^{\ell}(\M,w_{j,k})$ into $\unr^{\ell}(\N,v_{j,k})$.

We now construct a global embedding $f$ 
from $\unr^{\ell+1}(\M, w)$  into  $\unr^{\ell+1}(\N, v)$  as follows. 
Function $f$ maps the root of $\unr^{\ell+1}(\M, w)$ to the root of $\unr^{\ell+1}(\N, v)$. For each $j$ and $k$, $f$ coincides with the embedding constructed above on the 
subtree rooted at the node corresponding to $w_{j,k}$.
Since the successors $v_{j,k}$ are pairwise distinct, the images of these embeddings are
disjoint. It follows that $f$ is   an embedding of $\unr^{\ell+1}(\M,w)$ into $\unr^{\ell+1}(\N,v)$.

\smallskip
\noindent
\emph{($\Leftarrow$)}
Conversely, assume that $f$ is an embedding of  $\unr^{\ell+1}(\M, w)$ into $\unr^{\ell+1}(\N, v)$. 
Then, $f$ maps the root of $\unr^{\ell+1}(\M,w)$ to the root of $\unr^{\ell+1}(\N,v)$, and therefore, $(\M,w)\equiv^{\GML}_0(\N,v)$.
Thus,  $(\N, v) \models \varphi_{\M,w}^0$.
Fix $j \in \{1, \ldots, m\}$. 
By  the definition of unravelling,
the root of $\unr^{\ell+1}(\M,w)$ has exactly $|C_j|$
distinct nodes at depth $1$ corresponding to the successors
$w_{j,1},\dots,w_{j,|C_j|}$ of $w$.
Since $f$ is injective, it maps these nodes 
to $|C_j|$ distinct nodes at depth $1$ in $\unr^{\ell+1}(\N,v)$.
Each such  node  corresponds to a 
successor
$v_{j,k}$ of $v$ in $\N$, and the subtree below this node is precisely
$\unr^\ell(\N,v_{j,k})$.
Restricting $f$ to the corresponding subtrees yields embeddings, so
$\unr^\ell(\M,w_{j,k}) \preceq \unr^\ell(\N,v_{j,k})$
for $k=1,\dots,|C_j|$.
By the inductive hypothesis, this implies $(\N, v_{j,k}) \models \varphi_{\M,w_{j,k}}^\ell$.
Since each $w_{j,k}  \in C_j$, we obtain that $\varphi_{\M,w_{j,k}}^\ell$ is identical to $\varphi_{\M,w_{j}}^\ell$.
Moreover, the worlds $v_{j,k}$ are pairwise distinct successors
of $v$. Hence,
$(\N,v) \models 
\Diamond^{\geq |C_j|}\,\varphi_{\M,w_j}^\ell$, as required.
\end{proof}

Equipped with \Cref{lem:preserve} and \Cref{embedding_wqo}, we are  now ready to prove the preservation theorem for classes of models preserved under embeddings and invariant under $L$-unravelling.
As we show below, these are exactly the classes definable by \EGML{}-formulae of modal depth  at most $L$.

\begin{theorem}\label{th:preserve}
The following are equivalent for any class  $\C$ of pointed models and any $L\in \mathbb{N}$:
\begin{itemize}
\item $\C$ is invariant under $L$-unravelling and preserved under embeddings,
\item $\C$ is definable by an $\EGML$ formula of depth at most $L$.
\end{itemize}
\end{theorem}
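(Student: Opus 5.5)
The plan is to prove the two directions separately. The backward direction is routine; the forward direction follows the minimal-tree strategy described at the start of this section.

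\emph{From definability to the semantic properties.} Let $\varphi$ be an $\EGML$ formula of depth at most $L$. Invariance of its model class under $L$-unravelling is the standard fact recalled in the preliminaries: a depth-$L$ $\GML$ formula holds at $(\M,w)$ iff it holds at $\unr^L(\M,w)$. For preservation under embeddings, we argue by induction on $\varphi$ that if $f$ embeds $(\M,w)$ into $(\N,f(w))$, then $(\M,u)\models\psi$ implies $(\N,f(u))\models\psi$ for every world $u$ and every subformula $\psi$.
\begin{itemize}
\item Literals $p$ and $\neg p$ transfer because embeddings preserve and reflect labels.
\item Conjunction and disjunction are immediate.
\item For $\Diamond^{\geq k}\psi$, the $k$ distinct witnesses are successors of $u$; they are sent to $k$ distinct successors of $f(u)$ by injectivity and edge preservation, and these satisfy $\psi$ by the induction hypothesis.
\end{itemize}

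\emph{From the semantic properties to definability.} Let $\C$ be invariant under $L$-unravelling and preserved under embeddings. Put $\T=\{\unr^L(\M,w)\mid (\M,w)\in\C\}$, a class of tree-shaped models of height at most $L$. By \Cref{embedding_wqo}, $(\T,\preceq)$ is a wqo for the embedding relation $\preceq$. Hence the set $\Tmin$ of minimal elements, taken up to mutual embeddability (isomorphism for finite structures), is finite, and every element of $\T$ lies above some element of $\Tmin$; the latter uses well-foundedness. Define
\[
\varphi_\C:=\bigvee_{(\mathfrak T,t)\in\Tmin}\varphi^{L}_{\mathfrak T,t}
\]
using \Cref{def:characteristic-existential}. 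This is an $\EGML$ formula of depth $L$. If $\C$ is empty, take the unsatisfiable $\EGML$ formula $p\wedge\neg p$.

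We show $\varphi_\C$ defines $\C$. For $(\M,w)\in\C$, there is $(\mathfrak T,t)\in\Tmin$ with $(\mathfrak T,t)\preceq\unr^L(\M,w)$. Since $(\mathfrak T,t)$ is a tree of height at most $L$, it is isomorphic to its own $L$-unravelling, so \Cref{lem:preserve} gives $(\M,w)\models\varphi^L_{\mathfrak T,t}$.

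Conversely, suppose $(\M,w)\models\varphi^L_{\mathfrak T,t}$. By \Cref{lem:preserve}, $\unr^L(\mathfrak T,t)\cong(\mathfrak T,t)$ embeds into $\unr^L(\M,w)$. Now $(\mathfrak T,t)=\unr^L(\M',w')$ for some $(\M',w')\in\C$, and invariance gives $(\mathfrak T,t)\in\C$. Preservation under embeddings then gives $\unr^L(\M,w)\in\C$, and invariance once more gives $(\M,w)\in\C$.

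\emph{Main obstacle.} The delicate point is this last chain. \Cref{non_closed} shows $\T$ itself need not be closed under embeddings, so we must avoid reasoning inside $\T$. Instead we apply preservation to $\C$ directly, using $\T\subseteq\C$, which holds by invariance, and then return from $\unr^L(\M,w)$ to $(\M,w)$ again by invariance. The other step needing care is the identity $\unr^L(\mathfrak T,t)\cong(\mathfrak T,t)$ for trees of height at most $L$. This holds because in a tree every world has a unique path from the root, so paths of length $\le L$ correspond bijectively to worlds.
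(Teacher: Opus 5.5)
Your argument follows the paper's own proof step for step: minimal $L$-unravellings via \Cref{embedding_wqo}, a disjunction of characteristic formulae, and the chain $\T\subseteq\C$, preservation, invariance for the converse. It therefore inherits the paper's gap. The step ``$(\M,w)\models\varphi^L_{\mathfrak T,t}$, so by \Cref{lem:preserve} $(\mathfrak T,t)$ embeds into $\unr^L(\M,w)$'' uses a direction of \Cref{lem:preserve} that is false for depth $\geq 2$. The conjunct $\Diamond^{\geq |C_j|}\varphi^\ell_{\M,w_j}$ forces distinct witnesses only \emph{within} one class $C_j$. Without negated modalities, the formulae for different classes are not mutually exclusive once $\ell\geq 1$. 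So one successor of $v$ can witness several classes, and the glued map is not injective.

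Concretely, let $\prop=\{p\}$ with $p$ false everywhere. Let $(\mathfrak T,t)$ be the tree whose root has two children $a,b$, where $a$ has one child and $b$ has two. Let $(\N,v)$ be the tree with $vRu$, $uRx$, $uRy$. Since $a\not\equiv^{\GML}_1 b$,
\[
\varphi^2_{\mathfrak T,t}=\neg p\wedge\Diamond^{\geq 1}(\neg p\wedge\Diamond^{\geq 1}\neg p)\wedge\Diamond^{\geq 1}(\neg p\wedge\Diamond^{\geq 2}\neg p).
\]
This holds at $(\N,v)$, with $u$ witnessing both diamonds. Yet the six-world tree $(\mathfrak T,t)$ cannot embed into the four-world tree $\unr^2(\N,v)=(\N,v)$.

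This breaks the proof of the theorem, not just the lemma. Take $\C=\{(\M,w)\mid(\mathfrak T,t)\text{ embeds into }\unr^2(\M,w)\}$. It is invariant under $2$-unravelling. It is preserved under embeddings, because an embedding $f$ induces the embedding $(w,v_1,\dots,v_k)\mapsto(f(w),f(v_1),\dots,f(v_k))$ of unravellings. The only minimal element of $\T$ up to isomorphism is $(\mathfrak T,t)$, so your $\varphi_\C$ is $\varphi^2_{\mathfrak T,t}$. But $(\N,v)\models\varphi_\C$ while $(\N,v)\notin\C$.

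The statement itself survives if you strengthen the characteristic formula to encode Hall's condition. Write $\mathrm{succ}(w)$ for the finite set of $R$-successors of $w$ and put
\[
\varphi^{\ell+1}_{\M,w}:=\varphi^0_{\M,w}\wedge\bigwedge_{\emptyset\neq S\subseteq\mathrm{succ}(w)}\Diamond^{\geq |S|}\bigvee_{s\in S}\varphi^{\ell}_{\M,s}.
\]
This is still an $\EGML$ formula of depth $\ell+1$.

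If $\unr^{\ell+1}(\M,w)$ embeds into $\unr^{\ell+1}(\N,v)$, the depth-$1$ images of the nodes for $s\in S$ are $|S|$ distinct successors of $v$. Each satisfies the disjunct of its preimage by induction, so every conjunct holds.

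Conversely, the conjuncts say exactly that a certain bipartite graph satisfies Hall's condition. That graph joins each $s\in\mathrm{succ}(w)$ to the successors $u$ of $v$ with $(\N,u)\models\varphi^\ell_{\M,s}$. Hall's marriage theorem then gives an injective choice of witnesses. Glue the subtree embeddings given by induction, sending root to root. The result is an injective homomorphism between trees that maps root to root and reflects labels, hence an embedding.

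With this version of \Cref{lem:preserve}, the rest of your proof goes through verbatim. The injective-homomorphism analogue, \Cref{lem:preserve-injective}, needs the same repair; the homomorphism case is unaffected because witnesses may be shared.
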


\begin{proof}

\smallskip
\noindent
\emph{($\Rightarrow$)}
Let $\T = \{ \unr^L(\M, w) \mid (\M,w) \in \C\}$ and let $\preceq$ be the embedding relation. 
Since $\C$ is invariant under
$L$-unravelling, we have that $\T \subseteq \C$.
By  \Cref{embedding_wqo}, $(\T, \preceq)$ is a wqo and therefore admits
only finitely many $\preceq$-minimal elements up to isomorphism. Let \Tmin contain exactly
one representative for each such minimal isomorphism type. 
For each $(\mathfrak T,v) \in \Tmin$, we construct the $\EGML$ characteristic formula $\varphi_{\mathfrak T,v}^L$ and define 
$\varphi_{\C} = \bigvee_{(\mathfrak T,v) \in \Tmin} \varphi_{\mathfrak T,v}^L$. 
Since the  disjunction is finite and each $\varphi_{\mathfrak T,v}^L$ is an \EGML{}-formula, $\varphi_{\mathcal{C}}$ is a (finite)  \EGML{}-formula. 
We show that $\varphi_{\C}$ defines $\C$.

Let $(\M, w) \in \C$.  Then  $\unr^L(\M, w) \in \T$. Since 
$(\T,\preceq)$ is a wqo, there exists a $\preceq$-minimal element $(\mathfrak T,v) \in \Tmin$ such that $(\mathfrak T,v) \preceq \unr^L(\mathfrak M, w)$.
Because $(\mathfrak T,v)$ is tree-shaped of height at most $L$, we have $\unr^L(\mathfrak T,v) = (\mathfrak T,v) $.
Hence, $\unr^L(\mathfrak T,v) \preceq \unr^L(\mathfrak M, w)$ and 
by \Cref{lem:preserve} we obtain  $(\M, w) \models \varphi_{\mathfrak T,v}^L$.
It follows that $(\M,w) \models \varphi_{\C}$.

Conversely, suppose that $(\M,w) \models  \varphi_{\C}$. Then, $(\M,w) \models  \varphi_{\mathfrak T,v}^L$
for some $(\mathfrak T,v) \in \Tmin$. 
By \Cref{lem:preserve}, this implies  $\unr^L(\mathfrak T,v) \preceq \unr^L(\M, w)$.
Since $(\mathfrak{T},v)$ is tree-shaped of height at most $L$, we again have
$\unr^L(\mathfrak{T},v)=(\mathfrak{T},v)$, and hence
$(\mathfrak{T},v)\preceq \unr^L(\M,w)$.
By construction, $\Tmin \subseteq \T \subseteq \C$, so
$(\mathfrak T,v) \in \C$. Since $\C$ is preserved under embeddings, we have $\unr^L(\M, w) \in \C$.
Finally, invariance of $\C$ under $L$-unravelling yields $(\M, w) \in \C$.

\smallskip
\noindent
\emph{($\Leftarrow$)} Suppose that $\C$ is definable by an $\EGML$-formula $\varphi$ of
depth at most $L$. Then $\C$ is invariant under $L$-unravellings: for every pointed model,
$(\M,w)\models\varphi$ if and only if $\unr^L(\M,w)\models\varphi$
since the truth of $\varphi$ depends only on the unravelling up to depth $L$.
We argue that $\C$ is also preserved under embeddings.
Let $(\M,w)\models\varphi$ and suppose that there exists an embedding
witnessing $(\M,w)\preceq(\N,v)$. Since embeddings preserve propositional labels and map successors injectively, a routine induction on the structure of $\varphi$ shows that $(\N,v) \models \varphi$. In particular, for each subformula of the form $\Diamond^{\ge k}\psi$ occurring in $\varphi$, 
the existence of $k$ successors satisfying $\psi$ in $(\M,w)$ implies the existence of
$k$ successors satisfying $\psi$ in $(\N,v)$. Thus, $\C$ is preserved under embeddings.
\end{proof}

As a consequence of \Cref{th:preserve}, we obtain the following
equirank embedding preservation theorem for $\GML$.

\begin{theorem}\label{embeddingGML}
The following are equivalent for any \GML{} formula $\varphi$ of depth at most $L$:
\begin{itemize}
\item $\varphi$ is preserved under embeddings,
\item $\varphi$ is equivalent to a $\EGML$ formula of depth at most $L$.
\end{itemize}
\end{theorem}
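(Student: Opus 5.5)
The plan is to derive Theorem~\ref{embeddingGML} as a direct corollary of Theorem~\ref{th:preserve}, applied to the class $\C_\varphi$ of pointed models of $\varphi$.

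\emph{Left to right.} Suppose $\varphi$ is a \GML{} formula of depth at most $L$ that is preserved under embeddings. Let $\C_\varphi=\{(\M,w)\mid (\M,w)\models\varphi\}$. This class is closed under isomorphism, and it is preserved under embeddings by assumption. We also need invariance under $L$-unravelling. The preliminaries give this for formulae of depth exactly $L$: $(\M,w)\models\varphi$ iff $\unr^L(\M,w)\models\varphi$. For depth $d<L$ we first note $\varphi\equiv\varphi$ at depth $d$, so truth depends only on $\unr^d$. Since $\unr^d(\unr^L(\M,w))\cong\unr^d(\M,w)$, invariance under $L$-unravelling follows. With both hypotheses in hand, Theorem~\ref{th:preserve} yields an \EGML{} formula $\varphi_{\C}$ of depth at most $L$ defining $\C_\varphi$. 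Hence $\varphi_\C$ and $\varphi$ have the same models, i.e.\ they are equivalent.

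\emph{Right to left.} Suppose $\varphi$ is equivalent to an \EGML{} formula $\psi$ of depth at most $L$. The class defined by $\psi$ coincides with the class defined by $\varphi$. By the ($\Leftarrow$) direction of Theorem~\ref{th:preserve}, the class defined by $\psi$ is preserved under embeddings, so $\varphi$ is too. Alternatively one can invoke the routine structural induction directly: embeddings preserve literals, since they are isomorphisms onto induced submodels and hence reflect negated propositions. They also map distinct successors to distinct successors, so they preserve $\Diamond^{\ge k}$.

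The only point needing any care is that ``equivalent'' is read over finite pointed models, which is the paper's standing assumption, and that Theorem~\ref{th:preserve} quantifies over arbitrary classes, so applying it to $\C_\varphi$ is legitimate. The depth bound transfers verbatim because Theorem~\ref{th:preserve} already delivers depth at most $L$. I do not expect any genuine obstacle; the mild subtlety is the depth-$d<L$ unravelling-invariance remark above.
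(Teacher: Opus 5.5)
Your proposal is correct and is essentially the paper's own proof: both directions are immediate corollaries of Theorem~\ref{th:preserve} applied to the class of models of $\varphi$. For ($\Rightarrow$) the hypothesis is supplied by the $L$-unravelling invariance of \GML{} formulae of depth $\le L$, and for ($\Leftarrow$) by the fact that \EGML{} formulae are preserved under embeddings, which transfers along logical equivalence; your extra check for depth $d<L$ via $\unr^d(\unr^L(\M,w))\cong\unr^d(\M,w)$ just spells out what the paper asserts in one line.
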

\begin{proof}
\smallskip
\noindent
\emph{($\Rightarrow$)} Let $\varphi$ be a $\GML$ formula of depth at most $L$ that is preserved under
embeddings. Every $\GML$ formula of depth at most $L$ is invariant under $L$-unravelling.
Hence, by Theorem~\ref{th:preserve}, $\varphi$ is equivalent to an $\EGML$ formula
of depth at most $L$.

\smallskip
\noindent
\emph{($\Leftarrow$)} Conversely, suppose that $\varphi$ is equivalent to a $\EGML$ formula of depth at most $L$. By the $(\Leftarrow)$ direction of \Cref{th:preserve}, every $\EGML$ formula
is preserved under embeddings. Since preservation under embeddings is invariant under logical equivalence, $\varphi$ is preserved under embeddings.
\end{proof}

%%%%%%%%%%%%%%%%%%%%%
\subsection{Preservation Under Injective Homomorphisms}
%%%%%%%%%%%%%%%%%%%%%

We next characterise unravelling-invariant  classes  that are preserved under injective homomorphisms.
Recall that every embedding is an injective homomorphism, but not conversely. 
This suggests that the logical language defining classes preserved under injective homomorphisms should be 
strictly weaker than the one required for preservation under embeddings.
By \Cref{th:preserve}, the latter language is \EGML.
As we show next, the appropriate language in the case of injective homomorphisms is \EPGML{}, the existential-positive fragment of \GML{}. 

We first define characteristic $\EPGML$-formulae,  obtained from the characteristic  \EGML{} formulae by deleting all negative literals.

\begin{definition}
Let $(\M,w)$ be a pointed model with $\M = (W,R,V)$. The \emph{characteristic $\EPGML$-formula} $\psi_{\M,w}^{\ell}$ of depth $\ell$  is defined as the formula obtained from the
characteristic $\EGML$-formula $\varphi_{\M,w}^{\ell}$
(see \Cref{def:characteristic-existential}) by removing all negative literals $\neg p$.
\end{definition}

Note that  $\psi_{\M,w}^{\ell}$ is defined  in the same way as
$\varphi_{\M,w}^{\ell}$, except that the conjunction
$\bigwedge \{ \neg p \mid (\M,w)\not\models p \}$
is omitted.
As a result, $\psi_{\M,w}^{\ell}$ contains no negations and is therefore an \EPGML{}-formula.

While $\varphi_{\M,w}^{\ell}$ guarantees the existence of an embedding between $\ell$-unravellings of models, the formula $\psi_{\M,w}^{\ell}$ guarantees the existence of an injective homomorphism between these unravellings. This is captured by the following lemma.

\begin{lemma}\label{lem:preserve-injective}
Let $(\M, w)$  and $(\N, v)$ be pointed models and let  $\ell \in \mathbb{N}$. Then,
$(\N,v) \models \psi_{\M,w}^\ell$ 
if and only if there exists an injective homomorphism from $\unr^\ell(\M, w)$ to $\unr^\ell(\N, v)$.
\end{lemma}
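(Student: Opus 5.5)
The plan is to follow the proof of \Cref{lem:preserve} step by step, by induction on $\ell$. Embeddings are replaced by injective homomorphisms, and the two-sided literal condition at each node is replaced by a one-sided inclusion of labels. Throughout, write $\preceq$ for the injective-homomorphism relation between pointed models. Recall that maps between pointed models send the distinguished world to the distinguished world.

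\emph{Base case ($\ell=0$).} The formula $\psi^0_{\M,w}$ is the conjunction of the propositions true at $w$. So $(\N,v)\models\psi^0_{\M,w}$ holds iff every proposition true at $w$ is true at $v$. Both $\unr^0(\M,w)$ and $\unr^0(\N,v)$ are single worlds with no edges. Hence the only candidate map sends root to root, it is trivially injective, and it is a homomorphism exactly under this label inclusion.

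\emph{Inductive step ($\Rightarrow$).} Let $C_1,\dots,C_m$ be the $\equiv^{\GML}_\ell$-classes of successors of $w$, with representatives $w_j$ and enumerations $C_j=\{w_{j,1},\dots,w_{j,|C_j|}\}$, as in \Cref{lem:preserve}. Assume $(\N,v)\models\psi^{\ell+1}_{\M,w}$.
\begin{enumerate}
\item The conjunct $\psi^0_{\M,w}$ gives the label inclusion at the roots.
\item For each $j$, the conjunct $\Diamond^{\geq|C_j|}\psi^\ell_{\M,w_j}$ yields pairwise distinct successors $v_{j,1},\dots,v_{j,|C_j|}$ of $v$ satisfying $\psi^\ell_{\M,w_j}$.
\item These witnesses must be chosen pairwise distinct across all pairs $(j,k)$, including across different $j$. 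This works because the classes $C_j$ are distinct $\equiv^\GML_\ell$-classes, as detailed in the final paragraph.
\item By the inductive hypothesis, there is an injective homomorphism $\unr^\ell(\M,w_j)\preceq\unr^\ell(\N,v_{j,k})$.
\item Since $w_{j,k}\equiv^\GML_\ell w_j$, the unravellings $\unr^\ell(\M,w_{j,k})$ and $\unr^\ell(\M,w_j)$ are isomorphic, as used in \Cref{lem:preserve}. Composing with this isomorphism gives an injective homomorphism on $\unr^\ell(\M,w_{j,k})$.
\item Glue these maps together with root$\mapsto$root. Edges from the root to the children are preserved because each $v_{j,k}$ is a successor of $v$. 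The glued map is injective because the images lie in the disjoint subtrees below the distinct depth-1 nodes $v_{j,k}$, and the root is the only depth-0 node.
\end{enumerate}

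\emph{Inductive step ($\Leftarrow$).} Let $f$ be an injective homomorphism $\unr^{\ell+1}(\M,w)\to\unr^{\ell+1}(\N,v)$.
\begin{enumerate}
\item $f$ sends the root to the root, so label preservation there gives $\psi^0_{\M,w}$.
\item Edge preservation sends each depth-1 node to a successor of the root, and by induction along paths it sends the subtree below a child $c$ into the subtree below $f(c)$.
\item The root of $\unr^{\ell+1}(\N,v)$ is its only depth-0 node, so each depth-$d$ node goes to a depth-$d$ node. In particular, restricting $f$ to the subtree of the node for $w_{j,k}$ gives an injective homomorphism into $\unr^\ell(\N,v_{j,k})$, where $v_{j,k}$ is the image of that node.
\item By the inductive hypothesis, $(\N,v_{j,k})\models\psi^\ell_{\M,w_{j,k}}=\psi^\ell_{\M,w_j}$. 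The equality holds because the construction depends only on the class.
\item Injectivity of $f$ makes the $v_{j,k}$ pairwise distinct, so $\Diamond^{\geq|C_j|}\psi^\ell_{\M,w_j}$ holds at $v$.
\end{enumerate}

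\emph{Main obstacle.} The delicate step is item 3 of ($\Rightarrow$). For different classes $j\neq j'$, the positive formulae $\psi^\ell_{\M,w_j}$ and $\psi^\ell_{\M,w_{j'}}$ may both hold at the same successor of $v$, since one may imply the other once negative literals are dropped. The graded counts then do not immediately give globally distinct witnesses. The same issue is implicitly present in \Cref{lem:preserve}.

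I would first try to resolve it via a Hall-type matching argument on the bipartite graph between successors of $w$ and successors of $v$. If that fails, the fallback is to strengthen the characteristic formula, grouping classes ordered by implication and counting the union. I expect this matching issue, rather than the rest of the induction, to be where the real work lies.
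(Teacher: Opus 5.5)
You have located the right spot, but item~3 of ($\Rightarrow$) is not a delicate step that a cleverer argument can rescue. With $\psi$ defined as in the paper, the lemma is false, so neither Hall's theorem nor anything else can close the gap. Your ($\Leftarrow$) direction is fine as written.

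Here is a counterexample with $\ell=1$.
\begin{itemize}
\item Let $w$ be labelled $\{q\}$ with two successors: $a$, labelled $\{p,q\}$, and $b$, labelled $\{q\}$. They lie in different $\equiv^{\GML}_0$-classes, so $\psi^1_{\M,w}=q\wedge\Diamond^{\geq 1}(p\wedge q)\wedge\Diamond^{\geq 1}q$.
\item Let $\N$ consist of $v$, labelled $\{q\}$, with a single successor $c$ labelled $\{p,q\}$. Then $(\N,v)\models\psi^1_{\M,w}$.
\item Yet $\unr^1(\M,w)$ has three worlds and $\unr^1(\N,v)$ only two, so no injective map exists.
\end{itemize}

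Hall cannot help for a structural reason. Its condition needs, for every set $J$ of classes, at least $\sum_{j\in J}|C_j|$ successors of $v$ satisfying $\bigvee_{j\in J}\psi^\ell_{\M,w_j}$. The formula $\psi^{\ell+1}_{\M,w}$ asserts this only for singletons $J$. Because the formulae of different classes are not mutually exclusive (one may imply another), a single successor can witness several classes at once.

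The paper's own proof has exactly this gap. It claims the targets lie in disjoint subtrees rooted at distinct depth-$1$ nodes, but the graded conjunct only makes the $v_{j,k}$ distinct within a single class $C_j$. Your suspicion about \Cref{lem:preserve} is also correct; there the failure appears at depth~$2$. Take all worlds unlabelled, and let $w$ have children $a,b$ with one and two children respectively. Compare this with $v$ having a single child that itself has two children: $v$ satisfies $\varphi^2_{\M,w}$, but $6$ nodes cannot embed into $4$.

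Your fallback is the right repair, but it proves a different statement. Redefine
\[
\psi^{\ell+1}_{\M,w} := \psi^0_{\M,w}\wedge\bigwedge_{\emptyset\neq J\subseteq\{1,\dots,m\}}\Diamond^{\geq \sum_{j\in J}|C_j|}\Bigl(\bigvee_{j\in J}\psi^\ell_{\M,w_j}\Bigr),
\]
which is still an $\EPGML$-formula of depth $\ell+1$. Your Hall argument now goes through.
\begin{itemize}
\item Join $u\in C_j$ to a successor $c$ of $v$ iff $(\N,c)\models\psi^\ell_{\M,w_j}$.
\item For any set $S$ of successors of $w$, let $J$ be the set of classes meeting $S$. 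The neighbourhood of $S$ is then exactly the set of successors satisfying $\bigvee_{j\in J}\psi^\ell_{\M,w_j}$, which has at least $\sum_{j\in J}|C_j|\geq|S|$ elements.
\item The matching, the inductive hypothesis and your gluing step then give the injective homomorphism.
\end{itemize}
Your ($\Leftarrow$) argument adapts directly. The map $f$ sends the $\sum_{j\in J}|C_j|$ depth-$1$ nodes coming from $\bigcup_{j\in J}C_j$ to distinct depth-$1$ nodes, each satisfying the disjunction.

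Be aware that this change breaks \Cref{lem:ML-characteristic}: the union conjuncts carry thresholds $|J|>1$ even in pruned trees. The $\ML$-level results that rely on that lemma would therefore need a separate argument.
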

\begin{proof}
The proof is by induction on $\ell$ and closely parallels the proof of Lemma~\ref{lem:preserve};
we highlight only the points where injective homomorphisms replace embeddings.
Throughout this proof, let $\preceq$
be the injective homomorphism relation.

\smallskip
\noindent
\emph{Base case ($\ell=0$).} The formula
$\psi^0_{\M,w}$ is the conjunction of all propositional symbols satisfied at $w$. Thus,
$(\N,v)\models\psi^0_{\M,w}$ if and only if 
$(\N,v)$ satisfies at least the same propositions as $(\M,w)$.
Since $\unr^0(\M,w)$ and $\unr^0(\N,v)$ each consist of a single world, this
holds if and only if 
$\unr^0(\M,w) \preceq \unr^0(\N,v)$.

\smallskip
\noindent
\emph{Inductive step.} 
Assume that the claim holds for some $\ell \geq 0$, and consider $\ell+1$.
We use the same notation as in the proof of \Cref{lem:preserve}. 

\smallskip
\noindent
\emph{($\Rightarrow$)}
Assume that $(\N,v) \models \psi^{\ell+1}_{\M,w}$. Then $(\N,v)\models\psi^0_{\M,w}$, and for each equivalence
class $C_j$, there exist $|C_j|$  distinct successors $v_{j,1},\dots,v_{j,|C_j|}$ of $v$ such that
$(\N,v_{j,k})\models\psi^\ell_{\M,w_j}$. By the inductive hypothesis, for each $j,k$,
there exists an injective homomorphism and hence $\unr^\ell(\M,w_j) \preceq \unr^\ell(\N,v_{j,k})$. As in the proof of 
\Cref{lem:preserve}, these homomorphisms can be combined together into a single injective homomorphism from $\unr^{\ell+1}(\M,w)$ to $\unr^{\ell+1}(\N,v)$  by
mapping the roots and using the homomorphisms above on the corresponding subtrees. 
Since the targets lie in pairwise disjoint subtrees rooted at distinct depth-$1$ nodes,
the resulting homomorphism is injective.
Thus, $\unr^{\ell+1}(\M,w) \preceq \unr^{\ell+1}(\N,v)$.

\smallskip
\noindent
\emph{($\Leftarrow$)}
Conversely, suppose that there exists an injective homomorphism
$f$ from $\unr^{\ell+1}(\M,w)$ to $\unr^{\ell+1}(\N,v)$. Then $f$ maps the root to the root, which implies $(\N,v)\models\psi^0_{\M,w}$.
Fix $j \in \{1,\ldots,m\}$. Injectivity of $f$ forces the $|C_j|$ distinct
depth-$1$ nodes below the root of $\unr^{\ell+1}(\M,w)$ to be mapped 
to $\mid C_j \mid$ distinct nodes of depth 1 of  $\unr^{\ell+1}(\N,v)$.
Each such node corresponds to a successor $v_{j,k}$ of $v$ in $\N$.
Restricting $f$ to the corresponding subtrees yields injective homomorphisms from $\unr^\ell(\M,w_{j,k})$ to $\unr^\ell(\N,v_{j,k})$. 
By the induction hypothesis we obtain $(\N,v_{j,k}) \models \psi^\ell_{\M,w_j}$ 
for all $k$. Since the worlds $v_{j,k}$ are pairwise distinct successors of $v$, we conclude that $(\N,v) \models \Diamond^{\ge |C_j|}\psi^\ell_{\M,w_j}$.
As this holds for each $j$, $(\N,v) \models \psi^{\ell+1}_{\M,w}$.
\end{proof}

Using the results above on characteristic $\EGML$-formulae, we can now establish the following preservation theorem for unravelling-invariant classes preserved under injective homomorphisms.

\begin{theorem}\label{th:exists-positive-unravelling}
The following are equivalent for any class  $\C$ of pointed models and any $L\in \mathbb{N}$:
\begin{itemize}
\item $\C$ is preserved under injective homomorphisms and invariant under $L$-unravelling,
\item $\mathcal{C}$ is definable by an $\EPGML$-formula of depth at most $L$.
\end{itemize}

\end{theorem}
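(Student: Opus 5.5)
The proof follows the template of \Cref{th:preserve}, with \Cref{lem:preserve-injective} replacing \Cref{lem:preserve} and \Cref{cor:finiteness} replacing \Cref{embedding_wqo}.

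For the direction ($\Rightarrow$), let $\preceq$ be the injective homomorphism relation and set $\T = \{\unr^L(\M,w) \mid (\M,w)\in\C\}$. Invariance under $L$-unravelling gives $\T\subseteq\C$. By \Cref{cor:finiteness}, $(\T,\preceq)$ is a wqo, so it has only finitely many minimal isomorphism types. Collect one representative of each in $\Tmin$ and define
\[
\psi_\C=\bigvee_{(\mathfrak T,v)\in\Tmin}\psi^L_{\mathfrak T,v}.
\]
This is a finite disjunction of negation-free formulae of depth at most $L$, hence an $\EPGML$-formula of depth at most $L$.

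To show $\C$ is contained in the models of $\psi_\C$, take $(\M,w)\in\C$. Well-foundedness of the wqo gives some $(\mathfrak T,v)\in\Tmin$ with $(\mathfrak T,v)\preceq\unr^L(\M,w)$. Since $(\mathfrak T,v)$ is a tree of height at most $L$, we have $\unr^L(\mathfrak T,v)=(\mathfrak T,v)$, so \Cref{lem:preserve-injective} yields $(\M,w)\models\psi^L_{\mathfrak T,v}$.

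For the converse containment, suppose $(\M,w)\models\psi^L_{\mathfrak T,v}$ for some $(\mathfrak T,v)\in\Tmin$. Then \Cref{lem:preserve-injective} gives an injective homomorphism from $(\mathfrak T,v)$ into $\unr^L(\M,w)$. Since $(\mathfrak T,v)\in\C$ and $\C$ is preserved under injective homomorphisms, $\unr^L(\M,w)\in\C$. Unravelling invariance then gives $(\M,w)\in\C$.

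For the direction ($\Leftarrow$), let $\varphi$ be an $\EPGML$-formula of depth at most $L$. Invariance under $L$-unravelling holds because any $\GML$-formula of depth at most $L$ is invariant under $L$-unravelling. For preservation, fix an injective homomorphism $f$ from $(\M,w)$ to $(\N,v)$ and prove by structural induction on $\varphi$ the stronger statement that $(\M,u)\models\varphi$ implies $(\N,f(u))\models\varphi$ for every world $u$. The cases are as follows.
\begin{itemize}
\item Atoms transfer because homomorphisms preserve valuations.
\item Conjunction and disjunction are immediate.
\item For $\Diamond^{\ge k}\chi$, take $k$ distinct successors $u_1,\dots,u_k$ of $u$ satisfying $\chi$. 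Their images $f(u_1),\dots,f(u_k)$ are successors of $f(u)$, they are pairwise distinct by injectivity, and they satisfy $\chi$ by the induction hypothesis.
\end{itemize}
The absence of negation is what makes this work: negative literals would not transfer along a mere homomorphism.

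The main point needing care is the middle step of ($\Rightarrow$), namely that minimality is taken with respect to $\preceq$ within $\T$, which need not be closed under $\preceq$ (cf.\ \Cref{non_closed}). The argument only uses that each element of $\T$ lies above some minimal one, which is guaranteed by well-foundedness, together with $\Tmin\subseteq\C$. Preservation is applied to $\C$, not to $\T$, so the lack of closure does not matter.
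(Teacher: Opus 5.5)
Your proposal follows the paper's proof step for step. Two of its three parts are sound: the ($\Leftarrow$) direction, and the inclusion of $\C$ in the models of $\psi_\C$, which uses only the \emph{if} direction of \Cref{lem:preserve-injective}.

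The gap is in the converse inclusion. There you pass from $(\M,w)\models\psi^L_{\mathfrak T,v}$ to an injective homomorphism from $(\mathfrak T,v)$ into $\unr^L(\M,w)$. This uses the \emph{only if} direction of \Cref{lem:preserve-injective}, and that direction is false for the characteristic formulae as defined. Each conjunct $\Diamond^{\geq|C_j|}\psi^\ell_{\M,w_j}$ gives $|C_j|$ distinct witnesses for its own class, but witnesses for different classes may coincide. Once negative literals are deleted, the formulae attached to different $\equiv^{\GML}_\ell$-classes are no longer mutually exclusive; one may even imply another.

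Here is a concrete counterexample. Let $\prop=\{p,q\}$, and let $(\mathfrak T,r)$ be a root labelled $\{q\}$ with two children labelled $\{p,q\}$ and $\{q\}$. Then $\psi^1_{\mathfrak T,r}=q\land\Diamond^{\geq 1}(p\land q)\land\Diamond^{\geq 1}q$. Let $\N$ be the two-world model whose root $v$ is labelled $\{q\}$ and has a single child labelled $\{p,q\}$. Then $(\N,v)\models\psi^1_{\mathfrak T,r}$, yet the three-world tree $\mathfrak T$ has no injective homomorphism into the two-world $\unr^1(\N,v)$.

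The failure carries over to the theorem itself. Take $\C$ to be the class of models of $q\land\Diamond^{\geq 1}(p\land q)\land\Diamond^{\geq 2}q$, with $L=1$. One checks that $\Tmin=\{(\mathfrak T,r)\}$, so your $\psi_\C$ is exactly $\psi^1_{\mathfrak T,r}$. That formula holds in $(\N,v)$, but $(\N,v)\notin\C$. Your closing remark about $\T$ not being closed under $\preceq$ is correct, but the delicate point is here.

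The statement is still true; you need characteristic formulae that encode a matching condition rather than one count per class. Keep $\psi^0_{\M,w}$ and set
\[
\hat\psi^{\ell+1}_{\M,w}\;=\;\psi^0_{\M,w}\;\land\bigwedge_{\emptyset\neq S\subseteq\{1,\dots,m\}}\Diamond^{\geq \sum_{j\in S}|C_j|}\,\bigvee_{j\in S}\hat\psi^{\ell}_{\M,w_j},
\]
which is still an $\EPGML$-formula of depth $\ell+1$.

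The \emph{if} direction works as before. The successors of $w$ in classes from $S$ are mapped to pairwise distinct successors of $v$, each of which satisfies the disjunction.

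For the \emph{only if} direction, join a successor $x\in C_j$ of $w$ to a successor $u$ of $v$ whenever $(\N,u)\models\hat\psi^\ell_{\M,w_j}$. Let $X$ be any set of successors of $w$ whose classes form $S$. The conjunct for $S$ gives $X$ at least $\sum_{j\in S}|C_j|\geq|X|$ neighbours, which is Hall's condition. Hall's marriage theorem then yields an injective assignment $x\mapsto u_x$. The inductive hypothesis, together with $\unr^\ell(\M,x)\cong\unr^\ell(\M,w_j)$, gives injective homomorphisms into the pairwise disjoint subtrees below the $u_x$.

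With this corrected lemma, your argument goes through verbatim. The same cross-class overlap already breaks \Cref{lem:preserve} from depth $2$ on, so the template you borrowed from \Cref{th:preserve} has the same defect.
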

\begin{proof}

The proof follows the same strategy as that of Theorem \ref{th:preserve}, but now exploits \Cref{lem:preserve-injective} instead of \Cref{lem:preserve}.

\smallskip
\noindent
\emph{($\Rightarrow$)}
Let $\T=\{\,\unr^L(\M,w)\mid(\M,w)\in\C\,\}$ and
let $\preceq$ be the injective homomorphism relation.
Since $\C$ is invariant under $L$-unravellings, we have $\T \subseteq \C$. By  \Cref{cor:finiteness}, $(\T, \preceq)$ is a wqo and hence admits only finitely many $\preceq$-minimal elements up to isomorphism. Let \Tmin contain 
one representative for each such minimal isomorphism type. 
For each $(\mathfrak T,v) \in \Tmin$, we construct the $\EPGML$ characteristic formula $\psi_{\mathfrak T,v}^L$ and define 
$\psi_{\C} = \bigvee_{(\mathfrak T,v) \in \Tmin} \psi_{\mathfrak T,v}^L$. 
The argument showing that $\psi_{\C}$ defines $\C$ is identical to the corresponding one in the proof of  \Cref{th:preserve}, with \Cref{lem:preserve-injective} used in place of    \Cref{lem:preserve}.

\smallskip
\noindent
\emph{($\Leftarrow$)} Suppose that $\C$ is definable by an $\EPGML$-formula $\varphi$ of
depth at most $L$. Then $\C$ is invariant under $L$-unravellings: for every
pointed model  $(\M,w)$,  
$(\M,w)\models\varphi$ if and only if $\unr^L(\M,w)\models\varphi$. 
We now show that $\C$ is preserved under injective homomorphisms.
Let $(\M,w)\models\varphi$ and let $f$ be an
injective homomorphism
 witnessing $(\M,w)\preceq(\N,v)$. Since injective homomorphisms  preserve propositional labels monotonically and map successors injectively, a routine induction on the structure of $\varphi$ shows that $(\N,v) \models \varphi$, exactly as in the proof of Theorem \ref{th:preserve}.
\end{proof}

We now use \Cref{th:exists-positive-unravelling} to show the equirank preservation theorem for \GML{} with respect to injective homomorphisms.

\begin{theorem}\label{injectiveGML}
The following are equivalent for any \GML{}-formula $\varphi$ of depth at most $L$:
\begin{itemize}
\item $\varphi$ is preserved under injective homomorphisms,
\item $\varphi$ is equivalent to a $\EPGML$-formula of depth at most $L$.
\end{itemize}
\end{theorem}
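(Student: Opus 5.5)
The plan mirrors the proof of \Cref{embeddingGML}, with \Cref{th:exists-positive-unravelling} playing the role of \Cref{th:preserve}. We prove the two directions separately. Throughout, we identify $\varphi$ with the class $\C_\varphi$ of finite pointed models satisfying it.

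\emph{($\Rightarrow$)} Let $\varphi$ be a $\GML$-formula of depth at most $L$ that is preserved under injective homomorphisms. First, $\C_\varphi$ is invariant under $L$-unravelling, since for every $\GML$-formula of depth at most $L$ we have $(\M,w)\models\varphi$ iff $\unr^L(\M,w)\models\varphi$, as recalled in the preliminaries. Together with preservation under injective homomorphisms, the hypotheses of the ($\Rightarrow$) direction of \Cref{th:exists-positive-unravelling} hold for $\C_\varphi$ and this $L$. That theorem yields an $\EPGML$-formula $\psi$ of depth at most $L$ defining $\C_\varphi$. Hence $\varphi$ and $\psi$ are satisfied by exactly the same finite pointed models, i.e.\ they are equivalent, since all models in this paper are finite.

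\emph{($\Leftarrow$)} Suppose $\varphi$ is equivalent to some $\EPGML$-formula $\psi$ of depth at most $L$. By the ($\Leftarrow$) direction of \Cref{th:exists-positive-unravelling}, the class defined by $\psi$ is preserved under injective homomorphisms. Since $\varphi$ and $\psi$ define the same class, $\varphi$ is preserved as well.

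There is no real obstacle here: all the work sits in \Cref{th:exists-positive-unravelling}, and through it in \Cref{cor:finiteness} and \Cref{lem:preserve-injective}. The only point needing care is the routine structural induction behind the ($\Leftarrow$) direction of \Cref{th:exists-positive-unravelling}. The case to check is $\Diamond^{\ge k}\chi$: if $w$ has $k$ distinct successors satisfying $\chi$, an injective homomorphism $f$ sends them to $k$ distinct successors of $f(w)$. Each of these satisfies $\chi$ by the induction hypothesis applied to the same $f$, with the induction carried out over all worlds simultaneously rather than only the root. Propositional atoms are preserved because homomorphisms preserve labels, and negation never occurs in $\EPGML$, so no further cases arise.
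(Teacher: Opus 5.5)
Your proposal is correct and coincides with the paper's proof: both directions reduce to \Cref{th:exists-positive-unravelling}, using unravelling-invariance of $\GML$-formulae of depth at most $L$ for ($\Rightarrow$) and invariance of preservation under logical equivalence for ($\Leftarrow$). One caveat, which concerns the result you invoke rather than your reduction: the ($\Rightarrow$) direction of \Cref{lem:preserve-injective}, where you locate the work, fails as stated, because different conjuncts $\Diamond^{\ge|C_j|}\psi^{\ell}_{\M,w_j}$ may share witnesses (the formula $\psi^{2}$ of a root with two children of out-degrees $1$ and $2$ holds at a root with a single child of out-degree $2$), so \Cref{th:exists-positive-unravelling} should instead be proved with Hall-type formulae $\bigwedge_{\emptyset\neq I\subseteq\{1,\dots,n\}}\Diamond^{\ge|I|}\bigvee_{i\in I}\psi_i$ over the $n$ child subtrees, which are still in $\EPGML$ of the same depth and leave the statement itself intact.
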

\begin{proof}
\smallskip
\noindent
\emph{($\Rightarrow$)} Let $\varphi$ be a $\GML$-formula of depth at most $L$ that is preserved under
injective homomorphisms. Every $\GML$-formula of depth at most $L$ is invariant under $L$-unravelling.
Hence, by Theorem~\ref{th:exists-positive-unravelling}, $\varphi$ is equivalent to an $\EPGML$-formula
of depth $ \leq L$.

\smallskip
\noindent
\emph{($\Leftarrow$)} Suppose that $\varphi$ is equivalent to an $\EPGML$-formula of depth at most $L$. By the $(\Leftarrow)$ direction of \Cref{th:exists-positive-unravelling}, every $\EPGML$-formula
is preserved under injective homomorphisms. Since preservation under injective homomorphisms is invariant under logical equivalence, $\varphi$ is preserved under injective homomorphisms.
\end{proof}

It is worth noting that logics corresponding
to both $\GML$ and $\EPGML$
have been studied extensively in the setting of description logics in knowledge representation and reasoning. 
In particular, the description logic corresponding to $\GML$ is $\ALCQ$ \cite{DBLP:conf/dlog/2003handbook}, while the 
logic corresponding to  $\EPGML$ is $\ELUQ$ \cite{morris2025sound}.
As a consequence, \Cref{injectiveGML} shows that  $\ELUQ$ is exactly the fragment of $\ALCQ$ that is preserved under injective homomorphisms.

We now turn to preservation under injective homomorphisms for $\ML$.
To establish the corresponding preservation theorem, we introduce
additional machinery, namely a pruning operation on tree-shaped models and some
of its basic properties.

\begin{definition}\label{def:pruning}
Let $(\M,w)$ be a pointed  tree-shaped  model of height $L$, and let $<$ be a fixed  linear order on the worlds of $\M$. 
We define a sequence of models $\M^{k}$ inductively as follows:
\begin{itemize}
\item $\M^{0} := \M$,
\item $\M^{k+1}$ is obtained
from $\M^{k}$ by removing any subtree $(\mathfrak{T},v)$ of height $k$ for which there is a  sibling subtree $(\mathfrak{T}',v')$ of height $k$ satisfying $(\mathfrak T', v') \equiv_k^{\ML} (\mathfrak T, v)$, and $v' < v$.
\end{itemize}
The \emph{pruning} of $(\M,w)$ is defined as $(\M^L,w)$. 
\end{definition}

An example of a model and its pruning is shown in \Cref{pruning}.
By construction, the pruning of a tree-shaped model is tree-shaped since it is obtained by deleting subtrees.
Moreover, the choice of the linear order $<$ in \Cref{def:pruning} does not affect the resulting pruning up to isomorphism: at each step, exactly one representative from each $\ML_k$-equivalence class of sibling subtrees is retained.

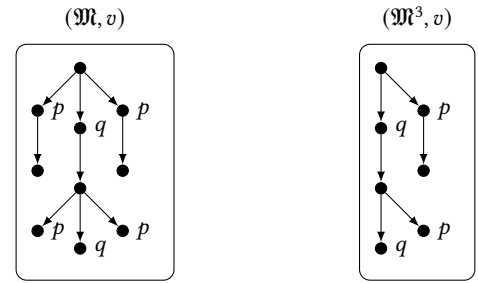
\begin{figure}[ht]
\centering
\begin{tikzpicture}
\tikzset{
>=latex,
node distance=0.8cm,
world/.style={
    draw,
    circle,
    fill,
    minimum size=1.5mm,
    inner sep=0pt,
    scale=1
  }
}

\begin{scope}[local bounding box=M1]
% ----- nodes -----
\node[world, label=right:] (v1) at (0,0) {};
\node[world, below left of=v1, label=right:$p$] (v2)  {};
\node[world, below of=v1, label=right:$q$] (v3)  {};
\node[world, below right of=v1, label=right:$p$] (v4)  {};

\node[world, below of=v3, label=right:] (u)  {};

\node[world, below  of=v2, label=right:] (v2')  {};
\node[world, below of=v4, label=right:] (v4')  {};

\node[world, below left of=u, label=right:$p$] (w2)  {};
\node[world, below of=u, label=right:$q$] (w3)  {};
\node[world, below right of=u, label=right:$p$] (w4)  {};

% ----- edges -----
\draw[->] (v1) -- (v2);
\draw[->] (v1) -- (v4);
\draw[->] (v1) -- (v3);

\draw[->] (v3) -- (u);

\draw[->] (v2) -- (v2');
\draw[->] (v4) -- (v4');

\draw[->] (u) -- (w2);
\draw[->] (u) -- (w4);
\draw[->] (u) -- (w3);
\end{scope}
% ----- frame around the scope -----
\node[draw, rounded corners, fit=(M1), inner sep=6pt] (frame1) {};
% ----- label above -----
\node[above=2pt of frame1] {$(\mathfrak{M},v)$};

\begin{scope}[xshift=4cm, local bounding box=M2]
% ----- nodes -----
\node[world, label=right:] (v1) at (0,0) {};
%\node[world, below left of=v1, label=right:$p$] (v2)  {};
\node[world, below of=v1, label=right:$q$] (v3)  {};
\node[world, below right of=v1, label=right:$p$] (v4)  {};

\node[world, below of=v3, label=right:] (u)  {};

\node[world, below of=v4, label=right:] (v4')  {};

%\node[world, below left of=u, label=right:$p$] (w2)  {};
\node[world, below of=u, label=right:$q$] (w3)  {};
\node[world, below right of=u, label=right:$p$] (w4)  {};
% ----- edges -----
%\draw[->] (v1) -- (v2);
\draw[->] (v1) -- (v4);
\draw[->] (v1) -- (v3);

\draw[->] (v3) -- (u);

\draw[->] (v4) -- (v4');

%\draw[->] (u) -- (w2);
\draw[->] (u) -- (w4);
\draw[->] (u) -- (w3);
\end{scope}
% ----- frame around the scope -----
\node[draw, rounded corners, fit=(M2), inner sep=6pt] (frame2) {};
% ----- label above -----
\node[above=2pt of frame2] {$(\mathfrak{M}^3,v)$};
% \node[above=15pt of frame2] {prunning};

\end{tikzpicture}
\caption{A tree-shaped model and its pruning} 
\label{pruning}
\end{figure}

Next, we establish two key properties of the pruning operation.
First, we show that a tree-shaped model of height $L$ and its pruning satisfy the same $\ML$ of depth at most $L$, that is, they are $L$-bisimilar. Second,
we show that the pruning admits an injective homomorphism into the
original model.

\begin{lemma}\label{lem:bisimilar-trees}
Let $(\M,w)$ be a pointed tree-shaped model of height $L$ and let $(\M^L,w)$ be 
its pruning.  The following hold:
\begin{itemize}
\item $(\M,w) \sim_{L} (\M^L,w)$,
\item there is an injective homomorphism from $(\M^L,w)$ to $(\M,w)$.
\end{itemize}
\end{lemma}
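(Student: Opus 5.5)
The second item is the easier one, so I would dispose of it first. By construction, $\M^L$ is obtained from $\M$ by deleting subtrees, so its worlds form a subset of the worlds of $\M$ that contains $w$. I would take the inclusion map $\iota$ from the worlds of $\M^L$ into those of $\M$. It is injective, and it fixes the root. Edges and valuations of $\M^L$ are restrictions of those of $\M$, so $\iota$ preserves both. Hence $\iota$ is an injective homomorphism, and in fact an embedding. The only point to check is that each step $\M^{k}\to\M^{k+1}$ removes whole subtrees, meaning a node together with all its descendants. This guarantees that what remains is an induced submodel which is still tree-shaped and contains the root.

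For the first item I would show $(\M^k,w)\sim_L(\M^{k+1},w)$ for each $k<L$ and then compose, using that $L$-bisimilarity is transitive. Write $\N=\M^k$ and $\N'=\M^{k+1}$. I would define $Z$ as the identity relation on the worlds of $\N'$, extended as follows. For every removed subtree $(\mathfrak T,u)$ of height $k$, pick the retained sibling $(\mathfrak T',u')$ with $u'<u$ and $(\mathfrak T,u)\equiv^{\ML}_k(\mathfrak T',u')$. Since models are finite, there is a $k$-bisimulation $Z_u$ between them, and $Z$ also includes $Z_u$. One subtlety is that the retained $<$-least representative $u'$ is itself never removed at stage $k$, so it really is in $\N'$. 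I would make this precise by letting $u'$ be the $<$-least element of its $\equiv^{\ML}_k$-class among siblings of height $k$. Equivalence is transitive, so that element has no smaller equivalent sibling and survives.

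Next I would verify the back-and-forth conditions for $Z$. For a pair $(x,x)$ with $x$ not the parent of a removed node, the successors of $x$ are the same in both models, so the conditions are immediate. For a pair $(x,x)$ where $x$ is the parent of a removed $u$, the forth condition for the successor $u$ is met by $u'$, since $u\,Z\,u'$ via $Z_u$. The back condition is trivial because $\N'\subseteq\N$. Pairs inside some $Z_u$ satisfy the conditions because $Z_u$ is a $k$-bisimulation.

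The main obstacle is the depth bookkeeping. $Z_u$ only guarantees the back-and-forth conditions down to depth $k$ below $u$, while the global relation needs them down to depth $L$ below $w$. Here I would use heights. The subtree at $u$ has height exactly $k$, so nodes at distance $k$ from $u$ are leaves in $\N$. If some node at distance $k$ from $u'$ had a successor, that would give a node at depth $k+1$ and contradict that the subtree at $u'$ also has height $k$. So below depth $k$ both sides have no successors, and the conditions hold vacuously. The final composition then gives $(\M,w)=(\M^0,w)\sim_L(\M^L,w)$. If there is trouble with the root's depth offset (the distance measured from $w$ rather than from $u$), this height argument handles it uniformly.
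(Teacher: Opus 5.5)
Your proof is correct, but it is organised differently from the paper's. The paper defines one relation $Z$ from $\M$ straight to $\M^L$. Retained worlds are related to themselves, and each removed world $v$ is sent to the $<$-minimal sibling that caused its removal. Forth and back are then checked on this single map. You instead prove $(\M^k,w)\sim_L(\M^{k+1},w)$ one stage at a time, gluing the identity on $\M^{k+1}$ with a $k$-bisimulation $Z_u$ between each removed sibling subtree and its retained representative, and then compose.

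The one-shot map is shorter, but your decomposition is what actually makes the argument go through. The paper's $Z$ has three problems:
\begin{itemize}
\item It is not defined on descendants of removed roots, since those worlds are removed together with an ancestor rather than because of a sibling.
\item Its forth case for a removed $v$ tacitly needs a bisimulation between the subtrees below $v$ and $v'$, which is precisely your $Z_u$.
\item The representative chosen at stage $k$ can itself vanish at a later stage when an ancestor is pruned, so it need not lie in $W^L$. For example, take a root with equivalent children $a<b$, each having two identical leaves: the second leaf under $b$ is mapped to the first, which disappears with $b$ at stage $1$.
\end{itemize}
Working stage by stage avoids all three issues. Your leaf argument also correctly reconciles depths measured from $u$ with depths measured from $w$.

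Two details deserve a sentence each in a write-up. First, at stage $k$ the entire subtree below the retained $u'$ survives unchanged, because its proper descendants have height $<k$ and its ancestors have height $>k$. So $Z_u$, built from subtrees of $\M^k$, really is a relation into $\M^{k+1}$. Second, transitivity of $\sim_L$ can be taken from the stated equivalence with $\equiv^{\ML}_L$ on finite models. Alternatively, check it directly: in trees of height at most $L$ a world at distance $L$ is a leaf, so the depth restriction never blocks composition.
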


\begin{proof}
Let $\M = (W,R,V)$, and let $(\M^{L},w)$ be the pruning of $(\M,w)$. By construction, $W^L \subseteq W$, and each  $v \in W$ either survives the pruning (that is, $v \in W^L$) or is removed at some stage $k \leq L$ because a $<$-smaller sibling with the same $\equiv_k^{\ML}$-type was retained.

We define a relation $Z \subseteq W \times W^L$ as follows.
For each  $v \in W^L$, let $v Z v$.
For each $v \not\in W^L$, let $vZv'$, where $v'$ is the unique $<$-minimal sibling that caused $v$ to be removed during pruning.  By construction, $Z$ is a total function from $W$ to $W^L$.

We show that $Z$ is an $L$-bisimulation between $(\M,w)$ and $(\M^L,w)$. 

First, note that  
$R^L$ and $V^L$ are the restrictions of $R$ and $V$ to  $W^L$.
Moreover, if $vZv'$, then $(\M,v) \equiv^{\ML}_k (\M^L,v')$ for some $k \leq L$
by construction of the pruning. In particular, $V(v) = V^L(v')$.

\smallskip
\noindent
\emph{Forth condition of bisimulations.}
Assume $v Z v'$ and  $v R u$. By definition of $Z$, there exists $u' \in W^L$ such that $u Z u'$. 
We show that $v' R^L u'$.  

If $v \in W^L$, then $v' = v$. During pruning, the
successors of $v$ were partitioned into $\ML_k$-equivalence
classes, and one representative from
each class was retained. 
Since $u$ belongs to one such class, its retained representative
is precisely $u'$, and hence
$v'R^L u'$. 

If $v\notin W^{L}$, then 
$v'$ is a retained sibling of $v$ such that $(\M,v) \equiv^{\ML}_{k} (\M,v')$.
This equivalence implies that 
every $\ML_{k-1}$-type realised among the successors
of $v$ is also realised among the successors of $v'$.
Pruning ensures that the successor $u$ has a retained
representative $u'$ among the successors of $v'$, and by definition of $Z$ we have
$u Z u'$. Thus, also in this case, $v' R^{L} u'$.

\smallskip
\noindent
\emph{Back condition of bisimulations.}
Assume  $v Z v'$ and let $v' R^L u'$. By construction of the pruning, $u'$ is a retained representative of some subtree rooted at a successor $u$ of $v$ in $\M$.
By definition of $Z$, we have $u Z u'$, as required.

We conclude that $Z$ is an $L$-bisimulation between $(\M,w)$ and $(\M^{L},w)$, and therefore $(\M,w) \sim_{L} (\M^{L},w)$.

Finally, we show that the identity map $f: W^L \mapsto W$ is
an injective homomorphism from $(\M^L,w)$ to
$(\M,w)$.  Since $W^L \subseteq W$, this map is injective.
Moreover, for all $v',u' \in W^L$,
we have $v' R^L u'$ iff $v'R u'$ and $V^L(v') = V(v')$. 
Hence $f$ preserves both the accessibility relation and the valuation and is thus a 
homomorphism (indeed, an embedding) from $(\M^{L},w)$ into $(\M,w)$.
\end{proof}

Pruning retains exactly one representative from each $\ML_k$-equivalence class among sibling subtrees. As a consequence, no graded modality with counting greater than one can be satisfied in the pruned model.
This observation yields the following collapse  of characteristic formulae to basic modal logic.

\begin{lemma}\label{lem:ML-characteristic}
Let $(\M,w)$ be a pointed tree-shaped model of height $L$,
let $(\M^L,w)$ be its pruning, and let $\ell \leq L$.
Then, in both the  characteristic \EGML{}-formula $\varphi^{\ell}_{\M^L,w}$  and the characteristic \EPGML{}-formula $\psi^{\ell}_{\M^L,w}$, all operators $\Diamond^{ \geq k}$ occur only with $k=1$. In particular, $\varphi^{\ell}_{\M^L,w}$  is an $\EML$-formula and $\psi^{\ell}_{\M^L,w}$ is an \EPML{}-formula.
\end{lemma}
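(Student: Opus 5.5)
The plan is to show that, in the construction of Definition~\ref{def:characteristic-existential} applied to $(\M^L,w)$, every equivalence class $C_j$ is a singleton, so that every operator has the form $\Diamond^{\geq |C_j|} = \Diamond^{\geq 1} = \Diamond$. Since $\psi^{\ell}_{\M^L,w}$ has the same modal skeleton as $\varphi^{\ell}_{\M^L,w}$, only with negative literals deleted, it suffices to argue about $\varphi^{\ell}_{\M^L,w}$. Unfolding the recursion, the formula $\varphi^{\ell}_{\M^L,w}$ mentions $\varphi^{\ell-d}_{\M^L,u}$ for nodes $u$ at depth $d$. At such a node the successors are partitioned by $\equiv^{\GML}_{\ell-d-1}$. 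So the key claim is the following: for any node $u$ of $\M^L$ at depth $d<\ell$, two distinct successors $u_1\neq u_2$ of $u$ satisfy $(\M^L,u_1)\not\equiv^{\GML}_{\ell-d-1}(\M^L,u_2)$. Because $\GML$-equivalence refines $\ML$-equivalence at each depth (every $\ML$-formula is a $\GML$-formula), it is enough to show $(\M^L,u_1)\not\equiv^{\ML}_{\ell-d-1}(\M^L,u_2)$.

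\textbf{First step: pruning preserves heights.} Removing a subtree at stage $k$ is only done when an $\equiv^{\ML}_k$-equivalent sibling of the same height $k$ is kept. By induction on the stages, removals therefore never change the height of any surviving subtree, and never change its $\ML$-type up to the relevant depth. This uses the bisimulation argument of Lemma~\ref{lem:bisimilar-trees}, applied locally to subtrees. As a consequence, in $\M^L$ two retained siblings of equal height $h$ are not $\equiv^{\ML}_h$-equivalent: at stage $h$ one of them would otherwise have been removed.

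\textbf{Second step: case split on heights.} Let $u_1,u_2$ be retained siblings below $u$ at depth $d$, with heights $h_1,h_2\le L-d-1$.
\begin{itemize}
\item If $h_1\neq h_2$, say $h_1<h_2$, then the $\ML$-formula $\Diamond^{h_2}\top$ (nested diamonds) distinguishes them at depth $h_2$.
\item If $h_1=h_2=h$, they are distinguished at depth $h$ by the first step.
\end{itemize}
In both cases they are distinguished at depth at most $\max(h_1,h_2)$. This is at most $\ell-d-1$ provided the depth budget at $u$ dominates the heights of the subtrees below it. That holds in the intended case $\ell=L$, where the height of the tree is exactly the available depth. Once the key claim is established, every class $C_j$ is a singleton, and the formula belongs to $\EML$ (respectively $\EPML$), as stated.

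\textbf{Expected obstacle.} The hard part is the mismatch between the depth $\ell-d-1$ used to form the classes and the heights of the sibling subtrees. If $\ell<L$, siblings of height larger than $\ell-d-1$ may coincide up to the smaller depth without having been merged by the pruning. For example, take two children of the root, each of height $1$, with different grandchildren labels; with $\ell=1$ they fall into one class and yield a $\Diamond^{\geq 2}$. So I would first verify the argument carefully for $\ell=L$, which is the case needed in the subsequent preservation theorem. For general $\ell\le L$, I would either restrict the claim to trees of height at most $\ell$, or replace $(\M^L,w)$ by the pruning of its $\ell$-unravelling (truncation at depth $\ell$), where the same argument applies verbatim.
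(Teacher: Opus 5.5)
Your proposal is correct, and on the decisive point it is more accurate than the paper's own proof. The paper uses the same skeleton: every class $C_j$ among the children is a singleton, $\equiv^{\GML}_{\ell-1}$ refines $\equiv^{\ML}_{\ell-1}$, and it inducts on $\ell$. However, it justifies the singleton claim only by asserting that, by construction of the pruning, at most one representative of each $\equiv^{\ML}_{\ell-1}$-class survives among siblings. The pruning does not do that. It compares only siblings of equal height $h$, and only up to $\equiv^{\ML}_{h}$.

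Your counterexample is valid and shows that this step, and the lemma as stated, fail for $\ell<L$. State explicitly that the two children carry the same propositional label, so that they share an $\equiv^{\GML}_0$-class. For instance, over $\prop=\{p,q\}$ with grandchildren labelled $p$ and $q$ and all other worlds unlabelled, $\varphi^{1}_{\M^{2},w}$ is $(\neg p\wedge\neg q)\wedge\Diamond^{\geq 2}(\neg p\wedge\neg q)$. This formula is not even bisimulation-invariant, so it is not equivalent to any $\EML$-formula.

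Your argument supplies exactly the missing justification:
\begin{itemize}
\item heights survive pruning;
\item retained siblings of equal height $h$ are separated at depth $h$, because they survived the step comparing height-$h$ siblings;
\item siblings of different heights are separated by $\Diamond^{h}\top$.
\end{itemize}
Both separations are available precisely when the budget $\ell-d-1$ at depth $d$ dominates the heights of the subtrees below. That invariant is also what the paper's induction silently needs when it applies the hypothesis to $\varphi^{\ell-1}_{\M^L,w_j}$ at the children $w_j$ rather than at the root of a tree of height $L$.

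One adjustment to your closing remarks: the case used later is not $\ell=L$ but ``height at most $\ell$''. In \Cref{injhomML} and \Cref{embeddingML} the lemma is applied to prunings of trees in $\Tmin$. These are unravellings of depth $\ell$ (respectively $L$), and their height can be strictly smaller than the formula depth. The lemma's hypothesis $\ell\le L$ actually excludes that situation. Your argument covers it verbatim, so your first repair (trees of height at most $\ell$) is the right formulation of the lemma.

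A minor point: in your first step only height preservation is needed, and the claim about $\ML$-types is superfluous. The subtree below a retained node of height $h$ is never modified after the step that compares height-$h$ siblings, so non-equivalence at that step transfers directly to $\M^L$.
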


\begin{proof}
The proof is by induction on $\ell$. 

\smallskip
\noindent
\emph{Base case ($\ell=0$).}
The formulae
$\varphi^0_{\M^L, w}$ and $\psi^0_{\M^L, w}$ are propositional. Hence,
 $\varphi^0_{\M^{L},w}$ is an $\EML$-formula and $\psi^0_{\M^{L},w}$ is an $\EPML$-formula.

\smallskip
\noindent
\emph{Inductive step.} Let $\ell \geq 1$ and assume that the claim holds for 
$\ell-1$. Partition the children of $w$ in $\M^L$ into $\equiv^{\ML}_{\ell-1}$-equivalence classes $C_1, \dots, C_m$. 
By construction of the pruning, at most one 
representative from each $\equiv^{\ML}_{\ell-1}$-equivalent class is retained among sibling subtrees. Thus, each class $C_j$ is a singleton.

Since $\ML$ is a fragment of $\GML$, the relation $\equiv^{\GML}_{\ell-1}$ refines $\equiv^{\ML}_{\ell-1}$. 
It follows that each  $\equiv^{\GML}_{\ell-1}$-equivalence class among the 
children of $w$ in $\M^L$ is also a singleton. Therefore, the characteristic
$\EGML$-formula
$
\varphi_{\M^L,w}^{\ell}
\ :=\
\varphi_{\M^L,w}^0 
\ \wedge\
\bigwedge_{j=1}^m \Diamond^{\geq 1}\,\varphi_{\M^L,w_j}^{\ell-1}$ contains only
graded modal operators with threshold $1$.
By the inductive hypothesis, $\varphi_{\M^L,w_j}^{\ell-1}$ is an $\EML$-formula,  so  $\varphi_{\M^L,w}^{\ell}$ is in $\EML$ too.

An analogous argument applies to the characteristic $\EPGML$-formula  
$
\psi_{\M^L,w}^{\ell}
\ :=\
\psi_{\M^L,w}^0 
\ \wedge\
\bigwedge_{j=1}^m \Diamond^{\geq 1}\,\psi_{\M^L,w_j}^{\ell-1}$. By the inductive hypothesis,
each $\psi^{\ell-1}_{\M^{L},w_j}$ is an $\EPML$ formula, and therefore
$\psi^{\ell}_{\M^{L},w}$ is an $\EPML$ formula.
\end{proof}

Combining the pruning construction with the results in
\Cref{lem:bisimilar-trees} and \Cref{lem:ML-characteristic}, we obtain
the following preservation theorem for $\ML$ formulae under injective homomorphisms.

\begin{theorem}\label{injhomML}
The following are equivalent for any \ML{} formula-$\varphi$ of  depth at most $L$:
\begin{itemize}
\item $\varphi$ is preserved under injective homomorphisms;
\item $\varphi$ is equivalent to an  $\EPML$-formula of depth  at most $L$.
\end{itemize}
\end{theorem}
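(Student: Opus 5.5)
The backward direction is routine. Every $\EPML$-formula is an $\EPGML$-formula, since $\Diamond$ is $\Diamond^{\geq 1}$. So by the $(\Leftarrow)$ direction of \Cref{th:exists-positive-unravelling}, such a formula is preserved under injective homomorphisms, and preservation is invariant under logical equivalence. The real work is the forward direction. There I will reuse the minimal-tree strategy of \Cref{th:preserve} and \Cref{th:exists-positive-unravelling}, but pass to prunings so that the characteristic formulae collapse to $\EPML$ by \Cref{lem:ML-characteristic}.

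For the forward direction, let $\varphi$ be an $\ML$-formula of depth at most $L$ preserved under injective homomorphisms. Let $\C$ be its class of models, and let $\T$ be the class of $L$-unravellings of models in $\C$; since $\varphi$ is invariant under $L$-unravelling, $\T \subseteq \C$. Let $\Tmin$ be a set of representatives of the $\preceq$-minimal elements of $\T$, where $\preceq$ is the injective homomorphism relation; it is finite by \Cref{cor:finiteness}.

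The key new step is to show that every $(\mathfrak T,v)\in\Tmin$ may be assumed to be pruned. The pruning $(\mathfrak T^L,v)$ is $L$-bisimilar to $(\mathfrak T,v)$ by \Cref{lem:bisimilar-trees}. Hence it satisfies $\varphi$, because $\ML$-formulae of depth at most $L$ are invariant under $\sim_L$. It is tree-shaped of height at most $L$, so it equals its own $L$-unravelling and therefore lies in $\T$. By \Cref{lem:bisimilar-trees} it also maps injectively-homomorphically into $(\mathfrak T,v)$. Minimality then forces $(\mathfrak T,v)\preceq(\mathfrak T^L,v)$. So the minimal elements can be replaced by their prunings, which are again minimal. (Strictly, \Cref{lem:bisimilar-trees} is stated for height exactly $L$; I will apply pruning with the tree's actual height $h\le L$, which gives $h$-bisimilarity and hence $L$-bisimilarity for trees of height $h$.)

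Next, define $\psi_\C=\bigvee_{(\mathfrak T,v)\in\Tmin}\psi^L_{\mathfrak T^L,v}$. By \Cref{lem:ML-characteristic}, each disjunct is an $\EPML$-formula of depth $L$. I will check that $\psi_\C$ is equivalent to $\varphi$, exactly as in \Cref{th:preserve}, using \Cref{lem:preserve-injective}.

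If $(\M,w)\models\varphi$, then some minimal element satisfies $(\mathfrak T,v)\preceq\unr^L(\M,w)$. Composing with $\mathfrak T^L\hookrightarrow\mathfrak T$ gives $\unr^L(\mathfrak T^L,v)=(\mathfrak T^L,v)\preceq\unr^L(\M,w)$, so $(\M,w)\models\psi^L_{\mathfrak T^L,v}$.

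Conversely, if $(\M,w)\models\psi^L_{\mathfrak T^L,v}$, then $(\mathfrak T^L,v)\preceq\unr^L(\M,w)$. Since $(\mathfrak T^L,v)\models\varphi$ and $\varphi$ is preserved under injective homomorphisms, $\unr^L(\M,w)\models\varphi$. By unravelling invariance, $(\M,w)\models\varphi$.

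I expect the main obstacle to be this pruning step: the pruned tree must stay inside $\C$ and still sit below every model above the original minimal tree. Bisimulation invariance of $\ML$, rather than of $\GML$, supplies the first requirement, and the injective embedding of the pruning supplies the second. This is precisely where the argument fails for $\GML$ and why the graded counts disappear. In fact one can skip the minimality bookkeeping and simply apply pruning to every element of $\Tmin$; it is worth noting that the second half of the equivalence argument only needs $(\mathfrak T^L,v)\in\C$.
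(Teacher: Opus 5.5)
\textbf{The proof breaks at the converse step, and the paper's own proof breaks at the same place.} Your route is the paper's: prune the finitely many injective-homomorphism-minimal trees, keep the prunings in $\C$ by $L$-bisimulation invariance, and take the disjunction of their characteristic $\EPGML$-formulae, which are $\EPML$ by \Cref{lem:ML-characteristic}. Your side remarks on minimality and on trees of height $h<L$ are correct. The implication from $\varphi$ to $\psi_{\C}$ is also fine, because it only uses the ($\Leftarrow$) half of \Cref{lem:preserve-injective}.

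The gap is the step ``if $(\M,w)\models\psi^L_{\mathfrak{T}^L,v}$ then $(\mathfrak{T}^L,v)\preceq\unr^L(\M,w)$''. It needs the ($\Rightarrow$) half of \Cref{lem:preserve-injective}, and that half is false: $\bigwedge_j\Diamond^{\geq|C_j|}\psi_j$ does not force the witnesses for \emph{different} classes $C_j$ to be distinct.

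\emph{Counterexample to the step.} Let $(\mathfrak{T},r)$ have children $a,b$ of $r$ and a child $a'$ of $a$, with $p$ true everywhere. It is its own pruning, and $\psi^2_{\mathfrak{T},r}=p\wedge\Diamond(p\wedge\Diamond p)\wedge\Diamond p$. This formula holds at the root of the path $v\to c\to c'$ (with $p$ everywhere), yet the $2$-unravelling of that path has only three worlds, so four worlds cannot be mapped into it injectively. In general, once all grades are $1$, $\psi^L_{\mathfrak{T}^L,v}$ is literally $\chi^L_{\mathfrak{T}^L,v}$. So \Cref{lem:preserve-homomorphism} gives you only a homomorphism $(\mathfrak{T}^L,v)\to\unr^L(\M,w)$.

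\emph{The argument proves too much.} Let $\mathcal{C}_0$ be the class of pointed models whose $2$-unravelling receives an injective homomorphism from $(\mathfrak{T},r)$. It is invariant under $2$-unravelling, preserved under injective homomorphisms, and its only minimal element $(\mathfrak{T},r)$ is already pruned. Every one of your steps goes through for it, the pruning step being trivial. Yet $\mathcal{C}_0$ is not $\EPML$-definable: $(\mathfrak{T},r)\in\mathcal{C}_0$ maps homomorphically onto the path, which is not in $\mathcal{C}_0$.

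The same example also refutes \Cref{lem:preserve}. The graded results can be repaired with Hall-type conjuncts $\Diamond^{\geq\sum_{j\in S}|C_j|}\bigvee_{j\in S}\psi_j$ over all sets $S$ of classes, but $\EPML$ cannot count, so that repair is unavailable here.

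\textbf{The missing idea} is to use bisimulation invariance of $\varphi$ a second time, to make the homomorphism injective.
\begin{itemize}
\item \emph{Duplicated tree.} Let $n$ be the number of worlds of $\mathfrak{T}^L$. Let $(\M',w')$ be the tree with root $w'=(w)$ whose worlds are the sequences $(w,(u_1,i_1),\dots,(u_k,i_k))$ such that $(w,u_1,\dots,u_k)$ is a world of $\unr^L(\M,w)$ and $i_1,\dots,i_k\in\{1,\dots,n\}$. Each sequence is related to its one-step extensions and labelled like its last world $u_k$.
\item \emph{Same $\ML$-theory.} Forgetting the indices is a functional bisimulation onto $\unr^L(\M,w)$. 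Hence $(\M',w')$ and $(\M,w)$ agree on all $\ML$-formulae of depth at most $L$.
\item \emph{Lifting $h$.} Given a homomorphism $h$ from $(\mathfrak{T}^L,v)$ into $\unr^L(\M,w)$, define $h'$ top-down. Set $h'(v)=w'$. If $y_1,\dots,y_r$ are the children of $x$ (so $r<n$) and $h(y_t)$ extends $h(x)$ by $z_t$, let $h'(y_t)$ extend $h'(x)$ by $(z_t,t)$. Siblings receive distinct indices, so $h'$ is an injective homomorphism.
\item \emph{Conclusion.} Preservation gives $(\M',w')\models\varphi$, and bisimulation invariance then gives $(\M,w)\models\varphi$.
\end{itemize}
Put differently, this shows that an $\ML$-formula of depth at most $L$ preserved under injective homomorphisms is preserved under all homomorphisms. \Cref{th:exists-positive-unravelling-ML} then yields the theorem directly, with no pruning at all.
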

\begin{proof}

\noindent
\emph{($\Rightarrow$)}  Let $\varphi$ be an-$\ML$ formula of $\ell \leq L$ preserved
under injective homomorphisms. 
Let $\C$ be the class of pointed models satisfying
$\varphi$ and  
let $\T=\{\,\unr^{\ell}(\M,w)\mid(\M,w)\in\C\,\}$.
 Since 
$\C$ is invariant under $\ell$-unravellings, we have
$\T \subseteq \C$.

Let $\preceq$ denote the injective homomorphism relation. By  \Cref{cor:finiteness}, $(\T, \preceq)$ is a wqo and hence admits finitely many $\preceq$-minimal elements up to isomorphism. Let
$\Tmin$ contain one representative for each such minimal isomorphism type, and let $\T'$ consist of the prunings of the models in $\Tmin$. 

By \Cref{lem:bisimilar-trees}, each pruning is 
$L$-bisimilar to the original tree. Since $\varphi$ is an $\ML$-formula of depth $\ell$, the class $\C$ is invariant under
$\ell$-bisimulations, and hence $\T' \subseteq \C$. Moreover, 
$\T'$ is finite because $\Tmin$ is finite.

For each pruned tree $(\mathfrak T,t)\in\T'$ consider its
characteristic $\EPGML$-formula
$\psi_{\mathfrak T,t}^{\ell}$. By construction, these formulae
are positive and, by \Cref{lem:ML-characteristic}, 
each $\psi_{\mathfrak T,t}^{\ell}$ is in fact an $\EPML$-formula. 
Let
\[
\Psi \;=\; \bigvee_{(\mathfrak T,t)\in\mathcal T'} \psi_{\mathfrak T,t}^{\ell} .
\]
Then $\Psi$ is a finite disjunction of $\EPML$-formulae of depth at most $\ell$.
By construction, $\Psi$ holds in each model in $\T'$
and hence in each model in $\Tmin$. 
By \Cref{lem:preserve-injective}, it holds in all models in $\T$ and since $\C$ is invariant under $\ell$-unravellings, it follows that $\Psi$
defines $\C$.

%\smallskip
\noindent
\emph{($\Leftarrow$)} Let $\varphi$ be equivalent to an $\EPML$-formula. Since $\EPML \subseteq \EPGML$, $\varphi$ is also equivalent to an $\EPGML$-formula. By  \Cref{injectiveGML}, it is preserved under injective homomorphisms.
\end{proof}

%%%%%%%%%%%%%%%%%%%%%
\subsection{Preservation Under Homomorphisms}
%%%%%%%%%%%%%%%%%%%%%

We now turn to classes preserved under homomorphisms.
We begin by showing that if such a class is invariant under unravelling, then it is definable by existential-positive modal logic ($\EPML$).
Our proof strategy follows the same general pattern
as in the cases of embeddings and injective homomorphisms.
In particular, we define an appropriate form of characteristic $\EPML$-formulae.

\begin{definition}
Let $(\M,w)$ be a pointed model with $\M = (W,R,V)$, the \emph{characteristic $\EPML$-formula} $\chi_{\M,w}^\ell$ of depth $\ell$  is defined as the formula obtained from 
the characteristic $\EGML$-formula $\varphi_{\M,w}^{\ell}$
(see \Cref{def:characteristic-existential}) by removing all negative literals $\neg p$ and replacing each graded modal operator $\Diamond^{\geq k}$ with the basic modal operator $\Diamond$.
\end{definition}

Characteristic $\EPML$-formulae are related to homomorphisms between unravellings by the following lemma.

\begin{lemma}\label{lem:preserve-homomorphism}
Let $(\M, w)$  and $(\N, v)$ be pointed models and let  $\ell \in \mathbb{N}$. Then,
$(\N,v) \models \chi_{\M,w}^\ell$ 
if and only if $\unr^\ell(\M, w) \preceq \unr^\ell(\N, v)$, where
$\preceq$ denotes the homomorphism relation.
\end{lemma}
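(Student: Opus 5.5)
The argument is an induction on $\ell$, parallel to the proofs of \Cref{lem:preserve} and \Cref{lem:preserve-injective}. Injectivity of the maps is dropped, and the graded operators become plain diamonds. Throughout, $\preceq$ is the homomorphism relation.

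\emph{Base case ($\ell=0$).} The formula $\chi^0_{\M,w}$ is the conjunction of the propositions true at $w$. Both unravellings consist of a single world, and the root must map to the root. So $(\N,v)\models\chi^0_{\M,w}$ holds iff every proposition true at $w$ is true at $v$, which is exactly the statement that $\unr^0(\M,w)\preceq\unr^0(\N,v)$.

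\emph{Inductive step, ($\Rightarrow$).} Suppose $(\N,v)\models\chi^{\ell+1}_{\M,w}$. Then $v$ satisfies the positive propositions of $w$. Moreover, for each class $C_j$ of successors of $w$ under $\equiv^{\GML}_\ell$, there is a successor $v_j$ of $v$ with $(\N,v_j)\models\chi^\ell_{\M,w_j}$. By the induction hypothesis there is a homomorphism $g_j\colon\unr^\ell(\M,w_j)\to\unr^\ell(\N,v_j)$.

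Every $w_{j,k}\in C_j$ satisfies $\unr^\ell(\M,w_{j,k})\cong\unr^\ell(\M,w_j)$, as used in the proof of \Cref{lem:preserve}. Composing that isomorphism with $g_j$ gives a homomorphism into $\unr^\ell(\N,v_j)$. This is the one point needing care: several depth-$1$ nodes (all of $C_j$) are sent to the same depth-$1$ node $v_j$. That is harmless, since homomorphisms need not be injective.

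Gluing these maps with root $\mapsto$ root gives a map $f$. The subtree of $\unr^{\ell+1}(\N,v)$ below the depth-$1$ node for $v_j$ is $\unr^\ell(\N,v_j)$, so each subtree map lands in the right place. Edges from the root to depth-$1$ nodes go to root-to-$v_j$ edges, and edges inside the subtrees are preserved by the $g_j$. Labels are preserved, so $f$ is a homomorphism.

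\emph{Inductive step, ($\Leftarrow$).} Let $f\colon\unr^{\ell+1}(\M,w)\to\unr^{\ell+1}(\N,v)$ be a homomorphism fixing roots. Then the propositions of $w$ hold at $v$. A homomorphism maps each depth-$d$ node to a depth-$d$ node, by induction along the unique root path in the tree-shaped target. So for each $j$, the depth-$1$ node for $w_j$ goes to a depth-$1$ node corresponding to some successor $v'$ of $v$. Restricting $f$ to the subtree gives a homomorphism $\unr^\ell(\M,w_j)\to\unr^\ell(\N,v')$. By the induction hypothesis $(\N,v')\models\chi^\ell_{\M,w_j}$, hence $(\N,v)\models\Diamond\chi^\ell_{\M,w_j}$ for every $j$, and so $(\N,v)\models\chi^{\ell+1}_{\M,w}$.

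The main obstacle is modest. Unlike the injective case, we must check that collapsing distinct successors onto one target successor is compatible with the tree structure of the target unravelling. This holds because preservation is only required along edges, and the subtrees below the depth-$1$ nodes are independent.
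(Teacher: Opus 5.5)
Your proof is correct and follows the paper's argument essentially step for step. It proceeds by induction on $\ell$. In the forward direction you glue the subtree homomorphisms from the induction hypothesis, letting all depth-$1$ nodes of a class $C_j$ collapse onto the single witness $v_j$. In the backward direction you restrict $f$ below the depth-$1$ node of each representative $w_j$.

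You are slightly more explicit than the paper on two points it leaves implicit. First, the induction hypothesis only supplies maps out of $\unr^\ell(\M,w_j)$, so they must be transferred to the other class members via $\unr^\ell(\M,w_{j,k})\cong\unr^\ell(\M,w_j)$. Second, you check that homomorphisms between unravellings preserve depth.
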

\begin{proof}
The proof is by induction on $\ell$.

\noindent
\emph{Base case ($\ell=0$).}
The formula  
$\chi^{0}_{\M,w}$ is a conjunction of propositional
symbols satisfied at $w$. Thus, $(\N,v) \models \chi^{0}_{\M,w}$ if and only if $(\N,v)$ satisfies at least the same 
propositional symbols as $(\M,w)$. Since  $\unr^0(\M,w)$ and $\unr^0(\N,v)$ each consist of a single world, this holds if and only if 
$\unr^0(\M,w) \preceq \unr^0(\N,v)$. 

\noindent
\emph{Inductive step.} 
Assume that the claim holds for some $\ell \geq 0$, and consider $\ell+1$.
We use the same notation as in the proof of \Cref{lem:preserve}.

\noindent
\emph{($\Rightarrow$)}
Let $(\N,v) \models \chi^{\ell+1}_{\M,w}$. Then,
for each equivalence class $C_j$ of successors of $w$ in $\M$,
there exists at least one successor $v_{j}$ of $v$ in $\N$ such that
$(\N,v_{j})\models \chi^\ell_{\M,w_j}$. By the inductive hypothesis, for each $j$,
$\unr^\ell(\M,w_j) \preceq \unr^\ell(\N,v_{j})$. 
The unravelling $\unr^{\ell+1}(\M,w)$ consists of the root
together with subtrees rooted at depth-$1$ nodes corresponding
to the successors of $w$. Since homomorphisms need not be injective,
all depth-$1$ nodes corresponding to successors in the same class
$C_j$ may be mapped to the same depth-$1$ node corresponding to
$v_j$. Mapping the root to the root and using the homomorphisms above on the corresponding subtrees yields
a homomorphism from $\unr^{\ell+1}(\M,w)$ to $ \unr^{\ell+1}(\N,v)$.

\smallskip
\noindent
\emph{($\Leftarrow$)}
Let $f$ be a  homomorphism from  $\unr^{\ell+1}(\M, w)$ to $\unr^{\ell+1}(\N, v)$. Then, $f$ maps
the root to the root, implying $(\N, v) \models \chi_{\M,w}^0$.
Fix an equivalence class $C_j$. Every depth-$1$ node corresponding to a successor $w_j$ of $w$ must be mapped by $f$ to some
depth-$1$ node corresponding to a successor $v_j$ of $v$. Restricting
$f$ to the corresponding subtree yields a homomorphism
$\unr^{\ell}(\M,w_j)\to \unr^{\ell}(\N,v_j)$ and by the
inductive hypothesis we obtain $(\N,v_j)\models \chi^{\ell}_{\M,w_j}$.
Hence, $(\N,v)\models \Diamond \chi^{\ell}_{\M,w_j}$ for each $j$ and therefore $(\N,v)\models \chi^{\ell+1}_{\M,w}$.
\end{proof}

The preceding result allows us to establish the homomorphism preservation theorem for unravelling-invariant classes.

\begin{theorem}\label{th:exists-positive-unravelling-ML}
The following are equivalent for any class  $\C$ of pointed models and any $L\in \mathbb{N}$:
\begin{itemize}
\item $\C$ is preserved under homomorphisms and invariant under $L$-unravelling,
\item $\C$ is definable by an $\EPML$-formula of depth at most $L$.
\end{itemize}
\end{theorem}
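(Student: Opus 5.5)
The plan is to follow the template of \Cref{th:preserve} and \Cref{th:exists-positive-unravelling}, with \Cref{lem:preserve-homomorphism} used in place of \Cref{lem:preserve}.

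For ($\Rightarrow$), let $\preceq$ be the homomorphism relation. Set $\T=\{\unr^L(\M,w)\mid(\M,w)\in\C\}$; invariance under $L$-unravelling gives $\T\subseteq\C$.

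\begin{enumerate}
\item By \Cref{cor:finiteness}, $(\T,\preceq)$ is a wqo. Hence $\T$ has finitely many $\preceq$-minimal isomorphism types, and every element of $\T$ lies above one of them. Let $\Tmin$ contain one representative of each type.
\item Define $\chi_\C=\bigvee_{(\mathfrak T,v)\in\Tmin}\chi^L_{\mathfrak T,v}$. This is a finite disjunction of \EPML{}-formulae of depth at most $L$.
\item Let $(\M,w)\in\C$. Then some $(\mathfrak T,v)\in\Tmin$ satisfies $(\mathfrak T,v)\preceq\unr^L(\M,w)$. Since $(\mathfrak T,v)$ is a tree of height at most $L$, $\unr^L(\mathfrak T,v)=(\mathfrak T,v)$, so \Cref{lem:preserve-homomorphism} gives $(\M,w)\models\chi_\C$.
\item Conversely, suppose $(\M,w)\models\chi^L_{\mathfrak T,v}$. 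By \Cref{lem:preserve-homomorphism}, $(\mathfrak T,v)\preceq\unr^L(\M,w)$. Since $(\mathfrak T,v)\in\C$ and $\C$ is preserved under homomorphisms, $\unr^L(\M,w)\in\C$. Invariance then gives $(\M,w)\in\C$.
\end{enumerate}

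For ($\Leftarrow$), let $\varphi$ be an \EPML{}-formula of depth at most $L$. Invariance under $L$-unravelling holds because every \GML{}-formula of depth at most $L$ is invariant. For preservation, I would prove by structural induction on a negation-free \ML{} formula that $(\M,w)\models\varphi$ and a homomorphism $f$ from $(\M,w)$ to $(\N,v)$ give $(\N,v)\models\varphi$. The induction runs simultaneously over all worlds $u$, comparing $u$ with $f(u)$.
\begin{itemize}
\item Propositions are preserved by the valuation condition on homomorphisms.
\item Conjunction and disjunction are immediate.
\item For $\Diamond\psi$: a witness $u$ with $wRu$ is sent to $f(u)$. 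We have $f(w)R'f(u)$, and $\psi$ holds at $f(u)$ by the induction hypothesis.
\end{itemize}
No injectivity is needed, because the operators have threshold $1$. This is exactly why the fragment is \EPML{} rather than \EPGML{}.

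The step needing the most care is already handled: \Cref{lem:preserve-homomorphism} holds even though the characteristic formula merges each equivalence class $C_j$ into a single $\Diamond$. In the plan I mainly need to check two things. First, the minimal elements really are trees of height at most $L$, so that $\unr^L$ fixes them. Second, the wqo in \Cref{cor:finiteness} is applied to $\T$ itself, not to some closure of $\T$; \Cref{non_closed} shows that $\T$ need not be closed under $\preceq$, and the argument never uses such closure.
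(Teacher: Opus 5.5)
Your overall route is the paper's: the same class $\T$ of $L$-unravellings, minimal elements obtained from \Cref{cor:finiteness}, a finite disjunction of characteristic \EPML{}-formulae, and \Cref{lem:preserve-homomorphism}. Your direct induction for ($\Leftarrow$) just unfolds the paper's appeal to the standard translation into existential-positive first-order logic. Your step 4 spells out the converse inclusion, which the paper leaves implicit.

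There is one false claim, in step 1. For the homomorphism relation, being a wqo does \emph{not} give finitely many minimal \emph{isomorphism types}, because $\preceq$ is not antisymmetric up to isomorphism. For embeddings and injective homomorphisms, maps in both directions between finite structures force an isomorphism; homomorphic equivalence is much coarser. Concretely, let $\C$ be defined by $\Diamond p$ with $L=1$, and let $T_k$ be an unlabelled root with $k\geq 1$ children, each labelled exactly $\{p\}$. Every $T_k$ lies in $\T$. Any element of $\T$ below $T_k$ is homomorphically equivalent to it: map all children of $T_k$ onto a $p$-child. Hence every $T_k$ is minimal, yet the $T_k$ are pairwise non-isomorphic. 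Your $\Tmin$, with one representative per isomorphism type, would therefore be infinite, and $\chi_\C$ would not be a formula.

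What the wqo actually yields is finiteness of the minimal elements \emph{modulo homomorphic equivalence}: pairwise inequivalent minimal elements are pairwise incomparable, so they form an antichain. So choose $\Tmin$ with one representative per equivalence class of minimal elements. In step 3, if $m\preceq\unr^L(\M,w)$ with $m$ minimal and $r\in\Tmin$ is equivalent to $m$, then $r\preceq\unr^L(\M,w)$ by transitivity, and the rest of your argument goes through unchanged.

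Alternatively, as the paper remarks in passing, there are only finitely many \EPML{}-formulae of depth at most $L$ up to logical equivalence; indeed all the $T_k$ above have characteristic formula equivalent to $\Diamond p$. So the disjunction is finite up to equivalence even when indexed by all of $\T$, which makes minimality and the wqo unnecessary in the homomorphism case.
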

\begin{proof}
The proof follows the  same general strategy
as that of \Cref{th:preserve}, but now uses  \Cref{lem:preserve-homomorphism} in place of \Cref{lem:preserve}.

\noindent
\emph{($\Rightarrow$)}
Let
$\T=\{\,\unr^L(\M,w)\mid(\M,w)\in\C \,\}$.
Since $\C$ is invariant under $L$-unravelling, we have
$\T\subseteq\C$.
Let $\Tmin\subseteq\T$ be a set of representatives of the 
homomorphism-minimal elements of $\T$. By  \Cref{cor:finiteness},
 $\Tmin$ is finite (alternatively, finiteness also follows from the fact that there are 
 only finitely many $\ML$-inequivalent formulae of depth at most $L$).
For each 
$(\mathfrak T,v)\in\Tmin$, let $\chi^L_{\mathfrak T, v}$ be its characteristic $\EPML$-formula and define 
$\chi_{\C} = \bigvee_{(\mathfrak T, v) \in \Tmin} \chi^{L}_{\mathfrak T,v}$, which is a finite $\EPML$-formula of depth at most $L$.

By construction, $\chi_{\C}$ holds in every model in $\Tmin$
and by \Cref{lem:preserve-homomorphism} it therefore holds in all models in $\T$. Since $\C$ is invariant under $L$-unravelling, it follows that $\chi_{\C}$ defines $\C$.

\noindent
\emph{($\Leftarrow$)}
Suppose that $\C$ is definable by an $\EPML$-formula $\varphi$ of depth at most $L$. Since $\varphi$ is positive, the
standard translation  yields an existential-positive first-order logic formula.
Such formulae are preserved under homomorphisms \cite{DBLP:journals/jacm/Rossman08}.
Moreover, since $\varphi$ has depth at most $L$, it is invariant under $L$-unravelling. 
\end{proof}

We note that \Cref{th:exists-positive-unravelling-ML} yields,
as immediate corollaries, equirank homomorphism
preservation theorems for modal logics. Results of this form were previously obtained
by \citet{DBLP:journals/apal/AbramskyR24} in a general category-theoretic framework. In our setting, they follow directly
from unravelling invariance. Indeed,
every $\GML$ (and $\ML$) formula of depth at most $L$ is invariant under
$L$-unravelling and hence \Cref{th:exists-positive-unravelling-ML} applies to the class of its models.

\begin{corollary}[\cite{DBLP:journals/apal/AbramskyR24}, Theorem 5.20]
The following are equivalent for any $\GML$-formula $\varphi$ of  depth at most $L$:
\begin{itemize}
\item $\varphi$ is preserved under homomorphisms;
\item $\varphi$ is equivalent to an $\EPML$-formula of depth at most $L$.
\end{itemize}
\end{corollary}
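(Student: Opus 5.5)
The corollary follows from \Cref{th:exists-positive-unravelling-ML} applied to the class of models of $\varphi$; the only real work is checking unravelling invariance and transferring preservation along logical equivalence.

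\emph{($\Rightarrow$).} Let $\varphi$ be a $\GML$-formula of depth at most $L$ that is preserved under homomorphisms, and let $\C$ be its class of pointed models. As recalled in the preliminaries, the truth of a $\GML$-formula of depth at most $L$ at $(\M,w)$ coincides with its truth at $\unr^L(\M,w)$. Hence $\C$ is invariant under $L$-unravelling. By assumption, $\C$ is also preserved under homomorphisms. The ($\Rightarrow$) direction of \Cref{th:exists-positive-unravelling-ML} then gives an $\EPML$-formula $\chi_\C$ of depth at most $L$ that defines $\C$. Since $\chi_\C$ and $\varphi$ have the same models, they are logically equivalent.

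\emph{($\Leftarrow$).} Suppose $\varphi$ is equivalent to an $\EPML$-formula $\chi$ of depth at most $L$. The class of models of $\chi$ is definable by an $\EPML$-formula of depth at most $L$. By the ($\Leftarrow$) direction of \Cref{th:exists-positive-unravelling-ML}, this class is preserved under homomorphisms. Since $\varphi$ has exactly the same models as $\chi$, $\varphi$ is preserved under homomorphisms as well.

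\emph{Main obstacle.} There is little difficulty beyond what \Cref{th:exists-positive-unravelling-ML} already handles. The main point to watch is that the theorem is stated for arbitrary classes, not only for $\ML$-definable ones. So no bisimulation-invariance assumption is needed, and graded formulae are covered: preservation under homomorphisms by itself forces the graded modalities to collapse to $\EPML$. The step that deserves a brief justification is the invariance of $\GML$ depth-$L$ formulae under $L$-unravelling. The paper already records this as well known; if pressed, one can prove it by a short induction showing that for a world at depth $d\le L$ in the unravelling, successors beyond depth $d$ correspond bijectively to successors in $\M$, so the graded counts in $\Diamond^{\geq k}$ are preserved.
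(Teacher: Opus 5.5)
Your proposal is correct and follows essentially the same route as the paper: the paper derives this corollary by observing that every $\GML$-formula of depth at most $L$ is invariant under $L$-unravelling, so \Cref{th:exists-positive-unravelling-ML} applies directly to its class of models. Your converse direction uses the ($\Leftarrow$) part of that theorem together with the fact that preservation under homomorphisms is invariant under logical equivalence, which matches how the paper handles the analogous results \Cref{embeddingGML} and \Cref{injectiveGML}.
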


%%%%%%%
\subsection{Recovering Rosen's Embedding Preservation Theorem for $\ML$}
%%%%%%

We conclude the preservation results by observing that our framework also
recovers the classical embedding preservation theorem for \ML{}
in the finite setting due to  \citet{DBLP:journals/jolli/Rosen97}. 
This yields an alternative proof based on well-quasi-orders and minimal models, rather than Ehrenfeucht-Fraïssé games.

\begin{theorem}[\cite{DBLP:journals/jolli/Rosen97},  Proposition 6]\label{embeddingML}
The following are equivalent for any \ML{}-formula $\varphi$ of depth at most $L$:
\begin{itemize}
\item $\varphi$ is preserved under embeddings,
\item $\varphi$ is equivalent to an $\EML$-formula of depth  at most $L$.
\end{itemize}
\end{theorem}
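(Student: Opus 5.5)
The plan is to mirror the proof of \Cref{injhomML}, with embeddings in place of injective homomorphisms and characteristic \EGML{}-formulae in place of \EPGML{}-formulae.

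\emph{($\Leftarrow$).} Every \EML{}-formula is an \EGML{}-formula of the same depth. So by the ($\Leftarrow$) direction of \Cref{th:preserve} (equivalently \Cref{embeddingGML}), $\varphi$ is preserved under embeddings.

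\emph{($\Rightarrow$).} Let $\varphi$ have depth $\ell \le L$ and be preserved under embeddings. Set $\C$ to be its class of models and $\T = \{\unr^\ell(\M,w) \mid (\M,w)\in\C\}$. The steps are as follows.

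\begin{enumerate}
\item Since $\varphi$ is invariant under $\ell$-unravelling, $\T\subseteq\C$.
\item By \Cref{embedding_wqo}, the embedding relation on $\T$ is a wqo. So $\T$ has finitely many embedding-minimal elements up to isomorphism; let $\Tmin$ be a set of representatives.
\item Let $\T'$ be the set of prunings of the trees in $\Tmin$, and define
\[
\Phi = \bigvee_{(\mathfrak T,t)\in\T'} \varphi^{\ell}_{\mathfrak T,t}.
\]
By \Cref{lem:ML-characteristic}, each disjunct is an \EML{}-formula of depth $\ell$, so $\Phi$ is a finite \EML{}-formula of depth at most $\ell \le L$.
\end{enumerate}

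It remains to show that $\Phi$ defines $\C$.

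\emph{Soundness.} Suppose $(\N,v)\models\varphi^\ell_{\mathfrak T,t}$ for a pruning $(\mathfrak T,t)$ of some $(\mathfrak S,s)\in\Tmin$. By \Cref{lem:preserve}, $\unr^\ell(\mathfrak T,t)=(\mathfrak T,t)$ embeds into $\unr^\ell(\N,v)$; here we use that the tree has height at most $\ell$, so it is its own $\ell$-unravelling. By \Cref{lem:bisimilar-trees}, $(\mathfrak T,t)\sim_\ell(\mathfrak S,s)$. Since $\varphi$ is an \ML{}-formula of depth $\ell$, it is invariant under $\ell$-bisimulation, so $(\mathfrak T,t)\in\C$. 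Preservation under embeddings then gives $\unr^\ell(\N,v)\in\C$, and unravelling invariance gives $(\N,v)\in\C$.

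\emph{Completeness.} Let $(\M,w)\in\C$. Some $(\mathfrak S,s)\in\Tmin$ embeds into $\unr^\ell(\M,w)$. By \Cref{lem:bisimilar-trees}, the pruning $(\mathfrak T,t)$ of $(\mathfrak S,s)$ maps into $(\mathfrak S,s)$ by the identity map. The proof of that lemma notes that this map is in fact an embedding: the pruning is an induced subtree, since only whole subtrees are deleted and $R^L,V^L$ are restrictions. Composing the two embeddings, $(\mathfrak T,t)$ embeds into $\unr^\ell(\M,w)$. Then \Cref{lem:preserve} yields $(\M,w)\models\varphi^\ell_{\mathfrak T,t}$, hence $(\M,w)\models\Phi$.

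\emph{Main obstacle.} The delicate point is the use of \Cref{lem:ML-characteristic}. It requires $\equiv^{\ML}_{\ell-1}$-classes of siblings in the pruned tree to be singletons at every depth, whereas the pruning in \Cref{def:pruning} removes duplicates at each stage $k$ with respect to subtree height. I would check that, after all stages, siblings at depth $d$ are pairwise $\equiv^{\ML}_{\ell-d-1}$-inequivalent. If pruning by height does not directly give this, I would redefine the pruning using equivalence up to the remaining depth $\ell-d-1$ instead of subtree height; \Cref{lem:bisimilar-trees} goes through verbatim with this change. Then every grade $|C_j|$ equals $1$, and the disjuncts are genuine \EML{}-formulae.
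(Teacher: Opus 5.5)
Your proposal is correct and is essentially the paper's proof. You prune the embedding-minimal trees, use \Cref{lem:ML-characteristic} to see that their characteristic \EGML{}-formulae lie in \EML{}, and rerun the argument of \Cref{th:preserve}; your completeness step (the pruning is an induced subtree, so it embeds into the minimal tree and hence into $\unr^\ell(\M,w)$) spells out what the paper summarises as ``the same argument''.

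The worry in your last paragraph resolves without redefining anything. Since all trees have height at most $\ell$, the children of a node at depth $d$ root subtrees of height at most $\ell-d-1$, so $\equiv^{\ML}_{\ell-d-1}$ among them is full bisimilarity. That is exactly what the height-indexed pruning of \Cref{def:pruning} deduplicates, whereas identifying siblings only up to a smaller depth would in general destroy $\ell$-bisimilarity.
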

\begin{proof}

\noindent
$(\Leftarrow)$ Assume that $\varphi$ is equivalent to an $\EML$-formula of depth  at most $L$. Since
$\EML \subseteq \EGML$, it follows from \Cref{th:preserve} that $\varphi$ is preserved under embeddings.

\noindent
$(\Rightarrow)$ Let $\varphi$ be preserved under embeddings, and
let $\C$ be the class of pointed models of $\varphi$.
Since $\varphi \in \ML$, the class $\C$ is invariant 
under $L$-unravellings. 
By \Cref{th:preserve}, $\C$ is definable by a finite disjunction of characteristic formulae
$\bigvee_{(\mathfrak T,t)\in \Tmin}\varphi^L_{\mathfrak T,t}$, where
$\Tmin$ consists of $\preceq$-minimal
tree-shaped models of height $\leq L$, for $\preceq$ the embedding relation.
 Let $\T'$ consist of the pruning of each tree model in $\Tmin$.
 By \Cref{lem:bisimilar-trees},  each 
$(\mathfrak T,t)\in\Tmin$ is $L$-bisimilar to  its pruning
and hence satisfies the same $\ML$-formulae of depth $\leq L$.
In particular, $\T'\subseteq \C$ and  $\T'$ is finite, since pruning retains at most one 
representative of each $\ML_k$-type for $k\le L$.
By \Cref{lem:ML-characteristic}, formula $\psi = \bigvee_{(\mathfrak{T},w) \in \T'} \varphi^L_{\mathfrak T, w}$ is an $\EML$-formula.
The same argument as in the proof of \Cref{th:preserve} shows that $\psi$ defines $\C$.
\end{proof}

%%%%%%%%%%%%%%
\section{Preservation Theorems for Graph Neural Networks}\label{sec:gnn}
%%%%%%%%%%%%%%%%%

In this section we turn our attention to graph neural networks (GNNs)---machine learning models for graph-structured data.
We focus on one of the flagship classes of GNNs, namely aggregate-combine GNNs \cite{DBLP:conf/iclr/BarceloKM0RS20}.
When such a trained GNN $\GN$ is used for binary classification of nodes in a graph, it defines (analogously to a logical formula) a class of pointed models  that it accepts.

Classes of pointed models definable by GNNs with $L$ layers are preserved under $L$-unravelling \cite{DBLP:conf/iclr/XuHLJ19,DBLP:conf/aaai/0001RFHLRG19,DBLP:conf/iclr/BarceloKM0RS20}.
There are also  striking connections between GNNs and modal logics \cite{DBLP:conf/iclr/BarceloKM0RS20,DBLP:conf/ijcai/NunnSST24,benedikt_et_al:LIPIcs.ICALP.2024.127,boundedGNNs}. 
In particular, existing results show that a first-order property is definable by an aggregate-combine GNN if and only if
it is definable in $\GML$ \cite{DBLP:conf/iclr/BarceloKM0RS20}.
Consequently, our results on unravelling-invariant classes apply directly to GNNs yielding the following preservation theorem.

\begin{corollary}
Let $\GN$ be an aggregate-combine GNN with $L$ layers and let $\C$ be the class of pointed models defined by $\GN$. Then the following 
equivalences hold:
\begin{itemize}
\item $\C$ is preserved under embeddings if and only if $\C$ is definable by an $\EGML$-formula of depth at most $L$;
\item $\C$ is preserved under injective homomorphisms if and only if $\C$ is definable by an $\EPGML$-formula of depth at most $L$;
\item $\C$ is preserved under homomorphisms if and only if $\C$ is definable by an $\EPML$-formula of depth at most $L$.
\end{itemize}
\end{corollary}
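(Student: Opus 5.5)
The plan is to reduce the corollary to the three unravelling-invariant preservation theorems already established: \Cref{th:preserve}, \Cref{th:exists-positive-unravelling}, and \Cref{th:exists-positive-unravelling-ML}. Each of these applies to an \emph{arbitrary} class $\C$ of pointed models that is invariant under $L$-unravelling. We therefore never need to pass through the logical characterisation of GNNs by $\GML$ (which only concerns first-order properties). All that must be supplied is the invariance hypothesis.

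The first step is to verify that the class $\C$ accepted by an $L$-layer aggregate-combine GNN $\GN$ is invariant under $L$-unravelling, i.e.\ $(\M,w)\in\C$ iff $\unr^L(\M,w)\in\C$. I would argue this by induction on the layer index $i\le L$. The claim is that the feature vector computed at layer $i$ for a world $u$ depends only on the isomorphism type of $\unr^i(\M,u)$, read as the multiset of children's subtrees together with the label of $u$. At layer $0$ the feature is a function of the propositional label, which the unravelling preserves. At layer $i+1$ the new feature is $\comb$ applied to the old feature of $u$ and to $\agg$ of the multiset of old features of $R$-successors of $u$. In $\unr^{i+1}(\M,w)$, a node $(w,\dots,u)$ at depth $d\le L-i-1$ has children in bijection with the $R$-successors of $u$, so the multisets coincide. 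By induction, the layer-$i$ features agree at every node whose remaining depth budget is at least $i$. Taking $i=L$ at the root gives equality of the final features at $w$ and at the root of $\unr^L(\M,w)$, hence equal classification. This is the standard fact cited in the paragraph preceding the corollary; I would include the sketch to fix the depth bookkeeping. The main care point is that the multiset correspondence requires the \emph{unbounded-fanout} unravelling (distinct paths give distinct nodes), which is exactly the paper's definition.

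With invariance established, each bullet follows directly. For embeddings, \Cref{th:preserve} with the given $L$ says that $\C$ is preserved under embeddings iff $\C$ is definable by an $\EGML$ formula of depth at most $L$. Its ($\Leftarrow$) direction uses only definability, and its ($\Rightarrow$) direction uses preservation together with the invariance just proved. Identically, \Cref{th:exists-positive-unravelling} gives the $\EPGML$ bullet for injective homomorphisms, and \Cref{th:exists-positive-unravelling-ML} gives the $\EPML$ bullet for homomorphisms. In each case the only subtlety is that the theorems state a conjunction (preserved \emph{and} invariant) on one side. Since invariance holds unconditionally for $\C$, this conjunction is equivalent to preservation alone, which is the form stated in the corollary.

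I expect no real obstacle beyond the invariance lemma. The one point to double-check there is that the GNN's classification is a function only of the root's final feature vector, so that agreement of features at the root suffices for agreement of membership in $\C$.
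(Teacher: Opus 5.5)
Your proposal is correct and follows essentially the paper's route: the paper likewise obtains the corollary by combining invariance of $L$-layer GNN classes under $L$-unravelling, which it cites from the literature rather than proving, with \Cref{th:preserve}, \Cref{th:exists-positive-unravelling}, and \Cref{th:exists-positive-unravelling-ML}. Your inductive sketch is a sound way to fill in that cited step, provided you read $\unr^{i+1}(\M,w)$ as $\unr^{L}(\M,w)$ and interpret the GNN on Kripke models by aggregating over $R$-successors, which the paper also does implicitly.
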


In the following subsections, we show that our preservation theorems can be used to give exact logical characterisations of
the expressive power of two classes of GNNs recently studied in the literature \cite{tena2025expressive,DBLP:conf/kr/CucalaG24,DBLP:conf/kr/CucalaGMK23,morris2025sound}.
In particular, we show that monotonic GNNs have exactly the expressive power of $\EPGML$ (\Cref{MONMGNN})
while monotonic GNNs with MAX aggregation have exactly the expressive power of $\EPML$ (\Cref{MAXMGNN}).

%%%
\subsection{Graph Neural Networks}

We consider finite, undirected, simple, and node-labelled graphs $\G = (V, E, \emb)$, where $V$ is a finite set of nodes, $E$ a set of undirected edges without self-loops, and $\emb:V \to \{0,1 \}^d$ assigns to each node a 
binary vector of a fixed dimension $d$ (the \emph{dimension} of $\G$). 
A \emph{pointed graph} is a pair $(\G,v)$ consisting of a graph and a distinguished node.
Pointed graphs of dimension $d$ correspond to Kripke models over a propositional signature of size $d$ in the standard way:
nodes correspond to worlds, each undirected edge $\{u,v\}$ induces accessibility in both directions, and binary vectors correspond to
valuations, where $i$th vector position at a node $v$ determines the truth value of the $i$th proposition at $v$.
Under this correspondence, the notions of embedding and homomorphism (and preservation thereof) extend seamlessly from Kripke models
to pointed  graphs.

A \emph{node classifier}
is a function  mapping
pointed graphs  to  $\true$ (accepted) or $\false$ (rejected). 
Two classifiers are \emph{equivalent} if they compute the same function. 
Two families  $\mathcal{F}$ and $\mathcal{F}'$ of classifiers have the same expressive power, written  $\mathcal{F} \equiv \mathcal{F}'$, if each classifier in $\mathcal{F}$ (respectively, $\mathcal{F}'$) has an equivalent classifier in
$\mathcal{F}'$ (respectively, $\mathcal{F}$).
We consider GNN node classifiers with
 \emph{aggregate-combine} layers \cite{benedikt_et_al:LIPIcs.ICALP.2024.127,DBLP:conf/iclr/BarceloKM0RS20}.
A layer is a pair $( \agg, \comb )$, where  $\agg$ is an \emph{aggregation function} mapping multisets of vectors into single vectors and $\comb$ is  a \emph{combination function} mapping  vectors to  vectors.
Applying a layer to $\G = ( V, E, \emb )$ yields a graph $\G' = ( V, E, \emb' )$ with the same nodes and edges, but with an updated labelling function 
$\emb'$.
For each node $v \in V$, the updated label $\emb'(v)$ is defined by 
\begin{equation}
\comb \Big( \emb(v), \agg( \lBrace  \emb(w) \rBrace_{w \in N_G(v)} )  \Big),  \label{eq:ac}
\end{equation}
where $N_\G(v) = \{w \mid \{v,w \} \in E \}$ denotes the
neighbours of $v$.

Each  aggregation and combination function has a fixed input and output dimension.
For the application of a layer to be well defined, these dimensions must be compatible:
if  $\agg$  has input dimension  $d$ and output dimension $d'$, then $\comb$ has input dimension  $d+d'$.
A \emph{GNN classifier} $\GN$ of dimension $d$ consists of $L$ aggregate-combine layers and a 
classification function $\cls$ mapping vectors to truth values.
The input dimension of  the first layer  is $d$, so that the output dimension of layer $\ell$ matches the input dimension of layer $\ell+1$. 
We write $\lambda^{(\ell)}(v)$ for the vector associated with node $v$ after applying layer $\ell$, with  $\lambda^{(0)}(v)=\emb(v)$ and $\lambda^{(L)}(v)$ the final representation.
The output of $\GN$ on a pointed graph $(\G,v)$ is then defined as
the truth value  $\GN(\G,v) = \cls\bigl(\lambda^{(L)}(v)\bigr)$.

%%%%%%%%%%
\subsection{Expressive Power of Monotonic GNNs}
%%%%%%%%%%

For vectors $\mathbf{x},\mathbf{x}' \in \mathbb{R}^d$ of the same dimension, we write
$\mathbf{x} \leq \mathbf{x}'$ if $x_i \leq x_i'$ for all $i \in [d]$.
For multisets $M$ and $M'$ of vectors of the same dimension, we write $M \leq M'$  if there exists an injection $f \colon M \to M'$ such that $\mathbf{x} \leq f(\mathbf{x})$, for every $\mathbf{x} \in M$.
For example,
\[
\lBrace [-1,2], [0,1] \rBrace \;\leq\; \lBrace [-1,2], [0,3], [-1,-1] \rBrace,
\]
as witnessed by the injection $f$ defined as $f([-1,2])=[-1,2]$ and $f([0,1])=[0,3]$.

A function $f$ mapping vectors or multisets to vectors or numbers---such as combination, aggregation, or classification---is \emph{non-decreasing} if $x \leq y$ implies $f(x) \leq f(y)$ for all admissible arguments $x,y$.

A GNN is \emph{monotonic} if all its aggregation functions, combination functions, and its
final classification function are non-decreasing. 

A \emph{positive-weight GNN} is a GNN  where each layer $\ell$ has the form:
$$
ReLU \Big(
\mathbf{b}_{\ell} +
     \emb(v) \mathbf{A}_{\ell}  +
         \agg_{\ell} \big( \lBrace  \emb(w) \mid w \in N_G(v) \rBrace \big) \mathbf{C}_{\ell}  \Big)
$$
where $\mathbf{b}_\ell\in\mathbb{R}^{d_\ell}$, $\mathbf{A}_\ell,\mathbf{C}_\ell\in\mathbb{R}_{\ge 0}^{d_{\ell-1}\times 
d_\ell}$,
$\agg_{\ell}$ is one of $\mathrm{SUM}$, $\mathrm{MAX}$, or $\mathrm{max}\text{-}k\text{-}\mathrm{SUM}$,
and the classification function $\cls$ is a threshold function of the form
$\cls(\mathbf{x})=1$ iff $x_i\ge t$ (or $x_i>t$) for all $i$, for some  threshold $t\in\mathbb{R}$.

\begin{proposition}\label{prop:mon-gnns}
Positive-weight GNNs using MAX, SUM, or max-k-sum aggregation are monotonic. In contrast, positive-weight GNNs using
MEAN aggregation are not monotonic. 
\end{proposition}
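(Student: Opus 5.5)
The plan is to verify monotonicity layer by layer, separating three ingredients: the aggregation, the affine combination with ReLU, and the threshold classifier. Monotonicity of a GNN is defined componentwise (aggregation, combination, classification each non-decreasing), so it suffices to check each component in isolation.

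\emph{Combination and classification.} The combination function has the form $\mathbf{x},\mathbf{y} \mapsto ReLU(\mathbf{b}_\ell + \mathbf{x}\mathbf{A}_\ell + \mathbf{y}\mathbf{C}_\ell)$. Since $\mathbf{A}_\ell$ and $\mathbf{C}_\ell$ have non-negative entries, increasing any coordinate of $\mathbf{x}$ or $\mathbf{y}$ can only increase each coordinate of the affine map. ReLU is non-decreasing coordinatewise, so the composition is non-decreasing. The threshold classifier $\cls(\mathbf{x})=1$ iff $x_i \ge t$ (or $>t$) for all $i$ is clearly non-decreasing with $\false<\true$.

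\emph{Aggregations.} Assume $M \le M'$, witnessed by an injection $f$ with $\mathbf{x}\le f(\mathbf{x})$. I first need the vectors entering aggregation to be non-negative. This holds for layers $\ell\ge 2$ because they are outputs of ReLU, and for the first layer because inputs are binary. For SUM, $\sum_{M}\mathbf{x} \le \sum_{M} f(\mathbf{x}) \le \sum_{M'}\mathbf{y}$, where the last step drops unmatched non-negative elements. For MAX, each coordinate satisfies $\max_M x_i \le \max_M f(\mathbf{x})_i \le \max_{M'} y_i$; this needs no sign assumption, and the empty multiset can be handled by the convention value $0$ together with non-negativity. For max-$k$-sum, coordinatewise the $k$ largest entries of $M$ are dominated entrywise by their images in $M'$, and these are $k$ distinct elements. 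So the sum of the top $k$ entries of $M'$ is at least that, again using non-negativity when $M$ has fewer than $k$ elements.

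\emph{Overall monotonicity.} To conclude monotonicity in the sense relevant later (preservation under injective homomorphisms), I would argue by induction on layers that an injective homomorphism $g$ of pointed graphs yields $\lambda^{(\ell)}(u)\le\lambda'^{(\ell)}(g(u))$. Injectivity gives an injection of neighbour multisets, so the multiset order applies and the component monotonicity above carries through each layer.

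\emph{MEAN counterexample.} This is the step needing a concrete construction, though it is easy. Take dimension $1$, one layer with $\mathbf{b}=0$, $\mathbf{A}=0$, $\mathbf{C}=1$, and classifier threshold $x_1\ge 1$. Then $\lBrace 1\rBrace \le \lBrace 1,0\rBrace$, but the means are $1$ and $1/2$, so MEAN is not non-decreasing. At the graph level, a node with one neighbour labelled $1$ is accepted. Embedding it into a graph where the node gets an extra neighbour labelled $0$ yields rejection. So this positive-weight MEAN GNN is not monotonic.

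The main obstacle is the non-negativity subtlety for SUM and max-$k$-sum, namely dropping unmatched elements. I would handle it via the ReLU and binary-input observation above.
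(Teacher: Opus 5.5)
Your proposal is correct and takes the same route as the paper. Both arguments check that the ReLU combination with non-negative $\mathbf{A}_\ell,\mathbf{C}_\ell$, the threshold classifier, and the MAX/SUM/max-$k$-sum aggregations are each non-decreasing, and both refute MEAN via $\lBrace 1\rBrace \le \lBrace 1,0\rBrace$.

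Your restriction to non-negative vectors, guaranteed by binary inputs and ReLU outputs, is a genuine refinement that the paper's proof glosses over. Without it SUM fails under the multiset order on arbitrary reals: for example $\lBrace 1\rBrace \le \lBrace 1,-1\rBrace$ but the sums are $1$ and $0$. The same issue arises with the paper's own illustrative multisets.
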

\begin{proof}
In all these GNNs, the combination function is monotonic, since the ReLU activation function
is monotonic and, for each layer $\ell$, all entries of the weight matrices $\mathbf{A}_{\ell}$ and $\mathbf{C}_{\ell}$ are non-negative.
The aggregation functions MAX, SUM, and max-k-sum are also monotonic with respect to the multiset order.
In contrast, MEAN is not monotonic. Indeed,
$\lBrace 1 \rBrace \leq \lBrace 1 ,0 \rBrace$, but 
$MEAN(\lBrace 1 \rBrace) =1 \not\leq 0.5 =
MEAN(\lBrace 1, 0 \rBrace)$.
\end{proof}

\begin{theorem}\label{th:mon-GNN-preservation}
Monotonic GNNs are preserved under injective homomorphisms, but not under arbitrary homomorphisms.
\end{theorem}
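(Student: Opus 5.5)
The statement has two parts: a positive claim proved by induction over layers, and a negative claim proved by a counterexample.

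\emph{Preservation under injective homomorphisms.} Fix a monotonic GNN $\GN$ with $L$ layers. Let $h$ be an injective homomorphism from $(\G,v)$ to $(\G',v')$ with $h(v)=v'$. I will prove by induction on $\ell\le L$ that
\[
\lambda^{(\ell)}(u)\le\lambda'^{(\ell)}(h(u)) \quad \text{for every node } u \text{ of } \G.
\]
\emph{Base case.} For $\ell=0$ this holds because homomorphisms preserve propositions. Hence every coordinate equal to $1$ at $u$ is also $1$ at $h(u)$, which gives $\emb(u)\le\emb'(h(u))$ componentwise.

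\emph{Inductive step.} Since $h$ preserves edges and is injective, it restricts to an injection from $N_\G(u)$ into $N_{\G'}(h(u))$. The induction hypothesis gives $\lambda^{(\ell)}(w)\le\lambda'^{(\ell)}(h(w))$ for each neighbour $w$. This injection therefore witnesses the multiset inequality
\[
\lBrace \lambda^{(\ell)}(w)\rBrace_{w\in N_\G(u)}\le\lBrace \lambda'^{(\ell)}(w')\rBrace_{w'\in N_{\G'}(h(u))}.
\]
Because $\agg$ is non-decreasing for this order, the aggregated vectors satisfy the same inequality. The concatenated input to $\comb$ is then pointwise dominated, and monotonicity of $\comb$ gives the claim at $\ell+1$.

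\emph{Conclusion.} Apply the claim at $u=v$ and $\ell=L$. Since $\cls$ is non-decreasing on truth values ordered $\false<\true$, $\GN(\G,v)=\true$ implies $\GN(\G',v')=\true$.

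\emph{Failure under arbitrary homomorphisms.} I will give a concrete monotonic GNN of the positive-weight form with SUM aggregation, which is monotonic by \Cref{prop:mon-gnns}. Take dimension $1$ and a single layer computing $ReLU(\sum_{w\in N(v)}\lambda(w))$, with classifier $x\ge 2$. This GNN accepts a node iff it has at least two neighbours labelled $p$.

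\emph{Counterexample.} Let $\G$ be the path $a-v-b$ with $p$ true at $a$ and $b$, pointed at $v$; then $\GN(\G,v)=\true$. Let $\G'$ be the single edge $v'-c$ with $p$ true at $c$, and map $a,b\mapsto c$ and $v\mapsto v'$. This map is a homomorphism, and edges are preserved in both directions. However, $v'$ has only one neighbour, so $\GN(\G',v')=\false$.

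The only delicate point is the multiset-order step: the injection must exist between exactly the neighbourhood multisets. This is precisely where injectivity of $h$ is used, and it is also what fails in the counterexample.
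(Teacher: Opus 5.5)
Your proof is correct and follows the paper's own argument: the same induction on layers showing $\lambda^{(\ell)}(u)\le\lambda^{(\ell)}(h(u))$, where injectivity of $h$ supplies the injection between neighbour multisets, and monotonicity of aggregation, combination and classification finishes the argument. Your counterexample is essentially the paper's single-layer SUM network. The paper keeps the self-term with $\mathbf{A}=1$ and uses threshold $3$, while you drop the self-term and use threshold $2$.
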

\begin{proof}
Let $h \colon (G,v)\to(G',v')$ be an injective homomorphism between pointed graphs, and let $\GN$ be a monotonic GNN such that $\GN(G,v)=1$.
We show that $\GN(G',v')=1$.

Since the classification function $\cls$ is non-decreasing, it suffices to show that $\lambda^{(L)}(v) \leq \lambda^{(L)}(v')$, for $L$ the number of layers of $\GN$. 
We prove by induction on $\ell \leq L$ that $\lambda^{(\ell)}(w) \leq \lambda^{(\ell)}(h(w))$, for every node $w$ in $G$. 

The base case $(\ell = 0)$ holds because $h$ preserves node labels. For the inductive step, assume the claim 
holds for $\ell < L$. Since $h$ is an injective homomorphism, it induces an injection from the multiset of 
neighbours of $w$ in $G$ to the multiset of neighbours of $h(w)$ in $G'$. By the induction hypothesis, this
yields $\lBrace \lambda^{(\ell)}(u)\mid u\in N_G(w)\rBrace \leq \lBrace \lambda^{(\ell)}(u')\mid u'\in N_{G'}(h(w))\rBrace$.
By monotonicity of the aggregation and combination functions, we conclude  that $\lambda^{(\ell+1)}(w)\le \lambda^{(\ell+1)}(h(w))$.
Hence, $\GN(G',v')=1$.

To see that monotonic GNNs are not preserved under arbitrary homomorphisms, let $G=(V,E,\emb)$ with $V=\{v,u,w\}$, $E = \{ \{v,u\}, \{v,w \} \}$, $\emb(v)=\emb(u)=\emb(w)=1$, and let  $G'=(V',E',\emb')$ with $V'=\{v',u' \}$, $E= \{ \{v',u' \} \}$, and $\emb'(v)=\emb'(u)=1$.
Define $\GN$ with a single layer, aggregation function SUM, parameters $\mathbf{b}=0$ and $\mathbf{A}=\mathbf{C}=1$, and classification function
$\cls(x)=1$ iff $x \geq 3$.
Then,  $\GN(G,v)=1$, but $\GN(G',v')=0$.
However, the function $f$  defined by $f(v)=v'$, $f(u)=u'$ and $f(w) = u'$ is a (non-injective) homomorphism from $(G,v)$ to $(G',v')$.
\end{proof}

The above subsumes the result that monotonic max-$k$-sum GNNs are closed under injective homomorphisms \cite[Proposition 7]{tena2025expressive}.

A \emph{monotonic-MAX GNN} is a monotonic GNN  using only MAX as aggregation function.
The following theorem isolates MAX aggregation as a case in which injectivity is no longer required

\begin{theorem}\label{th:mon-GNN-max-preservation}
Monotonic-MAX GNNs are preserved under arbitrary homomorphisms.
\end{theorem}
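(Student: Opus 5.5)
We mirror the argument of \Cref{th:mon-GNN-preservation}. The only place injectivity was used there is the aggregation step, and MAX is insensitive to multiplicities, so we expect that step to go through with an arbitrary homomorphism.

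Let $h\colon (G,v)\to(G',v')$ be a homomorphism of pointed graphs, and let $\GN$ be a monotonic-MAX GNN with $L$ layers such that $\GN(G,v)=1$. Because $\cls$ is non-decreasing and $h(v)=v'$, it suffices to show that $\lambda^{(L)}(v)\le\lambda^{(L)}(v')$. We prove by induction on $\ell\le L$ that $\lambda^{(\ell)}(w)\le\lambda^{(\ell)}(h(w))$ for every node $w$ of $G$.

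The base case $\ell=0$ holds since $h$ preserves labels: $\emb(w)$ is a $0/1$ vector, and every $1$-entry of $\emb(w)$ is also a $1$-entry of $\emb'(h(w))$, so $\emb(w)\le\emb'(h(w))$ componentwise.

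For the inductive step, fix $w$ and a neighbour $u\in N_G(w)$. Since $h$ preserves edges, $h(u)\in N_{G'}(h(w))$. (The graphs have no self-loops, so $h(u)\neq h(w)$ is automatic; in any case $h(u)$ is a neighbour of $h(w)$.) By the induction hypothesis, $\lambda^{(\ell)}(u)\le\lambda^{(\ell)}(h(u))$.

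Hence every vector in $M=\lBrace\lambda^{(\ell)}(u)\mid u\in N_G(w)\rBrace$ is dominated componentwise by some vector in $M'=\lBrace\lambda^{(\ell)}(u')\mid u'\in N_{G'}(h(w))\rBrace$. This map need not be injective, so we cannot conclude $M\le M'$ in the multiset order, and we cannot invoke monotonicity of $\agg$ with respect to $\le$ directly. Instead we use the componentwise definition of MAX. For each coordinate $i$,
$$\mathrm{MAX}(M)_i=\max_{u}\lambda^{(\ell)}(u)_i\le\max_{u}\lambda^{(\ell)}(h(u))_i\le\mathrm{MAX}(M')_i.$$
This is the main obstacle, and the point where the MAX assumption matters: we need the stronger property that MAX is monotone under the domination preorder (``every element of $M$ is dominated by some element of $M'$''), not merely under injective domination.

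A degenerate case remains: if $w$ has no neighbours, $\mathrm{MAX}(\emptyset)$ must be fixed by convention. We take it to be the least value, for example $0$ after ReLU, so that it is dominated by any aggregate.

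Combining this with the induction hypothesis $\lambda^{(\ell)}(w)\le\lambda^{(\ell)}(h(w))$ and monotonicity of $\comb$ gives $\lambda^{(\ell+1)}(w)\le\lambda^{(\ell+1)}(h(w))$, completing the induction. Applying $\cls$ then yields $\GN(G',v')=1$.

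If ``monotonic-MAX'' is meant to allow arbitrary non-decreasing aggregations built from MAX, we would instead state the required property as an explicit lemma: MAX is monotone with respect to the domination preorder on multisets.
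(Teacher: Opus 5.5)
Your proposal is correct and follows essentially the same route as the paper. Both argue by induction on layers that $\lambda^{(\ell)}(w)\le\lambda^{(\ell)}(h(w))$, use that $h$ maps neighbours of $w$ to neighbours of $h(w)$ so that the MAX over $N_G(w)$ is dominated by the MAX over $N_{G'}(h(w))$, and finish with monotonicity of $\comb$ and $\cls$. Your componentwise computation, the $\mathrm{MAX}(\emptyset)$ convention, and the remark that MAX is monotone under the non-injective domination preorder simply make explicit points the paper leaves implicit.
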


\begin{proof}
The proof follows the same general structure as that of \Cref{th:mon-GNN-preservation}.
Let $h \colon (G,v)\to(G',v')$ be a homomorphism between pointed graphs, and let $\GN$ be a monotonic-MAX GNN with $\GN(G,v)=1$.
We show that $\GN(G',v')=1$.
As before, since the classification function $\cls$ is non-decreasing, it suffices to show that
$\lambda^{(L)}(v) \leq \lambda^{(L)}(v')$, where $L$ is the number of layers in $\GN$. 

To this end, we prove by induction on $\ell \leq L$ that $\lambda^{(\ell)}(w) \leq \lambda^{(\ell)}(h(w))$, for each node $w$ in $G$. 
The base case $(\ell = 0)$ holds because $h$ preserves node labels. For the inductive step, assume  the claim 
holds for $\ell < L$. Since $h$ is a homomorphism, each neighbour $u$ of $w$ in $G$ is mapped to a neighbour $h(u)$ of $h(w)$ in $G'$. By the induction hypothesis, $\lambda^{(\ell)}(u) \leq \lambda^{(\ell)}(h(u))$ and hence 
 $MAX(\lBrace \lambda^{(\ell)}(u)\mid u\in N_G(w)\rBrace) \leq MAX(\lBrace \lambda^{(\ell)}(u')\mid u'\in N_{G'}(h(w))\rBrace)$. By  monotonicity of the combination function, we conclude that  $\lambda^{(\ell+1)}(w)\le \lambda^{(\ell+1)}(h(w))$, completing the induction.
Hence $\GN(G',v')=1$.
\end{proof}

\begin{theorem}\label{MONMGNN}
Monotonic GNNs have exactly the same expressive power as $\EPGML$.
\end{theorem}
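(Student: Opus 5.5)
Two inclusions have to be shown: every monotonic GNN is equivalent to an $\EPGML$-formula, and every $\EPGML$-formula is equivalent to a monotonic GNN.

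\emph{From monotonic GNNs to $\EPGML$.} Let $\GN$ be a monotonic GNN with $L$ layers and let $\C$ be the class of pointed graphs it accepts. Since aggregate-combine GNNs are invariant under $L$-unravelling, $\C$ is invariant under $L$-unravelling. By \Cref{th:mon-GNN-preservation}, $\C$ is preserved under injective homomorphisms. \Cref{th:exists-positive-unravelling} then gives an $\EPGML$-formula of depth at most $L$ defining $\C$.

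The one subtlety is that GNNs run on undirected simple graphs, while \Cref{th:exists-positive-unravelling} is stated for arbitrary Kripke models. The unravelling of a symmetric model is a tree, not a symmetric model. I would apply the minimal-tree argument directly in the graph setting. Take $\T$ to be the set of $L$-unravellings of graphs in $\C$, viewed as directed trees. Each such tree is identified with the graph obtained by symmetrising it, and the GNN accepts that graph exactly when it accepts the original, because of unravelling invariance. \Cref{cor:finiteness} gives finitely many minimal trees. \Cref{lem:preserve-injective} shows that satisfying their characteristic formulae is equivalent to receiving an injective homomorphism into the unravelling. Preservation under injective homomorphisms, applied to the symmetrised trees, closes the argument.

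\emph{From $\EPGML$ to monotonic GNNs.} I would proceed by induction on the formula and build a positive-weight GNN with SUM aggregation; it is monotonic by \Cref{prop:mon-gnns}. As usual, fix an enumeration of the subformulae $\varphi_1,\dots,\varphi_m$ and have layer $\ell$ compute a $0/1$ vector whose $i$th entry is the truth value of $\varphi_i$ once $\ell$ exceeds its depth, in the style of Barcel\'o et al. The cases are:
\begin{itemize}
\item a proposition is copied from the input through a weight-$1$ entry of $\mathbf{A}$;
\item $\varphi\wedge\psi$ is computed as $\mathrm{ReLU}(x_\varphi+x_\psi-1)$;
\item $\varphi\vee\psi$ is computed as $\min(1,x_\varphi+x_\psi)$, realised as $\mathrm{ReLU}(x_\varphi+x_\psi)-\mathrm{ReLU}(x_\varphi+x_\psi-1)$;
\item $\Diamond^{\geq k}\varphi$ is computed as $\min(1,\mathrm{ReLU}(s-k+1))$, where $s$ is the SUM of the neighbours' $\varphi$-entries.
\end{itemize}
The classifier is a threshold $\geq 1$ on the output coordinate of the whole formula.

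The obstacle is the truncation $\min(1,\cdot)$: its second ReLU term enters with a negative weight, which is not allowed in positive-weight form. I would avoid this by not requiring the general definition of monotonic GNN to be positive-weight. It only requires non-decreasing $\comb$, $\agg$ and $\cls$, and $x\mapsto\min(1,x)$ composed with ReLU of a non-negative combination is non-decreasing. So I take each $\comb$ to be the coordinatewise non-decreasing map computing the clipped values above, and $\agg=\mathrm{SUM}$, which is monotone in the multiset order. This keeps all entries in $\{0,1\}$ on every input, and correctness follows by induction on depth. Together with the first direction this establishes equivalence in both directions.
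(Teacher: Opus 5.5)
Your two directions follow the paper's proof. Your converse is fine. The Barcel\'o et al.\ construction uses a clipped ReLU, so the resulting network is monotonic in the general sense rather than of positive-weight form, which is exactly how you argue it.

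The gap is in your repair of the graph/Kripke-model mismatch in the forward direction. You are right that \Cref{th:exists-positive-unravelling} cannot be applied verbatim to a class of undirected graphs; the paper passes over this with ``viewing graphs as Kripke models''. But your claim that $\GN$ accepts the symmetrisation of $\unr^L(G,v)$ iff it accepts $(G,v)$ is false. In the symmetrised tree every non-root inner node gets its parent as an \emph{extra} neighbour. This is on top of the child that already represents stepping back along the same edge.

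For a counterexample, let $G$ be a single edge $\{v,u\}$ with both labels $1$, and take $L=2$. Let $\GN$ have two layers with $\mathbf{b}=0$, $\mathbf{A}=0$, $\mathbf{C}=1$, SUM aggregation, and $\cls(x)=1$ iff $x\geq 2$. On $G$ the root ends with value $1$. The symmetrised $\unr^2(G,v)$ is the path with edges $\{r,a\},\{a,b\}$. There $\lambda^{(1)}(a)=2$, hence $\lambda^{(2)}(r)=2$ and the tree is accepted.

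For monotonic $\GN$ the symmetrised tree always dominates $(G,v)$. So acceptance transfers from the graph to the tree but not back. Yet the step back, from the symmetrised $\unr^L(G,v)$ to $(G,v)$, is exactly what your argument uses at the end.

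The fix is to extend $\GN$ rather than symmetrise the trees: run the same layers on arbitrary pointed Kripke models, aggregating over $R$-successors instead of neighbours.
\begin{itemize}
\item On symmetric irreflexive models this coincides with $\GN$.
\item The root's representation after $L$ layers depends only on paths of length at most $L$. So the extended class satisfies $(\M,w)\in\C$ iff $\unr^L(\M,w)\in\C$ in the paper's literal sense.
\item The induction in the proof of \Cref{th:mon-GNN-preservation} goes through verbatim for injective homomorphisms between Kripke models. Such a map sends the successors of each world injectively into the successors of its image.
\end{itemize}
\Cref{th:exists-positive-unravelling} then gives an $\EPGML$-formula agreeing with the extended classifier on all pointed models, hence with $\GN$ on graphs. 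Equivalently, you can keep your minimal-tree argument with the directed trees, and evaluate the successor-aggregating network on them. This works because its value at the root of $\unr^L(G,v)$ equals the value $\GN$ computes at $v$ in $G$.
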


\begin{proof}
We first show that each monotonic GNN $\GN$ admits an equivalent $\EPGML$ formula.
Let $\C$ be the class of pointed graphs accepted by $\GN$.
By Theorem~\ref{th:mon-GNN-preservation}, $\C$ is preserved under injective homomorphisms.
Moreover, $\C$ is invariant under $L$-unravellings, where $L$ is the number of layers of $\GN$, since after $L$ layers the representation of a node depends only on its radius-$L$ neighbourhood \cite{DBLP:conf/iclr/BarceloKM0RS20}.
Viewing graphs as Kripke models, and using that $\C$ is preserved under injective homomorphisms and invariant under $L$-unravellings, \Cref{th:exists-positive-unravelling} implies that $\C$ is definable by an $\EPGML$-formula.

Conversely, let $\varphi$ be an $\EPGML$-formula.
By adapting the construction in Proposition~4.1 of \cite{DBLP:conf/iclr/BarceloKM0RS20} to the existential-positive 
fragment, we obtain a GNN that evaluates $\varphi$.
Since $\varphi$ is positive, the construction in   \cite{DBLP:conf/iclr/BarceloKM0RS20} uses only non-negative weights, SUM aggregation, and ReLU activations.
Hence the resulting GNN is monotonic.
\end{proof}

\begin{theorem}\label{MAXMGNN}
Monotonic-MAX GNNs  have exactly the same expressive power as $\EPML$.
\end{theorem}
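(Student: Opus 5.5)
The proof has two directions, mirroring the proof of \Cref{MONMGNN}, with \Cref{th:exists-positive-unravelling-ML} in place of \Cref{th:exists-positive-unravelling}.

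\emph{From GNNs to formulae.} Let $\GN$ be a monotonic-MAX GNN with $L$ layers, and let $\C$ be the class of pointed graphs it accepts. By \Cref{th:mon-GNN-max-preservation}, $\C$ is preserved under arbitrary homomorphisms. Moreover, $\C$ is invariant under $L$-unravellings, because after $L$ layers the representation of a node depends only on its radius-$L$ neighbourhood.

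We then view graphs as Kripke models via the standard correspondence and apply \Cref{th:exists-positive-unravelling-ML}. This yields an $\EPML$-formula of depth at most $L$ defining $\C$.

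One subtlety must be checked. Unravellings of undirected graphs are directed trees, not symmetric graphs, so the theorem should be applied to the class of all pointed Kripke models whose $L$-unravelling agrees with a graph in $\C$. Equivalently, we extend $\GN$ to arbitrary Kripke models by aggregating over $R$-successors. This extension is still homomorphism-preserving (the inductive argument of \Cref{th:mon-GNN-max-preservation} only uses successors) and still $L$-unravelling invariant. Its restriction to symmetric models coincides with $\C$.

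\emph{From formulae to GNNs.} Let $\varphi \in \EPML$ and let $\varphi_1,\dots,\varphi_n$ enumerate its subformulae. We build a GNN with $n$ Boolean feature coordinates, one per subformula, in the style of Proposition~4.1 of \cite{DBLP:conf/iclr/BarceloKM0RS20}, with one layer per modal depth (or $n$ layers, copying earlier values forward with identity weights). The coordinate updates are:
\begin{itemize}
\item propositions are copied, with weight $1$ on the own-state;
\item $\varphi_i = \varphi_j \land \varphi_k$ is computed as $\mathrm{ReLU}(x_j + x_k - 1)$;
\item $\varphi_i = \varphi_j \lor \varphi_k$ is computed as $\min(1, x_j+x_k)$, realised as $\mathrm{ReLU}(x_j+x_k) - \mathrm{ReLU}(x_j+x_k-1)$;
\item $\varphi_i = \Diamond\varphi_j$ is computed as the $j$-th coordinate of the MAX aggregate over neighbours, which on $\{0,1\}$-values is exactly the existential check.
\end{itemize}
The classifier thresholds the coordinate of $\varphi$ at $1$.

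\emph{Main obstacle.} The delicate point is that the disjunction gadget uses a negative weight, and the conjunction uses a negative bias (biases are unrestricted in positive-weight GNNs, so the latter is fine). The hard step is therefore ensuring monotonicity of the combination function. I would avoid the negative weight by relaxing the feature semantics: a coordinate need only be $\geq 1$ when the subformula holds and $0$ otherwise, with values kept nonnegative.

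Under this relaxation, disjunction becomes $x_j + x_k$. Conjunction needs $\mathrm{ReLU}(x_j+x_k-1)$ to be $\ge 1$ when both inputs are $\ge 1$. This can fail if values shrink, so I would instead clip every coordinate to $\{0,1\}$ after each step using the monotone map $\mathrm{ReLU}(x) - \mathrm{ReLU}(x-1)$.

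If negative weights must be avoided entirely, note that the paper's definition of monotonic GNN only requires non-decreasing combination functions, not positive weights. Clipping $x \mapsto \min(1,\max(0,x))$ is non-decreasing, and so are $\mathrm{ReLU}(x_j+x_k-1)$, $\min(1,x_j+x_k)$ and MAX. Hence the resulting GNN is monotonic-MAX, and correctness follows by induction on subformulae.
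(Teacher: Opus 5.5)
Your proposal is correct. The first direction is the paper's argument: \Cref{th:mon-GNN-max-preservation} followed by \Cref{th:exists-positive-unravelling-ML}. Your extra step improves on it. The paper applies \Cref{th:exists-positive-unravelling-ML} directly to the class of accepted graphs, even though the $L$-unravelling of a symmetric model is a directed tree and not a graph. Running $\GN$ over $R$-successors on arbitrary Kripke models is the right fix.

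Drop the word ``equivalently'', though. The class of Kripke models whose $L$-unravelling agrees with that of some accepted graph need not be preserved under homomorphisms. For $L\geq 2$, an isolated point maps homomorphically into a model consisting of a single non-symmetric edge $w\to u$, and the unravelling of that model is not the unravelling of any graph. So only the extension formulation works.

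For the converse you take a genuinely different route. The paper cites Theorem~6 of \cite{boundedGNNs} to obtain a \emph{positive-weight} MAX GNN for each $\EPML$-formula, and then appeals to \Cref{prop:mon-gnns}. You instead build the network directly in the Barcel\'o style, clipping every feature to $\{0,1\}$. You then use the fact that the paper's notion of monotonic GNN only requires non-decreasing combination functions. The clip, $\mathrm{ReLU}(x_j+x_k-1)$, $\min(1,x_j+x_k)$ and coordinatewise MAX are all non-decreasing, so correctness follows by induction on subformulae.

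Your stated reason why the unclipped conjunction gadget fails is slightly off. The danger is false positives from large values, e.g.\ $x_j=0$, $x_k=2$, not values shrinking. Clipping removes the problem either way.

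Your route is self-contained, elementary, and proves exactly the stated theorem. The paper's citation gives more: already positive-weight MAX GNNs capture $\EPML$. Your clip is realised with a negative weight, so your network is not of positive-weight form.
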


\begin{proof}
Let $\GN$ be a monotonic GNN in which each aggregation function is MAX, and let $\C$ be the class of pointed graphs
accepted by $\GN$. By Theorem \ref{th:mon-GNN-max-preservation}, the class $\C$ is preserved under arbitrary homomorphisms.
Moreover, as for any monotonic GNN, $\C$ is invariant under $L$-unravellings, where $L$ is the number of layers of $\GN$. Hence, by Theorem \ref{th:exists-positive-unravelling-ML},
the class $\C$ is definable by an $\EPML$-formula.
Conversely, let $\varphi$ be an $\EPML$-formula. Theorem 6 in \cite{boundedGNNs}, shows that 
$\varphi$ is equivalent to a positive-weight GNN using only MAX as aggregation. By \Cref{prop:mon-gnns},  such GNN is monotonic.
\end{proof}

\section{Conclusions}

In this paper, we developed a uniform approach to preservation theorems for classes of models that are invariant under unravelling, 
covering both (graded) modal logics and  graph neural networks.
Our results establish a correspondence between semantic preservation properties and logical definability:
preservation under embeddings coincides with definability in existential graded modal logic,
preservation under injective homomorphisms with definability in existential-positive graded modal logic, and
preservation under homomorphisms with definability in existential-positive modal logic.
Technically, our approach relies on a structural well-quasi-ordering result for tree-shaped models of uniform bounded height.
This framework allows us to obtain new preservation theorems for graded modal logic, to recover and subsume classical results for basic modal 
logic over finite structures, and to provide alternative proofs avoiding compactness and Ehrenfeucht-Fraïssé games.
Furthermore, our results on GNNs place recent expressive power analyses of GNNs in a unified logical framework using elegant techniques
from finite model theory.

Our work opens several directions for future research.
On the logical side, it is natural to investigate whether similar preservation theorems can be established for extensions of modal logic, such as
logics with nominals, global modalities, or fixed-point operators, as well as for restricted fragments, including Horn or core variants.
On the algorithmic and machine-learning side, an important open question is whether our proof strategy—based on unravelling invariance, well-quasi-orders, and minimal models—can be adapted to analyse the expressive power of other classes of graph neural networks, such as architectures
with attention, normalization, or higher-order message passing.
We believe that further interaction between finite model theory and the theory of GNNs will continue to yield fruitful insights in both areas.

%\begin{acks}
%Anonymous
%\end{acks}

\bibliographystyle{ACM-Reference-Format}
\bibliography{biblio}

\end{document}